
\documentclass{article}

\usepackage{microtype}
\usepackage{graphicx}
\usepackage{subcaption}
\usepackage{booktabs} 

\usepackage{hyperref}




\usepackage[accepted]{icml2026}

\usepackage{amsmath}
\usepackage{amssymb}
\usepackage{mathtools}
\usepackage{amsthm}
\usepackage{enumerate}
\usepackage{amsfonts}
\usepackage[T1]{fontenc}

\DeclareMathOperator*{\argmax}{arg\,max}
\DeclareMathOperator*{\argmin}{arg\,min}
\newcommand{\mms}{\textsf{MMS}}
\newcommand{\eps}{\varepsilon}
\newcommand*\circled[1]{\tikz[baseline=(char.base)]{
            \node[shape=circle,draw,inner sep=2pt] (char) {#1};}}
\usepackage[capitalize,noabbrev]{cleveref}

\theoremstyle{plain}
\newtheorem{theorem}{Theorem}[section]
\newtheorem{proposition}[theorem]{Proposition}
\newtheorem{lemma}[theorem]{Lemma}
\newtheorem{corollary}[theorem]{Corollary}
\theoremstyle{definition}
\newtheorem{definition}[theorem]{Definition}

\theoremstyle{remark}

\usepackage[textsize=tiny]{todonotes}

\icmltitlerunning{Approximate Proportionality in Online Fair Division}

\begin{document}
\twocolumn[
  \icmltitle{Approximate Proportionality in Online Fair Division}

\icmlsetsymbol{equal}{*}

  \begin{icmlauthorlist}
    \icmlauthor{Davin Choo}{harvard,equal}
    \icmlauthor{Winston Fu}{prin,equal}
    \icmlauthor{Derek Khu}{i2r,equal}
    \icmlauthor{Tzeh Yuan Neoh}{harvard,equal}
    \icmlauthor{Tze-Yang Poon}{cfar,ihpc,equal}
    \icmlauthor{Nicholas Teh}{oxford,equal}
  \end{icmlauthorlist}

  \icmlaffiliation{harvard}{Harvard University, USA}
  \icmlaffiliation{oxford}{University of Oxford, UK}
  \icmlaffiliation{cfar}{Centre for Frontier AI Research (CFAR), Agency for Science, Technology and Research (A*STAR), Singapore}
  \icmlaffiliation{ihpc}{Institute of High Performance Computing (IHPC), Agency for Science, Technology and Research (A*STAR), Singapore}
  \icmlaffiliation{prin}{Princeton University, New Jersey, USA}
  \icmlaffiliation{i2r}{Institute for Infocomm Research (I2R), Agency for Science, Technology and Research (A*STAR), Singapore}
  \icmlcorrespondingauthor{Tzeh Yuan Neoh}{tzehyuan\_neoh@g.harvard.edu}
  \icmlcorrespondingauthor{Nicholas Teh}{nicholas.teh@cs.ox.ac.uk}

  \icmlkeywords{Machine Learning, ICML}

  \vskip 0.3in
]



\printAffiliationsAndNotice{\icmlEqualContribution}

\begin{abstract}
  We study the online fair division problem, where indivisible goods arrive sequentially and must be allocated immediately and irrevocably to agents. Prior work has established strong impossibility results for approximating classic fairness notions, such as envy-freeness and maximin share fairness, in this setting. In contrast, we focus on proportionality up to one good (PROP1), a natural relaxation of proportionality whose approximability remains unresolved. We begin by showing that three natural greedy algorithms fail to guarantee any positive approximation to PROP1 in general, against an adaptive adversary. This is surprising because greedy algorithms are commonly used in fair division and a natural greedy algorithm is known to be able to achieve PROP1 under additional information assumptions. This hardness result motivates the study of non-adaptive adversaries and the use of side-information, in the spirit of learning-augmented algorithms. For non-adaptive adversaries, we show that the simple uniformly random allocation can achieve a meaningful PROP1 approximation with high probability. Meanwhile, we present an algorithm that obtain robust approximation ratios against PROP1 when given predictions of the maximum item value (MIV). Interestingly, we also show that stronger fairness notions such as EF1, MMS, and PROPX remain inapproximable even with perfect MIV predictions.
\end{abstract}

\section{Introduction}
Modern machine learning systems routinely make sequential allocation decisions: for instance, routing ad impressions to advertisers, dispatching jobs to workers in cloud platforms, allocating service requests in online marketplaces, and matching drivers to riders. These decisions are inherently \emph{online} (items arrive over time), \emph{irrevocable} (decisions must be committed immediately), and increasingly \emph{prediction-driven} (platforms rely on forecasts of demand, value, or future arrivals). 
A central challenge is to provide \emph{fairness guarantees} in this setting: the system must commit to allocations without knowing the future, while any predictive signal may be noisy or biased. 
This tension between irrevocable online decisions, imperfect predictions, and fairness is now a recurring theme across work on online optimization and learning-augmented algorithms \citep{lykouris2021competitive,mitzenmacher2022algorithms}.
Motivated by this perspective, we ask which \emph{individual level} fairness benchmarks remain meaningfully approximable under irrevocability and weak predictive side-information.

We study this problem through the \emph{online fair division} model: indivisible goods arrive sequentially and must be allocated immediately and irrevocably to an agent based on their reported values for the current good. 
The model has gained visibility in the ML literature through several complementary lenses: (i) prediction-augmented online allocation and scheduling \citep{purohit2018improving,lattanzi2020online,zhou2023icml_mms_chores,balkanski2023SPschedulingpredictions,spaeh2023onlineadspredictions}, (ii) learning-augmented mechanism design and allocation constraints \citep{agrawal2022learning,gkatzelis2022improved,cohen2024truthfulpreductions}, and (iii) learning-based approaches to fair division from partial/bandit feedback \citep{yamada2024learningFDbandit}. 
A key conceptual point in this line of work is that the relevant performance benchmarks are not purely utilitarian: platform designers often seek \emph{individual level} fairness guarantees that remain meaningful even when the instance is adversarial.

In offline fair division, widely used benchmarks include envy-freeness up to one good (EF1) and maximin share fairness (MMS) \citep{lipton2004ece,budish2011combinatorial,amanatidis2017approximation,kurokawa2018fair}. 
In adversarial online settings, however, recent work has established stark impossibilities: for three or more agents, no finite approximation is possible for MMS \citep{zhou2023icml_mms_chores}, and similarly no finite approximation is possible for EF1 under comparable informational assumptions \citep{neoh2025online}. 
These results suggest that, under irrevocability and limited information, many ``gold standard'' offline notions may be fundamentally unattainable online.

In this work, we ask whether \emph{proportionality}-type guarantees can serve as a meaningful and achievable alternative in the online model. 
Proportionality requires that each agent obtain at least her proportional fair share of the total value, but with indivisible goods this benchmark can fail even in offline settings. 
A standard relaxation is \emph{proportionality up to one good} (PROP1), which requires that each agent can reach their proportional share after adding one good outside their own bundle.
PROP1 has a long history in offline fair division and related allocation problems \citep{conitzer2017fairpublic,aziz2019goodsandchores,aziz2020polytimeprop,barman2019markets,branzei2024competitivechores}. Crucially, PROP1 is implied by EF1 and MMS, so it remains a principled fairness target even when stronger notions are unattainable.

Despite this appeal, the approximability landscape for PROP1 in the online setting has remained unclear. 
Prior work shows that exact PROP1 may fail to exist online \citep{benade2018envyvanish,neoh2025online}. 
Moreover, existing positive results for online PROP1 rely on comparatively strong side-information (e.g., normalization/total-value information) \citep{neoh2025online}. 
This leaves open a basic question that is central from an ML perspective: under realistic online constraints and weak predictive signals, can we obtain any nontrivial approximation to PROP1?

Motivated by learning-augmented online optimization, we study algorithms that can leverage weak predictions while still targeting provable guarantees. Concretely, we focus on \emph{maximum item value} (MIV) predictions: for each agent, we are given a prediction of the maximum value this agent has for any good.
This is substantially weaker than knowing normalization information or the future sequence, yet captures the kind of coarse ``scale'' information that is routinely produced by forecasting pipelines. 
We also analyze one-sided (conservative) prediction error, which aligns with the widespread use of upper confidence bounds and safety margins in sequential decision-making \citep{auer2002finite}.

\subsection{Our Contributions and Paper Outline}

We initiate a systematic study of approximate PROP1 in online fair division under two settings that are standard in online algorithms and ML: \emph{adaptive adversary} (worst-case feedback/adversarial dependence) and \emph{non-adaptive adversary} (oblivious sequences), and we quantify what is and is not achievable under weak predictions.

\textbf{1. Greedy allocation against adaptive adversaries (\cref{sec:greedy}).}
We show that three natural greedy allocation strategies (representing common heuristics) cannot guarantee any positive approximation to PROP1 against an adaptive adversary.
This is notable (and surprising) because greedy rules are pervasive in both theoretical fair division and deployed allocation systems, and because a natural greedy approach can achieve PROP1 under stronger informational assumptions \citep{neoh2025online}.
Our results reveal fundamental limits of greedy methods in adversarial online environments: without either randomness against oblivious inputs or predictive information against adaptive adversaries, even the weakest standard proportionality benchmark can be driven to zero.

\textbf{2. Random allocation against non-adaptive adversaries (\cref{sec:random}).}
Against a non-adaptive adversary (which commits to the entire input sequence in advance, independent of the algorithm's realized randomness),\footnote{Such a weakened adversary is commonly studied in online algorithms and enables meaningful probabilistic guarantees.} we prove that the simple algorithm that allocates each good uniformly at random to $n$ agents achieves an $\alpha$-PROP1 guarantee with probability at least $1-\delta$, where $\alpha=\Theta(1/\log(n/\delta))$ (Theorem~\ref{thm:random}). 
This is the first known result for the random allocation against non-adaptive adversaries in this setting, requiring neither asymptotic assumptions nor structural constraints.
We complement this with a matching lower bound showing this logarithmic dependence is essentially tight for uniform random allocation (Proposition~\ref{prop:rand_tight}). 
Finally, in the common ``small items'' setting (i.e., when each agent's maximum item value is small relative to their proportional share), the uniformly random allocation achieves near-proportionality: a $(1-\varepsilon)$-PROP1 guarantee with probability at least $1-\delta$ (Theorem~\ref{thm:rand_smallgoods}). 
These results provide a useful positive insight: randomness alone, under oblivious inputs, gives us nontrivial and quantifiable fairness guarantees.

\textbf{3. Algorithm with MIV predictions against adaptive adversaries (\cref{sec:MIV}).}
We design an online algorithm that, given MIV predictions, guarantees a $1/n$-PROP1 allocation even against adaptive adversaries (Theorem~\ref{thm:miv}). 
We then show a general \emph{graceful degradation} guarantee under  one-sided prediction error: any $\alpha$-PROP1 algorithm for perfect predictions can be transformed to achieve $\alpha(1-\varepsilon)/(1-\alpha\varepsilon/n)$-PROP1 under error $\varepsilon$ (Theorem~\ref{thm:MIV_error}), giving us an explicit robustness bound for our algorithm (Corollary~\ref{cor:MIV}).
This positions PROP1 as a fairness objective that is compatible with the learning-augmented paradigm: weak predictions materially improve what is achievable in the worst case.

To complete the picture, we prove that three natural and widely studied stronger fairness notions (EF1, MMS, and the stronger proportionality notion PROPX) admit no positive approximation in the online model even with perfect MIV predictions (Proposition~\ref{prop:impossibility_ef1_mms}).
This strengthens and broadens known impossibilities and underscores that PROP1 is not merely a convenient relaxation: it is a uniquely viable fairness benchmark under weak predictive information.

\section{Preliminaries and Related Work} \label{sec:prelims}

We use the notation $\mathbb{N}_+ = \{ 1, 2, 3, \ldots \}$ to denote the set of positive integers (excluding zero), and $\mathbb{R}_{\geq 0}$ for the set of non-negative real numbers.
For any positive integer $z$, we write $[z] := \{ 1, \dots, z \}$.
For any set $A$, we denote its powerset, the set of all subsets of $A$, by $2^A$.

Our setting involves $n \geq 2$ \emph{agents} and a sequence of $m \geq 1$ indivisible \emph{goods} $G = \{ g_1, \dots, g_m \}$ that arrive online, one at a time.
Each good must be allocated immediately and irrevocably to one of the agents upon arrival. 
Crucially, the total number of goods $m$ is not known in advance and is implicitly revealed only retrospectively once all goods have arrived.
We label the goods in their arrival order, and for any time step $t \in [m]$, we define $G^{(t)} := \{ g_1, \dots, g_t \}$ to be the set of the first $t$ goods observed.
Thus, $G^{(m)} = G$ denotes the full set once all goods have arrived.
Each agent $i \in [n]$ has a non-negative \emph{valuation function} $v_i : 2^G \to \mathbb{R}_{\geq 0}$, which assigns a value to any bundle of goods.
Following standard assumptions in fair division, we assume that valuations are \emph{additive}: for any subset $S \subseteq G$, we have $v_i(S) = \sum_{g \in S} v_i(\{g\})$.
For convenience, we write $v_i(g)$ as shorthand for $v_i(\{g\})$.

An \emph{allocation} is a tuple $\mathcal{A} = (A_1, \dots, A_n)$, where each $A_i \subseteq G$ is the bundle allocated to agent $i$.
The allocation must form a partition of the goods: the $A_i$ are pairwise disjoint and their union equals $G$.

\subsection{Fairness Notions}

For any time step $t \in [m]$ and agent $i \in [n]$, let $A_i^{(t)}$ denote the bundle held by agent $i$ after the allocation of good $g_t$, with $A_i^{(0)} = \varnothing$ as the initial allocation and $A_i^{(m)} = A_i$ as the final bundle.
We are interested in online algorithms that produce allocations satisfying approximate fairness guarantees.
In particular, we focus on \emph{proportionality up to one good} (PROP1) and its multiplicative relaxations.

A classical \emph{proportional} allocation ensures that each agent receives utility at least $v_i(G)/n$, but such guarantees are often unachievable even in the offline setting with two agents and one good.
We thus focus on the more permissive relaxation of PROP1 in this work, defined as follows.

\begin{definition}[PROP1]
An allocation $\mathcal{A}$ satisfies \emph{proportionality up to one good} (PROP1) if for each agent $i \in [n]$, either $A_i = G$, or there exists a good $g \in G \setminus A_i$ such that $v_i(A_i \cup \{g\}) \ge \frac{v_i(G)}{n}$.
\end{definition}

Unfortunately, it is known that even PROP1 is not always achievable in online settings \citep{benade2018envyvanish,neoh2025online}.
This motivates the study of multiplicative approximations to PROP1 as a more attainable fairness benchmark.
Following prior work, we define the notion of $\alpha$-PROP1 for $\alpha \in [0,1]$: an allocation is $\alpha$-PROP1 if, for each agent $i \in [n]$, either $A_i = G$, or there exists $g \in G \setminus A_i$ such that $v_i(A_i \cup \{g\}) \ge \alpha \cdot \frac{v_i(G)}{n}$.
Note that $1$-PROP1 is equivalent to PROP1.

We also briefly consider three well-studied, stronger fairness notions: EF1, MMS, and PROPX; note that these are not the focus of this paper, but we define them in order to (later) establish impossibility results and delineate the boundary of what is achievable.
These notions are pairwise incomparable but each implies PROP1.

\begin{definition}[EF1]
An allocation $\mathcal{A} = (A_1, \dots, A_n)$ is \emph{envy-free up to one good} (EF1) if for every pair of agents $i, j \in [n]$ with $A_j \neq \varnothing$, there exists a good $g \in A_j$ such that $v_i(A_i) \ge v_i(A_j \setminus \{g\})$.
\end{definition}

\begin{definition}[MMS]
Let $\Pi(G)$ denote the set of all $n$-partitions of $G$.
The \emph{maximin share} of agent $i \in [n]$ is defined as $\mms_i := \max_{\mathcal{X} \in \Pi(G)} \min_{j \in [n]} v_i(X_j)$.
An allocation $\mathcal{A} = (A_1, \dots, A_n)$ is then said to be \emph{maximin share} (MMS) fair if $v_i(A_i) \geq \mms_i$ for all $i \in [n]$.
\end{definition}

\begin{definition}[PROPX] \label{def:propx}
An allocation $\mathcal{A}$ satisfies \emph{proportionality up to any item} (PROPX) if for each agent $i \in [n]$, we either have $A_i = G$, or $v_i(A_i \cup \{g\}) \geq \frac{v_i(G)}{n}$ for all goods $g \in G \setminus A_i$.
\end{definition}

While these stronger fairness notions are attractive, they are known to be inapproximable in online settings under minimal assumptions \cite{neoh2025online,zhou2023icml_mms_chores}.\footnote{PROPX was not explicitly considered in prior work, but the EFX impossibility results from \citet{neoh2025online} extend naturally to PROPX as well.}
We denote the approximate versions of these fairness notions as $\alpha$-EF1, $\alpha$-MMS, and $\alpha$-PROPX in a similar fashion as $\alpha$-PROP1.

\subsection{Maximum Item Value (MIV) predictions}

Later in \cref{sec:MIV}, we study learning-augmented algorithms that have access to useful predictive information about the instance.
In particular, we focus on \emph{maximum item value} (MIV) predictions. 
Given the set of all goods $G$, let $v_i^{\max} = \max_{g \in G} v_i(g)$ denote the maximum value agent $i \in [n]$ assigns to any single good.
We assume access to predictions $p_i \approx v_i^{\max}$, and let $\mathbf{p} = (p_1, \dots, p_n)$ denote the vector of MIV predictions for all agents.

This form of predictive input is considerably weaker than full knowledge of valuations or normalized value vectors, yet it can still meaningfully guide allocation decisions.
We refer to the case where $p_i = v_i^{\max}$ for all $i \in [n]$ as having \emph{perfect predictions}.
This model is in line with prior work that assumes oracle access to predicted information in online fair division \citep{neoh2025online,zhou2023icml_mms_chores}.
We also consider settings with one-sided bounded prediction error.
Specifically, we assume that the predictions overestimate the true maximum value by at most a factor of $\frac{1}{1 - \eps}$ for some $\eps \in [0,1)$, and never underestimate it.

\begin{definition}[MIV Predictions with One-Sided Errors]
The prediction vector $\mathbf{p}$ has one-sided error $\eps \in [0,1)$ if we have $v_i^{\max} \in [(1 - \varepsilon) \cdot p_i,\; p_i]$ for each agent $i \in [n]$.
\end{definition}

This conservative (upper-biased) approach to prediction is motivated by both theory and practice.
In online learning and bandit algorithms, for example, upper confidence bounds (UCB) are a common strategy for exploration under uncertainty \cite{auer2002finite}.
Similarly, in inventory planning and demand forecasting, safety margins are often introduced to guard against shortfalls, favoring overestimation over underestimation.
In such settings, underestimating future values or demand is typically more costly than overestimating, making one-sided prediction errors both natural and desirable.

\subsection{Types of Adversaries in Online Algorithms}

In analyzing online algorithms, it is standard to distinguish between two types of adversaries: \emph{adaptive} and \emph{non-adaptive}.
An adaptive adversary can observe the algorithm's internal randomness and past decisions, and respond dynamically by selecting future inputs to undermine performance.
While adaptive adversaries model worst-case behavior, they often make meaningful fairness guarantees impossible in online allocation settings.
In contrast, a non-adaptive adversary commits to the full input sequence in advance, before any of the algorithm's random decisions are made.
This restriction allows for meaningful probabilistic guarantees and is widely adopted when studying randomized algorithms.

\subsection{Other Related Work}
We already discussed the two most directly relevant papers, \citet{neoh2025online} and \citet{zhou2023icml_mms_chores}, which study online fair division with predictive information in essentially the same model as ours. These works assume perfect predictive information.
Here we briefly situate our work within broader lines of work that are related to the model more generally.

\textbf{Learning-Augmented Online Fair Division.}
There is a growing body of work on \emph{learning-augmented algorithms},\footnote{See Appendix~\ref{appsec:relatedwork_learning-augmented} for additional background on this area.} which leverage predictive information about future arrivals and analyze the extent to which desirable fairness properties can be achieved, even when predictions may be inaccurate.
However, most such works in the context of online fair division consider a model with \emph{divisible goods}. 
It is important to note that fairness concepts and the structure of the model differ significantly between divisible and indivisible goods.

In the setting with indivisible goods (which is the focus of our work), \citet{spaeh2023onlineadspredictions} study the allocation of advertising impressions (goods) to advertisers (agents), subject to cardinality constraints on agents' bundle sizes, and with utilitarian social welfare as the objective.
Other works such as \citet{balkanski2023SPschedulingpredictions} and \citet{cohen2024truthfulpreductions} explore learning-augmented approaches focused on MMS and incentive-compatibility as desirable properties.

\textbf{Online Fair Division.}
A separate line of work studies online fair division as a purely adversarial online problem.
\citet{aleksandrov2015onlinefoodbank} study the online fair division problem modeled after a food bank charity problem, where agents are assumed to have binary valuations.
They examine fairness properties such as envy-freeness as well as incentive compatibility.
\citet{benade2018envyvanish} focus on minimizing envy as an objective, while \citet{zeng2020fairness_efficiency_dynamicFD} build on this by analyzing the trade-off between approximate envy-freeness and a notion of economic efficiency.
\citet{he2019fairerfuturepast} explore a variant of the online model whereby past allocations can be revisited and swapped.
Several other works consider an online fair division model with \emph{divisible} goods \cite{banerjee2022nash_predictions,banerjee2023proportionally,barman2022universal,huang2023nash_wo_predictions}.
For a broader overview of earlier works in online fair division, see the survey by  \citet{aleksandrov2020online}.





\section{Greedy Fails in General}
\label{sec:greedy}

In this section, we study the natural class of \emph{greedy algorithms}, which aim to allocate each arriving good in a way that attempts to satisfy a given fairness criterion as much as possible at the current timestep.
In the classic online setting without predictive information, the greedy approach is both intuitive and arguably the only viable strategy. The key question then is: what properties or (non-predictive) information can be leveraged to guide the greedy algorithm's decisions?

Prior work that incorporates predictive information typically uses it to guide greedy algorithms that achieve approximate fairness guarantees.
For instance, \citet{zhou2023icml_mms_chores} proposed a greedy algorithm for $n=2$ that guarantees $1/2$-MMS under normalized valuations. 
Building on this, \citet{neoh2025online} introduced a different greedy algorithm that satisfies EF1 (and thus also $1/2$-MMS) for $n=2$, and achieves PROP1 for all $n$.

Given the intuitive appeal and historical success of greedy approaches in related settings, it is natural to consider them as a potential means of achieving PROP1. 
However, we show that such strategies generally fail to provide any non-trivial approximation to the PROP1 ratio.
Specifically, we analyze three natural greedy allocation strategies and demonstrate that none of them can guarantee $\alpha$-PROP1, for any constant $\alpha > 0$.

To simplify the description of the greedy allocation strategies, we  introduce some useful notation. 
For any timestep $t \in [m]$ and agent $i \in [n]$, let $A_i^{(t)} \subseteq G^{(t)}$ denote the set of goods allocated to agent $i$ after good $g_t$ has been allocated. Let $c_i^{(t)} := \max\{v_i(g) : g \in G^{(t)} \setminus A_i^{(t)}\}$, with $c_i^{(t)}=0$ if $G^{(t)} \setminus A_i^{(t)}=\varnothing$. Thus, $c_i^{(t)}$ is the value, for agent $i$, of the most valuable arrived good that is not in $i$'s current bundle.
We then define the (unnormalized) $\alpha$-PROP1 value for agent $i$ at timestep $t$ as:
\begin{equation*}
\alpha^{(t)}_i = \frac{v_i(A^{(t)}_i) + c^{(t)}_i}{v_i(G^{(t)})} \,\, (\text{or } \infty \text{ if } v_i(G^{(t)}) = 0).
\end{equation*}
In other words, the PROP1 ratio at timestep $t$ is simply $\min \{1, n\cdot \min_{i \in [n]} \alpha_i^{(t)} \}$.
For the initial state when $t = 0$, we let $G^{(0)} = \varnothing$, $c^{(0)}_i = 0$, and define $\alpha^{(-1)}_i = \infty$.
Then, the three greedy allocation strategies considered are as follows. 

\textbf{Greedy Strategy 1}: At timestep $t$, allocate good $g_t$ to an agent from $\argmax_{i \in [n]} \frac{v_i(g_t)}{v_i(G^{(t)})}$.

\textbf{Greedy Strategy 2}: At timestep $t$, allocate good $g_t$ to an agent from $\argmin_{i \in [n]} \frac{v_i(A_i^{(t-1)})}{v_i(G^{(t)})}$.

\textbf{Greedy Strategy 3}: At timestep $t$, allocate good $g_t$ to an agent from $\argmin_{i \in [n]} \frac{v_i(A^{(t-1)}_i) + \max\{ c^{(t-1)}_i, v_i(g_t) \}}{v_i(G^{(t)})}$.

Intuitively, the first strategy allocates the arriving good $g_t$ to the agent which values it the most, and is commonly used for utilitarian welfare-maximizing guarantees.
Meanwhile, the second strategy allocates $g_t$ to the agent which is currently most unsatisfied and is common in offline fair division (e.g., to achieve EFX when valuations are identical in the offline setting \cite{plaut2018efx}, in the online setting to achieve EF1 for any number of agents when valuations are identical \cite{neoh2025online,elkind2025temporalfd}; in the fully-informed online fair division setting to get temporal EF1 under generalized binary valuations \cite{elkind2025temporalfd}; and in many other fair division settings).
Finally, the third strategy directly attempts to greedily optimize the PROP1 objective itself by allocating $g_t$ to the agent who \emph{would become} the most unsatisfied if \emph{not} given $g_t$.

The following three results show that all of these natural greedy strategies fail to achieve any non-zero approximation ratio to PROP1.

\begin{proposition}
\label{prop:greedy-strategy-1-fails}
For $n \geq 2$ and any $\alpha > 0$, there exists a sequence of $m$ arriving goods such that the \emph{Greedy Strategy 1} fails to produce an $\alpha$-\emph{PROP1} allocation.
\end{proposition}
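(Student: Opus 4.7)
The plan is to construct an explicit adversarial instance for $n=2$ (which then lifts easily to any $n \geq 2$) that drives Greedy Strategy 1 into allocating every good to a single agent, even though the other agent has nontrivial total value; by choosing the length $m$ of the instance large, the PROP1 ratio for the starved agent becomes arbitrarily close to $0$.

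Concretely, I propose the following family of instances, parametrised by $m \geq 2$. Let $g_1$ satisfy $v_1(g_1)=v_2(g_1)=1$, and for each $t \geq 2$ let $v_1(g_t)=1$ and $v_2(g_t)=1/2$. For $n>2$, I declare agents $i \geq 3$ to have $v_i(g)=0$ for every $g$, so their PROP1 constraint is trivially satisfied and the adversary may break every tie involving them against agent $2$. The first step is to verify by induction on $t$ that Greedy Strategy 1 assigns $g_t$ to agent $1$ for all $t$: at $t=1$ the ratio $v_i(g_1)/v_i(G^{(1)})$ equals $1$ for both relevant agents, and the adversarial tie-break sends $g_1$ to agent $1$; for $t\geq 2$, agent $1$'s ratio is $1/t$ while agent $2$'s ratio is $(1/2)/\bigl(1+(t-1)/2\bigr) = 1/(t+1) < 1/t$, so agent $1$ strictly wins.

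The next step is to compute the realised PROP1 ratio for agent $2$. Since $A_2 = \varnothing \neq G$, the PROP1 condition reduces to checking $\max_{g \in G} v_2(g) \geq \alpha \cdot v_2(G)/n$. Here $\max_{g} v_2(g) = v_2(g_1) = 1$ and $v_2(G) = 1 + (m-1)/2 = (m+1)/2$, so the best achievable ratio is
\[
\alpha^{(m)}_2 \;=\; \frac{n \cdot \max_g v_2(g)}{v_2(G)} \;=\; \frac{2n}{m+1}.
\]
Given any target $\alpha>0$, choosing $m > 2n/\alpha - 1$ forces $\alpha^{(m)}_2 < \alpha$, so the allocation produced by Greedy Strategy 1 fails to be $\alpha$-PROP1. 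This completes the proof.

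I do not anticipate a serious obstacle: the construction is elementary, the inductive verification of the greedy choice is a one-line inequality $1/t > 1/(t+1)$, and the lift to $n>2$ uses only the observation that zero-valuation agents place no constraint on PROP1. The only minor care point is ensuring that the adversarial tie-break at $t=1$ is permissible; this is fine since the greedy rule is specified via $\argmax$, which returns a set, and the proposition is stated against an adaptive adversary, so any consistent choice of tie-breaking representative may be assumed to follow the adversary's preference.
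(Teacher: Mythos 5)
Your proposal is correct and essentially identical to the paper's own proof: the same instance ($v_1(g_t)=1$ throughout, $v_2(g_1)=1$ and $v_2(g_t)=1/2$ thereafter), the same one-line inequality $\frac{1}{t} > \frac{1}{t+1}$ showing agent $2$ is starved, and the same threshold $m \gtrsim 2n/\alpha$ for the PROP1 violation. The only difference is that you make the lift to $n>2$ explicit via zero-valuation dummy agents, a detail the paper leaves implicit.
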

\begin{proof}
Suppose $v_i(g_1) = 1$ for all $i \in [n]$.
Without loss of generality, suppose $g_1$ is assigned to agent $1$.
Suppose $v_1(g_t) = 1$ and $v_2(g_t) = \frac{1}{2}$ for all subsequent goods, where $t \in \{2, \ldots, m\}$.
For any $t \in \{2, \ldots, m\}$, observe that $\frac{v_1(g_t)}{v_1(G^{(t)})} = \frac{1}{t} > \frac{1/2}{1 + (t-1)/2} = \frac{v_2(g_t)}{v_2(G^{(t)})}$.
Thus, agent $2$ will \emph{never} receive any good, i.e., $A_2 = \varnothing$.
When $m > 1 + 2(\frac{n}{\alpha} - 1)$, we have $\frac{v_2(\varnothing) + v_2(g_1)}{v_2(G)} = \frac{1}{1 + \frac{m-1}{2}} < \frac{\alpha}{n}$.
\end{proof}

\begin{proposition}
\label{prop:greedy-strategy-2-fails}
For $n \geq 2$ and any $\alpha > 0$, there exists a sequence of $m$ arriving goods such that the \emph{Greedy Strategy 2} fails to produce an $\alpha$-\emph{PROP1} allocation.
\end{proposition}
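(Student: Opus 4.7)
The plan is to exploit Greedy Strategy 2's equalization behavior: once an agent receives a good that gives them positive current value, their ratio $v_i(A_i^{(t-1)})/v_i(G^{(t)})$ becomes bounded away from $0$, and any further good that the other agents do not value keeps those agents pinned at ratio $0$ and thus forever ``hungrier'' in the eyes of Greedy Strategy 2.

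I would construct the bad instance in two phases. At $t = 1$, set $v_i(g_1) = 1$ for every $i \in [n]$, so every agent ties at ratio $0/1 = 0$, and some agent, call them $i^*$, receives $g_1$ under the algorithm's tie-breaking rule. Leveraging the adaptive adversary, I would then declare for all $t \geq 2$ that $v_{i^*}(g_t) = 1$ and $v_j(g_t) = 0$ for every $j \neq i^*$. A straightforward induction on $t$ verifies, for every $t \geq 2$, that $v_{i^*}(A_{i^*}^{(t-1)}) = 1$ and $v_{i^*}(G^{(t)}) = t$ (so agent $i^*$'s ratio is $1/t > 0$), while for each $j \neq i^*$ we have $v_j(A_j^{(t-1)}) = 0$ and $v_j(G^{(t)}) = v_j(g_1) = 1$ (so $j$'s ratio is $0$). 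Hence Greedy Strategy 2 routes every $g_t$ with $t \geq 2$ to some $j \neq i^*$, and $i^*$ finishes with $A_{i^*} = \{g_1\}$.

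Given this, $v_{i^*}(A_{i^*}) = 1$, $v_{i^*}(G) = m$, and $v_{i^*}(g) = 1$ for every $g \in G \setminus A_{i^*}$, so $v_{i^*}(A_{i^*} \cup \{g\}) = 2$ for any such $g$. The $\alpha$-PROP1 requirement $v_{i^*}(A_{i^*} \cup \{g\}) \geq \alpha \cdot v_{i^*}(G)/n = \alpha m / n$ then fails as soon as $m > 2n/\alpha$, which gives the claim.

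The only delicate step is the initial tie at $t = 1$, which is precisely where the adaptivity of the adversary becomes useful: the adversary can wait to see which agent becomes $i^*$ before committing to the Phase 2 values that isolate that agent. Beyond that, the argument is a routine ratio calculation.
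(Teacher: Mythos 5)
Your proof is correct and takes essentially the same approach as the paper's: give all agents equal value on $g_1$, let $i^*$ be whoever receives it, then define all subsequent goods so that $i^*$'s ratio $v_{i^*}(A_{i^*})/v_{i^*}(G^{(t)})$ stays strictly above every other agent's, forcing $A_{i^*} = \{g_1\}$ and a PROP1 violation for $m$ large. The only (inessential) difference is that you use $v_j(g_t) = 0$ for $j \neq i^*$ where the paper uses the small positive value $1/m^2$; both pin the other agents strictly below $i^*$, and your threshold $m > 2n/\alpha$ is in fact the correct one (the paper's stated bound $m > n/\alpha$ is off by a factor of two, though this does not affect its conclusion).
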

\begin{proof}
Suppose $v_i(g_1) = 1$ for all $i \in [n]$.
Choose $m \geq \lceil 2n/\alpha \rceil +1$.
Without loss of generality, suppose $g_1$ is assigned to agent $1$.
Suppose $v_1(g_t) = 1$ and $v_2(g_t) = \frac{1}{m^2}$ for all subsequent goods, where $t \in \{2, \ldots, m\}$.
For any $t \in \{2, \ldots, m\}$, observe that $\frac{v_2(A_2)}{v_2(G^{(t)})} \leq \frac{\frac{t-1}{m^2}}{1 + \frac{t-1}{m^2}} = \frac{t - 1}{m^2 + t - 1} < \frac{1}{t} = \frac{v_1(A_1)}{v_1(G^{(t)})}$.
Thus, agent $1$ will \emph{never} receive any subsequent good, i.e., $A_1 = \{g_1\}$.
When $m > \frac{2n}{\alpha}$, we have  $\frac{v_1(g_1) + v_1(g_2)}{v_1(G)} = \frac{2}{m} < \frac{\alpha}{n}$.
\end{proof}

\begin{proposition}
\label{prop:greedy-strategy-3-fails}
For $n \geq 2$ and any $\alpha > 0$, there exists a sequence of $m$ arriving goods such that \emph{Greedy Strategy 3} fails to produce an $\alpha$-\emph{PROP1} allocation.
\end{proposition}
\begin{proof}[Proof sketch]
    The construction is rather complicated; we outline the key ideas for the case $n = 2$ and defer the full details
    to the appendix. 
    In our construction, we maintain that $\alpha^{(t)}_1 \neq \alpha^{(t)}_2$ for all $t \ge 3$. 
    Then, fix any timestep $t \ge 4$ and suppose without loss of generality that $\alpha^{(t-1)}_1 < \alpha^{(t-1)}_2$. 
    Let $\zeta := c^{(t-1)}_2/(2v_2(G^{(t-1)}))$. We consider two cases.

    \textbf{Case 1: $\alpha^{(t-1)}_2 > \alpha^{(t-1)}_1(1+\zeta)$.} 
        Then we construct a sequence of $\tau \ge 1$ arriving goods, each worth $c^{(t-1)}_2/2$ to agent 2 and $0$ to agent 1 (and small enough so that $c_1,c_2$ do not change). We then show that Greedy Strategy 3 must allocate all these goods to agent 1. 
        Thus, $\alpha^{(t-1+\tau)}_1 = \alpha^{(t-1)}_1$, while $v_2(G)$ increases and so $\alpha_2$ decreases, until $\alpha^{(t-1)}_1 < \alpha^{(t-1+\tau)}_2 \le \alpha^{(t-1)}_1\Bigl(1+\frac{c^{(t-1)}_2}{2v_2(G^{(t-1+\tau)})}\Bigr)
        \le \alpha^{(t-1)}_1(1+\zeta)$.
        That is, the PROP1 ratio (the minimum $\alpha$) stays the same, but the gap shrinks to within a $(1+\zeta)$ factor.
        
    \textbf{Case 2: $\alpha^{(t-1)}_2 \le \alpha^{(t-1)}_1(1+\zeta)$.} 
        Define the arriving good $g_t$ by $v_1(g_t)=c^{(t-1)}_1$ and $v_2(g_t)=c^{(t-1)}_2$. 
        Under Greedy Strategy 3, regardless of whether $g_t$ is allocated to agent 1 or 2, we have $\min\{\alpha^{(t)}_1,\alpha^{(t)}_2\} < \alpha^{(t-1)}_1$: if $g_t$ is given to agent 2 then $\alpha_1$ strictly decreases since its numerator stays fixed while $v_1(G)$ increases; if $g_t$ is given to agent 1 then
        $\alpha_2$ strictly decreases by a factor $v_2(G^{(t-1)})/(v_2(G^{(t-1)})+c^{(t-1)}_2)$, and the assumption
        $\alpha^{(t-1)}_2 \le \alpha^{(t-1)}_1(1+\zeta)$ implies the new $\alpha_2$ drops below $\alpha^{(t-1)}_1$.
        
    By alternating between these two cases, we obtain an infinite sequence along which the minimum $\alpha^{(t)}$ decreases without bound, so for any target $\alpha>0$ some finite prefix yields an allocation that is not $\alpha$-PROP1. The extension to $n>2$ keeps agents $3,\dots,n$ inactive (value $0$ for the constructed goods), so they never affect the greedy choice.
\end{proof}

These negative results highlight the fundamental limitations of greedy allocation strategies when facing adaptive adversaries.
Thus, in the next two sections, we explore two approaches to overcome these barriers.
In \cref{sec:random}, we show that a random allocation strategy can achieve a non-trivial PROP1 approximation against non-adaptive adversaries.
In \cref{sec:MIV}, we show that access to MIV predictions enables non-trivial PROP1 approximations even against adaptive adversaries.
\section{Random Allocations}
\label{sec:random}
We investigate the approximate PROP1 guarantee achieved by the simple algorithm \textsc{Rand}, which allocates each arriving good to an agent uniformly at random.
\citet{lipton2004ece} analyzed this algorithm in the offline fair division setting, for its incentive-compatible property (i.e., no agent will have the incentive to misreport their valuations so as to obtain a strictly better outcome).
Moreover, the uniformly random allocation rule is also frequently used as a benchmark in the fair division literature, either as a point of comparison for fairness guarantees or as a baseline for empirical evaluation \citep{babichenko2024quantileshares,basteck2018fairrandom,hossein2023randomimpossibility,nesterov2017spallocation,yamada2024learningFDbandit}. 
Given the limited understanding of how to guarantee PROP1 in the online setting, this baseline provides especially valuable insights. 

As mentioned earlier in the paper, to the best of our knowledge, this is the first multiplicative approximation bound for PROP1 in the online setting that holds \emph{without} any additional assumptions (such as asymptotic behavior, specific valuation classes, or constraints on the arrival order of items).
Achieving such a guarantee is inherently challenging under adaptive adversaries (who tailor their inputs based on the algorithm's random decisions), often forcing worst-case outcomes that render fairness guarantees vacuous.
In contrast, studying \emph{non-adaptive} adversaries offers a realistic and tractable framework for evaluating randomized allocation rules in adversarial environments.

We analyze the performance of \textsc{Rand} against a non-adaptive adversary. 
By linearity of expectation, for each agent $i \in [n]$, we have $\mathbb{E}[v_i(A_i)] = v_i(G)/n$, i.e., \textsc{Rand} is proportional.
However, this does not give us meaningful guarantees for the realized allocation.
A more meaningful perspective is to study the tail guarantee of \textsc{Rand}: fixing a failure probability $\delta > 0$, we ask: what is the largest $\alpha$ such that \textsc{Rand} guarantees an $\alpha$-PROP1 allocation with probability at least $1 - \delta$?

\begin{theorem}
\label{thm:random}
Fix any $\delta \in (0,1)$ and $n \geq 2$ agents.
Against a non-adaptive adversary, \textsc{Rand} achieves $\alpha$-\emph{PROP1} with probability at least $1-\delta$, where $\alpha = \Theta \left( 1/\log(n/\delta)\right)$.
\end{theorem}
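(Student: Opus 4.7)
The plan is as follows. For each agent $i$, let $V := v_i(G)$ and define $S_i := v_i(A_i) + \max_{g \in G \setminus A_i} v_i(g)$; the case $A_i = G$ or $V = 0$ trivially satisfies PROP1 and can be set aside. The goal is to show that $S_i \geq \Omega(V / (n \log(n/\delta)))$ with probability at least $1 - \delta/n$ for each fixed agent, after which a union bound over the $n$ agents yields $\alpha$-PROP1 with $\alpha = \Omega(1/\log(n/\delta))$ and overall failure probability at most $\delta$.

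Fix an agent $i$ and set $\tau := C \log(n/\delta)$ for a sufficiently large constant $C$. I would partition the goods into a \emph{heavy} set $H := \{g \in G : v_i(g) > V/(n\tau)\}$ and a \emph{light} set $L := G \setminus H$, and split the analysis according to which carries more of the total value $V$.

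\emph{Heavy case} ($v_i(H) \geq V/2$): I would argue $S_i > V/(n\tau)$ deterministically. If every good in $H$ is assigned to $i$, then $v_i(A_i) \geq v_i(H) \geq V/2 > V/(n\tau)$; otherwise some $g' \in H$ lies outside $A_i$, so the maximum term in $S_i$ is at least $v_i(g') > V/(n\tau)$. \emph{Light case} ($v_i(L) > V/2$): I would apply Bernstein's inequality to $Y := \sum_{g \in L} v_i(g)\, \mathbf{1}[g \text{ is assigned to } i]$. Each summand is bounded by $M := V/(n\tau)$, one has $\mathbb{E}[Y] = v_i(L)/n \geq V/(2n)$, and $\sum_g \operatorname{Var}(Y_g) \leq M \cdot v_i(L)/n \leq V^2/(n^2\tau)$. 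A standard computation then yields $\Pr[Y < V/(4n)] \leq \exp(-\Omega(\tau)) \leq \delta/n$ once $C$ is large enough, so $v_i(A_i) \geq Y \geq V/(4n)$ with the required probability.

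The main obstacle is the concentration step in the light case. A naive Hoeffding bound would bring $|L|$ into the denominator of the exponent and fail, since $|L|$ can be arbitrarily large relative to $\log(n/\delta)$; one really needs a variance-aware bound such as Bernstein in order to exploit $v_i(L) \leq V$ independently of $|L|$. The threshold $V/(n\tau)$ is then calibrated so that the heavy case's deterministic bound $V/(n\tau)$ and the light case's probabilistic bound $V/(4n)$ are both of order $V/(n \log(n/\delta))$, which yields the claimed $\Omega(1/\log(n/\delta))$-PROP1 guarantee after union-bounding over the $n$ agents.
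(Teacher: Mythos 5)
Your proposal is correct and follows essentially the same strategy as the paper's proof: use the ``plus one good'' slack to handle the presence of a large item deterministically, apply Bernstein's inequality (rather than Hoeffding, precisely because the variance-aware bound is needed) to the sum of small items, and union bound over the $n$ agents. The only cosmetic differences are that the paper cases on whether the single maximum item exceeds $\alpha v_i(G)/n$ and then applies Bernstein to the upper tail of the complement $S_i' = v_i(G) - v_i(A_i)$, whereas you case on which of the heavy/light mass exceeds $v_i(G)/2$ and bound the lower tail of the light-good sum directly; both yield the same $\Omega(1/\log(n/\delta))$ guarantee.
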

\begin{proof}[Proof idea]
    Fix any agent $i \in [n]$ and let $\alpha = \frac{3}{32 \log(n/\delta)}$.
    If some single good $g$ is already worth at least $\alpha \cdot \frac{v_i(G)}{n}$ to agent $i$, then this is sufficient to guarantee that $v_i(A_i \cup \{g\}) \geq \alpha \cdot \frac{v_i(G)}{n}$. Thus, the only interesting case is when every good is ``small'', i.e., $v_i(g_j) < \alpha \cdot \frac{v_i(G)}{n}$ for all $j \in [m]$.
    Since under \textsc{Rand}, every good is allocated to $i$ with probability $\frac{1}{n}$, let $S'_i$ be the total value that other agents receive. Since the goods are small, $S'_i$ is a sum of independent, bounded terms whose mean is $\frac{n-1}{n} \cdot v_i(G)$ with variance $\frac{\alpha \cdot v_i(G)^2}{n^2}$.

    A careful application of Bernstein's inequality then implies that the probability that $S'_i$ exceeds its mean by $\frac{1-\alpha}{n} \cdot v_i(G)$ is at most $\delta/n$. Applying a union bound over all $n$ agents, the probability that \emph{any} agent fails this guarantee is at most $\delta$. Thus, with probability $1-\delta$, the final allocation is $\alpha$-PROP1.
\end{proof}

We complement the above result with a matching upper bound which shows that this logarithmic dependence is essentially unavoidable for \textsc{Rand}.\footnote{Note that only for the following result, we will use $\alpha$-PROP1 for $\alpha > 1$. 
This would only correspond to a stronger benchmark than PROP1.}
\begin{proposition} \label{prop:rand_tight}
    Fix any $\delta \in (0,1/2]$ and $n \geq 2$ agents.
    Against a non-adaptive adversary, there exists an instance on which \textsc{Rand} returns an allocation that is not $\alpha$-\emph{PROP1} with probability at least $\delta$, for some $\alpha = \Theta(1/\log(n/\delta))$.
\end{proposition}

The above result shows that the logarithmic dependence on $\log(n/\delta)$ in Theorem~\ref{thm:random} is essentially tight for \textsc{Rand}.\footnote{This does not preclude other online algorithms from achieving stronger high-probability guarantees against a non-adaptive adversary; indeed, even in the online setting, \citet[Thm 3.7]{elkind2025temporalfd} showed that exact EF1 (and hence PROP1) is achievable under identical valuations. 
Characterizing the optimal high-probability approximation achievable by general online algorithms against non-adaptive adversaries remains an interesting direction.} Intuitively, the lower bound exploits the possibility that, with non-negligible probability, some agent receives too little total value for any single additional good to compensate. 
On the other hand, when every individual good is sufficiently small relative to an agent’s proportional share, \textsc{Rand} exhibits much stronger concentration around its expectation. 
The following theorem formalizes this ``small items'' setting and shows that \textsc{Rand} then achieves an almost proportional outcome with high probability.
\begin{theorem} \label{thm:rand_smallgoods}
    Fix any $\delta \in (0,1)$, $n \ge 2$, and $\varepsilon \in (0,1)$.
    Suppose that for every agent $i\in[n]$, $\max_{g \in G} v_i(g) \leq \frac{3\varepsilon^2}{8\log(n/\delta)}\cdot \frac{v_i(G)}{n}$.
    Then, against a non-adaptive adversary, \textsc{Rand} achieves $(1-\varepsilon)$-\emph{PROP1} with probability at least $1-\delta$.
\end{theorem}

\section{Maximum Item Value (MIV) Predictions} \label{sec:MIV}
In this section, we study how maximum item value (MIV) predictions can help achieve approximate PROP1 allocations online. Prior work obtains online PROP1 under stronger informational assumptions, such as normalization information about agents' total values \citep{neoh2025online}. Such information can be difficult to obtain in online settings, since it requires estimating an agent's total value over the whole future sequence. 
In contrast, MIV predictions require \emph{significantly less information}. For each agent, we assume access only to a (possibly approximate) prediction of the value of their most-preferred good among the entire sequence of arriving goods. This represents a more ``lightweight'' and realistic informational assumption in the online setting. 

We note that MIV predictions are less demanding than normalization information. 
The results of \citet{neoh2025online} imply that while PROP1 is always achievable under normalization information, it is not always achievable under MIV predictions; the latter is a conclusion that can be derived from their analysis following Theorem~3.3.
Similarly, while EF1 and $1/2$-MMS are achievable for $n = 2$ given normalization information \cite{neoh2025online, zhou2023icml_mms_chores}, the following proposition shows that neither EF1 nor MMS can be approximated (even for $n=2$) when only MIV predictions are available.
For completeness, we also include results for PROPX, a natural strengthening of PROP1 and another relaxation of proportionality (see Definition~\ref{def:propx}).

\begin{proposition}
\label{prop:impossibility_ef1_mms}
For $n \geq 2$ and any $\alpha > 0$, 
no online algorithm can always return an allocation that is $\alpha$-\emph{EF1}, $\alpha$-\emph{MMS}, or $\alpha$-\emph{PROPX}, even with perfect  \emph{MIV} predictions.
\end{proposition}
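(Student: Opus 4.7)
The plan is to prove each of the three impossibilities via an adaptive-adversary construction. For the general $n \geq 2$ case, I reduce to $n = 2$ by adding $n - 2$ dummy agents with $p_i = 0$ who value every good at $0$; such agents trivially satisfy EF1, MMS, and PROPX, so it suffices to defeat the two non-dummy agents.

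For the $\alpha$-PROPX impossibility, I fix $p_1 = p_2 = 1$ and use the following adaptive adversary. First reveal $g_1$ with $v_1(g_1) = v_2(g_1) = 1$; by symmetry, assume the algorithm gives it to agent $1$. Then reveal $T = \Theta(1/\alpha)$ further goods with values $(1, \delta)$ for a tiny parameter $\delta > 0$, and let $a$ denote the number of these Phase-two goods placed in $A_1$. If $a \geq 1$, any such good $g \in A_1$ gives $v_2(A_2 \cup \{g\}) = (T - a + 1)\delta < \alpha v_2(G)/2 \approx \alpha/2$ once $\delta$ is chosen small enough, violating $\alpha$-PROPX for agent $2$. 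If $a = 0$, then $v_1(A_1) = 1$ while $v_1(G) = T + 1$, so $v_1(A_1 \cup \{g\}) = 2 < \alpha(T+1)/2$ for every $g \in A_2$ once $T > 4/\alpha$, violating $\alpha$-PROPX for agent $1$. Notice further that in the $a = 0$ branch we simultaneously have $v_1(A_1) = 1 \ll \alpha \lfloor (T+1)/2 \rfloor = \alpha \cdot \mms_1$ and $v_1(A_2) - 1 \gg v_1(A_1)/\alpha$, so $\alpha$-MMS and $\alpha$-EF1 also fail for agent~1 in that branch.

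For the remaining $\alpha$-EF1 and $\alpha$-MMS impossibility when the algorithm chooses $a \geq 1$---a concern only in the $n = 2$ small-$\alpha$ regime, since for $n \geq 3$ these impossibilities already follow from \cite{neoh2025online} and \cite{zhou2023icml_mms_chores} (which hold under the strictly stronger normalization-information assumption)---I plan to augment the construction with an additional adaptive phase that issues state-dependent $(1, \delta')$- or $(0, 1)$-goods, narrowing the feasible region of $a$ until both the EF1 and MMS constraints become violated on one side. The main obstacle is bookkeeping the adversary's adaptive responses so that the declared MIV predictions $p_1 = p_2 = 1$ remain consistent along every branch of the decision tree, which is the key verification step in closing the proof for every $\alpha > 0$.
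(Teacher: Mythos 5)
Your PROPX argument and your $a=0$ branch are sound in outline (modulo replacing the $n=2$ thresholds by $\alpha v_i(G)/n$ after the dummy-agent reduction, which is routine), but the proposal has a genuine gap exactly where the difficulty of this proposition lies: the case in which the algorithm does give agent $1$ a second good ($a\ge 1$). There you establish only the PROPX violation for agent $2$; the EF1 and MMS violations are deferred to an unspecified ``additional adaptive phase,'' and you yourself flag the consistency bookkeeping as an unresolved ``key verification step.'' This is not a detail one can wave at: with constant-value $(1,\delta)$ goods, an algorithm that takes exactly one extra phase-two good for agent $1$ and spreads the rest can be simultaneously EF1 and MMS for both real agents, so some new mechanism is required. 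The paper's construction supplies that mechanism by making the values of the later goods to agents $2,\dots,n$ grow geometrically, $v_i(g_t)=\eps K^{t-2}$ with $K=\lceil 3/\alpha\rceil$, and by having the adversary stop issuing valuable goods the moment agent $1$ receives a second good $g_t$. Then a pigeonhole over the window $\{g_{t+2-n},\dots,g_t\}\cup\{g_1\}$ yields an agent $i\neq 1$ whose entire bundle is worth at most a $2/K<\alpha$ fraction of the single good $g_t$ sitting in $A_1$, which kills $\alpha$-EF1, $\alpha$-MMS, and $\alpha$-PROPX at once. Your sketch contains no analogue of this escalation, so the proof is not closed.

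A second, smaller but real problem is the claim that for $n\ge 3$ the EF1 and MMS impossibilities ``already follow'' from the prior results under normalization information because that assumption is ``strictly stronger.'' Normalization information ($v_i(G)$ known) and MIV predictions ($v_i^{\max}$ known) are incomparable: neither can be computed from the other, so an impossibility against algorithms that know $v_i(G)$ does not automatically defeat algorithms that instead know $v_i^{\max}$. To transfer those results you would have to verify that every branch of the prior adversary's strategy tree keeps $v_i^{\max}$ constant (so the prediction reveals nothing), which you do not do; the paper instead builds a single construction that is consistent with $\mathbf{p}=(1,\dots,1)$ on every branch.
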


Given these relatively lightweight predictions, we first show that under accurate MIV predictions, it is possible to design an online algorithm that achieves a finite approximation ratio for PROP1, which scales inversely with the number of agents. Without loss of generality\footnote{This is because $\alpha$-PROP1 is homogeneous in each agent's valuation scale. Dividing all of agent $i$'s valuations by $p_i$ scales both sides of the $\alpha$-PROP1 inequality by the same factor and simultaneously normalizes the predicted maximum to $1$, so the guarantee, feasibility and approximation ratio are unchanged.}, we assume $\mathbf{p} = (p_1,\dots,p_n) = (1,\dots,1)$; for arbitrary $\mathbf{p}$, one can normalize the valuations by dividing each $v_i(g_t)$ by $p_i$, which preserves the structure of the problem. 
We introduce the following algorithm (Algorithm~\ref{alg:MIV}) that is based on the following potential function, for each agent $i \in [n]$:
\begin{equation*}
    \phi_i^t = \frac{a_i^{(t)}}{(n^2+n+1)\cdot a_i^{(t)} + n^2\cdot v_i(A_i \setminus \{g_{r_i}\})\cdot a_i^{(t)} - 1}.
\end{equation*}
\begin{algorithm*}[t]
    \caption{Returns a $\frac{1}{n}$-PROP1 allocation given MIV predictions $\mathbf{p} = (p_1,\dots,p_n) = (1,\dots,1)$}
    \label{alg:MIV}
    \begin{algorithmic}[1]
        \State Initialize the empty allocation $\mathcal{A} = (A_1,\dots, A_n)$ where $A_i \gets \varnothing$ for each $i \in [n]$
        \State $r_i \gets \infty$ for each $i \in [n]$ 
        \While {there exists a good $g_t$ arriving online} \label{alg:MIV:line:while}
            \For {each agent $i \in [n]$}
                \If {$v_i(g_t) = 1$ and $r_i > t$}
                    \State $r_i \gets t$
                \EndIf
                \If{$t < r_i$} \label{alg:MIV:line:xit_start}
                \State $a_i^{(t)} \gets 1/ (1+v_i(G^{(t)}))$,
                \State $b_i^{(t)} \gets a_i^{(t)}/ ((n^2 + n + 1)\cdot a_i^{(t)} + n^2 \cdot v_i(A_i) \cdot a_i^{(t)} - 1 )$
                \State $c_i^{(t)} \gets a_i^{(t)} / ( (n^2 + n + 1)\cdot a_i^{(t)} + n^2 \cdot v_i(A_i \cup \{g_t\}) \cdot a_i^{(t)} - 1 )$
                \Else
                \State $a_i^{(t)} \gets 1/v_i(G^{(t)})$,
                \State $b_i^{(t)} \gets a_i^{(t)}/((n^2 + n + 1)\cdot a_i^{(t)} + n^2 \cdot v_i(A_i \setminus \{g_{r_i}\}) \cdot a_i^{(t)} - 1)$
                \State $c_i^{(t)} \gets a_i^{(t)}/((n^2 + n + 1)\cdot a_i^{(t)} + n^2 \cdot v_i(A_i  \cup \{g_t\} \setminus \{g_{r_i}\}) \cdot a_i^{(t)} - 1)$
                \EndIf
            \EndFor
            \State Let $i^* \in \arg\min_{i \in [n]} \left( c_i^{(t)}  + \sum_{j \in [n], j \neq i}  b_j^{(t)} \right)$ \label{alg:miv:line:chooseagent}
            \State $A_{i^*} \gets A_{i^*} \cup \{g_t\}$
        \EndWhile
        \State \Return $\mathcal{A} = (A_1, \dots, A_n)$
    \end{algorithmic}
\end{algorithm*}

Before presenting the guarantee, we explain the role of the variables in the potential function. The variable $r_i$ is the first time at which agent $i$ sees a good whose normalized value is $1$; before this happens, $r_i=\infty$. The good $g_{r_i}$, once it arrives, accounts for the ``one good'' allowed in the PROP1 comparison: if it is outside $A_i$, it can be added to $i$'s bundle, and if it is already in $A_i$, then $i$'s bundle already contains that value.

The term $a_i^{(t)}$ scales the total value that must be covered after allowing for this one good. Before $r_i$ occurs, the algorithm reserves room for a future value-$1$ good by using $a_i^{(t)} = 1/(1+v_i(G^{(t)}))$. After $r_i$ occurs, that good is known, and the potential uses $A_i^{(t)}\setminus\{g_{r_i}\}$. A smaller potential means that agent $i$ is farther from violating the $1/n$-PROP1 target.
Next, the potential function $\phi_i^t$ measures how close agent $i$ is to violating the $\frac{1}{n}$-PROP1 guarantee. 
A lower potential value indicates a more favorable state. 
We visualize how $\phi_i^t$ behaves with respect to different values of $a_i^{(t)}$ and the product $v_i(A_i \setminus\{r_i\})\cdot a_i^{(t)}$.

\begin{figure}[h]
  \centering
  \includegraphics[width=\linewidth]{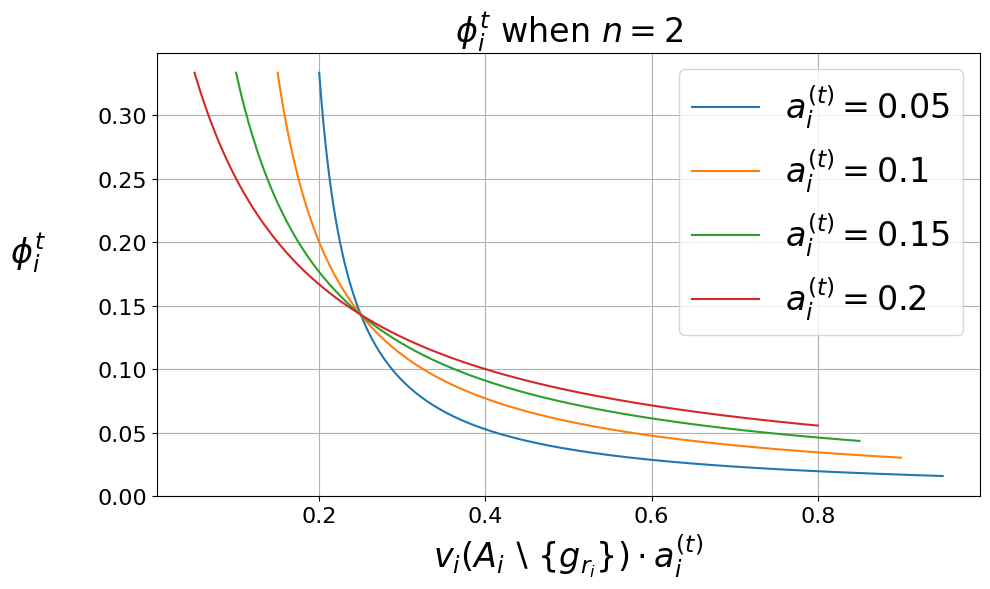}
  \label{fig:single_column}
  \vspace{-0.2in}
  \caption{$\phi^t_i$ at different values of $a_i^{(t)}$ and $v_i(A_i \setminus\{r_i\})\cdot a_i^{(t)}$.}
\end{figure}
This potential function captures two key intuitions: (1) when $v_i(A_i \setminus\{r_i\})\cdot a_i^{(t)}$ is large, it is better for $a_i^{(t)}$ to be small, indicating that future goods are relatively small compared to $v_i(G)$, and thus will not significantly reduce agent $i$'s surplus; 
(2) conversely, when $v_i(A_i \setminus\{r_i\})\cdot a_i^{(t)}$ is small, it is preferable for $a_i^{(t)}$ to be large, ensuring that the algorithm does not need to allocate a low-valued good to agent~$i$ merely to avoid violating the $\frac{1}{n}$-PROP1 condition.

Then, our result is as follows.
\begin{theorem} \label{thm:miv}
    Given the \emph{MIV} predictions $\mathbf{p}$, Algorithm~\ref{alg:MIV} returns a \emph{$\frac{1}{n}$-PROP1} allocation.
\end{theorem}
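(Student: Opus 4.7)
The plan is a potential-function argument. After unifying the two regimes in \cref{alg:MIV} (using that $v_i(G^{(r_i)}) = v_i(G^{(r_i-1)}) + 1$), one checks that $\phi_i^t = 1/D_i^t$, with
\[
D_i^t \;=\; n^2 \;+\; \mathbbm{1}[t \ge r_i] \;+\; n^2\, v_i\bigl(A_i^{(t)} \setminus \{g_{r_i}\}\bigr) \;-\; v_i(G^{(t)}).
\]
The $\frac{1}{n}$-PROP1 target reduces to showing $D_i^m \ge 1$ for every agent $i$: if $g_{r_i} \notin A_i^{(m)}$ take the ``extra'' good to be $g_{r_i}$, and the condition $v_i(A_i^{(m)} \cup \{g_{r_i}\}) \ge v_i(G)/n^2$ is exactly $D_i^m \ge 1$; if $g_{r_i} \in A_i^{(m)}$ the bundle already contains a value-$1$ good and the inequality $v_i(A_i^{(m)}) \ge v_i(G)/n^2$ again follows from $D_i^m \ge 1$.

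Next, I would track how $D_i^t$ and the algorithm's auxiliary quantities evolve when $g_{t+1}$ arrives. A short calculation gives three cases: (i) if $g_{t+1} = g_{r_i}$, the jump of $\mathbbm{1}[t \ge r_i]$ cancels $-v_i(g_{r_i}) = -1$, so $D_i^{t+1} = D_i^t$ regardless of who receives it; (ii) if $g_{t+1} \neq g_{r_i}$ and agent $i$ is skipped, $D_i^{t+1} = D_i^t - v_i(g_{t+1})$; (iii) if $g_{t+1} \neq g_{r_i}$ and agent $i$ receives it, $D_i^{t+1} = D_i^t + (n^2-1)\, v_i(g_{t+1})$. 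A parallel computation yields $b_i^{(t+1)} = 1/\bigl(D_i^t - v_i(g_{t+1}) + n\bigr)$ and $c_i^{(t+1)} = 1/\bigl(D_i^t + (n^2-1)\,v_i(g_{t+1}) + n\bigr)$, so the post-allocation potential satisfies $\phi_i^{t+1} = x/(1 - n x)$ with $x = b_i^{(t+1)}$ if agent $i$ is skipped and $x = c_i^{(t+1)}$ if agent $i$ receives. The extra $+n$ comes from the coefficient gap $(n^2 + n + 1)$ versus $(n^2 + 1)$, and encodes the ``safety margin'' that the algorithm plans for.

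The heart of the argument is then the per-agent invariant $D_i^t \ge 1$ for all $i$ and $t$. The base case $D_i^0 = n^2$ is immediate, and case (iii) preserves the invariant trivially for the receiver. The delicate case is the skipped agents in (ii): one must show that the choice $i^* \in \argmin_i\bigl(c_i^{(t+1)} + \sum_{j \ne i} b_j^{(t+1)}\bigr)$ never leaves any $D_j^{t+1}$ below $1$. The key observation is that whenever some $D_j^t$ is dangerously close to $1$ with $v_j(g_{t+1}) > 0$, the quantity $b_j^{(t+1)}$ blows up, so the minimiser is forced to select $i^* = j$. A careful comparison of the surrogate sums $S(i) = c_i^{(t+1)} + \sum_{k \ne i} b_k^{(t+1)}$ across agents, exploiting $v_i(g_{t+1}) \le 1 = v_i^{\max}$ and the precise algebra of the coefficients, yields the invariant; combined with the first step this gives the $\frac{1}{n}$-PROP1 guarantee.

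The hardest part, I expect, is this inductive step. The surrogate $c_{i^*} + \sum_{j \ne i^*} b_j$ that the algorithm minimises is strictly \emph{smaller} than the actual sum of post-allocation $\phi$ values (because of the $+n$ offset), so naive monotonicity of $\sum_i \phi_i^t$ fails and this total potential can in fact fluctuate from step to step. The proof must therefore exploit the precise algebraic balance between $n^2 + 1$ in $\phi$ and $n^2 + n + 1$ in $b$ and $c$ to force the endangered agent to be picked as the receiver just before its denominator would otherwise drop below $1$.
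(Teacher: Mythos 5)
Your reduction is right: with the appendix's notation one does have $\phi_i^t = 1/(D_i^t + n)$, and $\tfrac{1}{n}$-PROP1 for agent $i$ is equivalent to $D_i^m \geq 1$ (equivalently $x_i^m + y_i^m \geq 1/n^2$). Your case analysis of how $D_i^t$ evolves is also correct, as is your observation that the algorithm's surrogate $b_i^{(t)}, c_i^{(t)}$ differ from the quantity $1/D_i^t$ by a $+n$ in the denominator. But you then draw the wrong conclusion from this mismatch, and this leads to a genuine gap.

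The $+n$ offset is not a ``surrogate'' artifact: the paper's potential is $\phi_i^t = \frac{x_i^t}{(n^2+n+1)x_i^t + n^2 y_i^t - 1} = \frac{1}{D_i^t + n}$ (the $(n^2+1)$ appearing in the body of the paper is a typo; the appendix and the algorithm both use $(n^2+n+1)$). With this definition, $b_i^{(t)}$ and $c_i^{(t)}$ are \emph{exactly} the post-allocation values of $\phi_i^t$, so line 17 of \cref{alg:MIV} literally minimizes $\Phi^{t}=\sum_i\phi_i^t$, and $\Phi^t$ is monotone non-increasing --- it does not ``fluctuate.'' Concretely, the paper shows $\phi_i^+ + (n-1)\phi_i^- \leq n\phi_i^k$ for every agent (using $D_i^k\geq 1$ to lower-bound the common denominator $\Delta$), averages over $i$ to find some $i^*$ with $\phi_{i^*}^+ + \sum_{j\neq i^*}\phi_j^- \leq \Phi^k$, and concludes $\Phi^{k+1}\leq\Phi^k \leq \Phi^0 = \tfrac{1}{n+1}$.

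The second, more substantive, issue is that your proposed inductive invariant $D_i^t\geq 1$ (for all $i$) is too weak to close the induction on its own. For $n=2$, the state $D_1^t = D_2^t = 1$ satisfies your invariant, yet a single arriving good $g_{t+1}$ with $v_1(g_{t+1})=v_2(g_{t+1})=\tfrac12$ (and $t+1\neq r_1,r_2$) necessarily sends one agent's $D$ to $\tfrac12<1$, since only one of them can receive the good. Your claim that ``the minimiser is forced to select $i^*=j$'' when $D_j^t$ is small does not resolve this, because two (or more) agents can be near-critical simultaneously. The extra strength one needs is precisely the paper's aggregate invariant $\sum_i \tfrac{1}{D_i^t + n}\leq\tfrac{1}{n+1}$: it rules out configurations like $D_1=D_2=1$ (which would give $\Phi=\tfrac23>\tfrac13$), implies $D_i^t\geq 1$ for every $i$ (since $D_j<1$ for any single $j$ would already force $\Phi>\tfrac{1}{n+1}$), and is self-maintaining by the per-agent inequality and the averaging/argmin step above. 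So the fix is to promote your observation about the $+n$ shift from a nuisance to the actual definition of the potential, and carry both pieces of the invariant ($\Phi^t\leq\tfrac{1}{n+1}$ and $\phi_i^t\geq 0$) through the induction, rather than only the per-agent condition.
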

\begin{proof}[Proof sketch]
    Let the global potential function be $\Phi^t = \sum_{i \in [n]} \phi_i^t$. 
    The proof relies on two key properties of the potential function, both of which directly lead to the result:
    \vspace{-0.1in}
\begin{enumerate}
    \item If $\Phi^m \leq \Phi^0 = \frac{1}{n+ 1}$, 
    then the allocation $\mathcal{A}$ returned by Algorithm~\ref{alg:MIV} satisfies $\frac{1}{n}$-PROP1.
    \item For every $t \in [m]$, there always exists an agent to whom $g_t$ can be allocated such that $\Phi^t \leq \Phi^{t-1}$. 
\end{enumerate}
\vspace{-0.1in}
Since the algorithm selects, at each step, the agent to allocate $g_t$ to, such that $\Phi^t$ is minimized, the second property guarantees that the potential never increases: $\Phi^t \leq \Phi^{t-1}$ for all $t$, and thus $\Phi^t \leq \Phi^0$.
Combined with the first property, this implies the desired $\frac{1}{n}$-PROP1 guarantee.

To show the second property, let $\phi_i^+$ denote the increase in agent $i$'s potential function if they do \emph{not} receive $g_t$, and let $\phi_i^-$ be the decrease in their potential if they do receive $g_t$.
The key idea is to show that for every agent $i$, $(n-1) \cdot \phi_i^+ \leq \phi_i^-$.
This inequality implies that the decrease in potential for at least one agent (upon receiving $g_t$) outweighs the total increase in potential across all other agents if they were not allocated the good.
Hence, at each step, there always exists an agent whose allocation of $g_t$ ensures the global potential does not increase.
\end{proof}

To the best of our knowledge, the above result gives the first finite multiplicative approximation guarantee known for any proportionality-type benchmark in this adversarial online setting under such weak predictive information. Determining the optimal dependence on $n$, even with access to these predictions, remains an interesting open problem.

Next, we extend our results to account for one-sided error\footnote{Our one-sided results can be viewed as the calibrated form of bounded two-sided noise via the following reduction. Suppose the raw predictor satisfies $p_i \in [(1-\rho) v_i^{\max}, (1+\rho) v_i^{\max}]$ for each agent $i$. Conservatively inflating the prediction to $\tilde{p}_i = p_i / (1 - \rho)$ yields a one-sided error with $\eps = 2 \rho / (1 + \rho)$.} in the MIV predictions, i.e., when each agent's predicted MIV overestimates their true maximum value.
The following theorem shows that any algorithm designed for perfect predictions can be made robust to such errors, while degrading gracefully in its guarantees.
Note that the approximation ratio achieved is non-trivial as long as $\varepsilon<1$ and smoothly recovers the original approximation ratio when $\varepsilon=0$.

\begin{theorem} \label{thm:MIV_error}
    Let $\alpha\in[0,1]$ and suppose there exists an online algorithm that, given the perfect \emph{MIV} predictions, always outputs an
    $\alpha$-\emph{PROP1} allocation.
    For any $\varepsilon\in[0,1)$, there is an online algorithm which, given the \emph{MIV} predictions whose one–sided error is at most~$\varepsilon$, always
    returns a $\beta$-\emph{PROP1} allocation with $\beta \;=\; 
    \alpha\,(1-\varepsilon)/(1-\frac{\alpha\varepsilon}{n})$.
\end{theorem}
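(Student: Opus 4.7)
The plan is to design the noisy-prediction algorithm as a black-box reduction that simulates the perfect-prediction algorithm, call it $\mathsf{ALG}$, on a carefully ``boosted'' view of the arriving valuations. Since the one-sided error assumption gives $v_i^{\max} \geq (1-\eps) p_i$, at least one arriving item for each agent $i$ satisfies $v_i(g_t) \geq (1-\eps) p_i$; we track the first such item per agent, denoted $g^b_i$, and report its value to $\mathsf{ALG}$ as the inflated value $p_i$ rather than the true $v_i(g^b_i)$. All other items are passed through with their true values. Since $v_i(g_t) \leq v_i^{\max} \leq p_i$ for every item and $g^b_i$ now reports value exactly $p_i$, the reported maximum value for each agent equals $p_i$; the prediction is perfect from $\mathsf{ALG}$'s perspective. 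The allocation $\mathcal{A} = (A_1, \ldots, A_n)$ it produces on this boosted view is what we output on the real sequence.

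Let $v_i^b$ denote the boosted valuation and $\delta_i := p_i - v_i(g^b_i) \in [0, \eps p_i]$, so $v_i^b(G) = v_i(G) + \delta_i$. By assumption, $\mathcal{A}$ is $\alpha$-\emph{PROP1} with respect to $v_i^b$: some $g \in G \setminus A_i$ satisfies $v_i^b(A_i \cup \{g\}) \geq \alpha\, v_i^b(G)/n$. To translate this into a bound on $v_i$, I would split on whether $g^b_i \in A_i$. If $g^b_i \in A_i$, choose $g^\star := \argmax_{g \in G \setminus A_i} v_i(g)$ and observe that $v_i^b(A_i \cup \{g^\star\}) = v_i(A_i) + \delta_i + v_i(g^\star)$; if $g^b_i \notin A_i$, choose $g^\star := g^b_i$ and use that $g^b_i$ maximizes $v_i^b(g)$ over $g \in G \setminus A_i$. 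In both cases, unpacking the boosted guarantee yields $v_i(A_i \cup \{g^\star\}) \geq L_1 := \alpha(v_i(G) + \delta_i)/n - \delta_i$. Separately, since $g^b_i$ lies in either $A_i$ or $A_i \cup \{g^\star\}$ and $v_i(g^b_i) \geq (1-\eps) p_i$, we also get $v_i(A_i \cup \{g^\star\}) \geq L_2 := (1-\eps) p_i$.

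To conclude, I would show $\max(L_1, L_2) \geq \beta\, v_i(G)/n$ via a case split on $V := v_i(G)/p_i$, noting $V \geq 1-\eps$ since $v_i(G) \geq v_i^{\max} \geq (1-\eps)p_i$. In the regime $V \leq n(1-\eps)/\beta$, the bound $L_2 = (1-\eps) p_i$ already exceeds $\beta v_i(G)/n$. In the complementary regime $V > n(1-\eps)/\beta$, using $\delta_i \leq \eps p_i$ on $L_1$ reduces the required inequality $L_1 \geq \beta v_i(G)/n$ to $\beta(n - \alpha\eps) \leq \alpha n(1-\eps)$, which rearranges exactly to the claimed $\beta = \alpha(1-\eps)/(1-\alpha\eps/n)$. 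The two regimes meet tightly at the boundary, and the theorem's choice of $\beta$ is precisely what makes this bound hold in the worst case.

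The main obstacle is combining the two complementary lower bounds $L_1$ and $L_2$ across all instances. $L_1$ (inherited from the algorithm on the boosted valuations, degraded by $\delta_i$) is useful when $v_i(G)$ is large relative to $p_i$, whereas $L_2$ (the raw value of $g^b_i$, necessarily at least $(1-\eps) p_i$) is useful when $v_i(G)$ is close to its minimum allowed by the one-sided error assumption. Neither bound alone achieves the target $\beta$, and the case split on $V$ is what shows that their maximum always meets the bound.
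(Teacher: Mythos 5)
Your proposal is correct and uses essentially the same reduction as the paper: boost the first item worth at least $(1-\eps)p_i$ to reported value $p_i$, run the perfect-prediction algorithm on the boosted valuations, and then combine the inherited guarantee with the fact that the bundle-plus-one-good value is at least $(1-\eps)p_i$ (the paper's inequality (\ref{eqn:error_5}) is exactly your $L_2$). The only difference is cosmetic: the paper absorbs the additive loss $\delta_i$ multiplicatively in a single chain of inequalities, whereas you take the maximum of $L_1$ and $L_2$ with a threshold case-split on $v_i(G)/p_i$; both yield the same tight $\beta$.
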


By combining Theorem~\ref{thm:miv} and Theorem~\ref{thm:MIV_error}, we obtain the following corollary, which provides a robustness guarantee for our algorithm under one-sided prediction error.
\begin{corollary} \label{cor:MIV}
    For any $\varepsilon\in[0,1)$, there is an online algorithm which, given the \emph{MIV} predictions whose one–sided error is at most~$\varepsilon$, always returns a $\beta$-\emph{PROP1} allocation with $\beta =\frac{(1-\varepsilon)}{(n-\frac{\varepsilon}{n})}$.
\end{corollary}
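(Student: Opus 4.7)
The plan is a direct composition of the two preceding results, with no new machinery required. First, I would invoke \cref{thm:miv} to obtain an online algorithm that, under perfect MIV predictions, always returns a $\frac{1}{n}$-PROP1 allocation. This supplies the hypothesis of \cref{thm:MIV_error} with the concrete choice $\alpha = \frac{1}{n}$.

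Next, I would apply \cref{thm:MIV_error} with this $\alpha$. That theorem guarantees an online algorithm which, when the MIV predictions have one-sided error at most $\eps \in [0,1)$, returns an allocation that is $\beta$-PROP1 with
\begin{equation*}
\beta \;=\; \frac{\alpha(1-\eps)}{1 - \frac{\alpha \eps}{n}}.
\end{equation*}
Substituting $\alpha = \frac{1}{n}$ and simplifying the compound fraction gives
\begin{equation*}
\beta \;=\; \frac{\tfrac{1}{n}(1-\eps)}{1 - \tfrac{\eps}{n^2}} \;=\; \frac{(1-\eps)}{n\left(1 - \tfrac{\eps}{n^2}\right)} \;=\; \frac{(1-\eps)}{n - \tfrac{\eps}{n}},
\end{equation*}
which is exactly the claimed ratio.

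There is essentially no obstacle here: the only things worth briefly noting are that the denominator $n - \tfrac{\eps}{n}$ is strictly positive for all $n \geq 2$ and $\eps \in [0,1)$, so $\beta$ is well-defined and positive; that at $\eps = 0$ we recover the baseline $\beta = \tfrac{1}{n}$ from \cref{thm:miv}; and that $\beta$ is continuous and monotonically decreasing in $\eps$, confirming the graceful degradation asserted in the introduction. The proof is thus a one-line substitution, and the content of the corollary lies entirely in the two theorems it combines.
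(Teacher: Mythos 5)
Your proof is correct and matches the paper's intended derivation exactly: the corollary is obtained by instantiating Theorem~\ref{thm:MIV_error} with $\alpha = \frac{1}{n}$ from Theorem~\ref{thm:miv} and simplifying. The algebra checks out, and the sanity checks you add (positivity of the denominator, recovery of $\frac{1}{n}$ at $\eps=0$) are sensible but not strictly necessary.
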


\textbf{Discussion.}
Standard lower bound constructions in online fair division rely on hiding whether a large future item exists \citep{zhou2023icml_mms_chores,neoh2025online}.
However, MIV predictions reveal precisely the scale of any future item value, which breaks standard hardness constructions for PROP1.
Any tight lower bound would therefore need to fix the entire MIV vector while encoding hardness in the residual tail and arrival order, which is substantially more subtle.
Thus, \cref{prop:impossibility_ef1_mms} provides a strong partial lower bound that even with perfect MIV predictions, stronger notions (EF1, MMS, PROPX) remain inapproximable, suggesting that our positive result is highly specific to PROP1.

\textbf{Empirical comparisons.}
In addition to our theoretical analysis, we empirically compared \cref{alg:MIV} with the three greedy baselines described in \cref{sec:greedy}.\footnote{Our code is available at \url{https://github.com/nicteh/Approx-Prop-Online-Fair-Division}.}

We first normalize each agent's values by her maximum item
value, as in the MIV-normalized view of Algorithm~1. For an allocation $A$, we
measure fairness by the realized PROP1 ratio $\min_{i \in [n]} \min \{1,
n (v_i(A_i)+\max_{g \in G\setminus A_i} v_i(g) )/{v_i(G)}
 \}$, with ratio $1$ for agents with $v_i(G)=0$ or $A_i=G$, and measure efficiency by
utilitarian welfare normalized by the ex-post optimum,
$\sum_i v_i(A_i)/\sum_{g\in G}\max_i v_i(g)$.
We use $n=8$, $m=40$, and $500$ independent trials for each of four instance
families: iid uniform values; dense binary-interest instances, where each agent
values each good with probability $0.6$ and positive values lie in $[0.8,1]$;
correlated values of the form $0.5U_g+0.5U_{i,g}$; and specialist instances,
where each good has one high-value specialist and all other agents have small
residual values. Error bars in Figure~\ref{fig:empirical} are 95\%
confidence intervals.

\begin{figure}[htb]
    \centering
    \includegraphics[width=\linewidth]{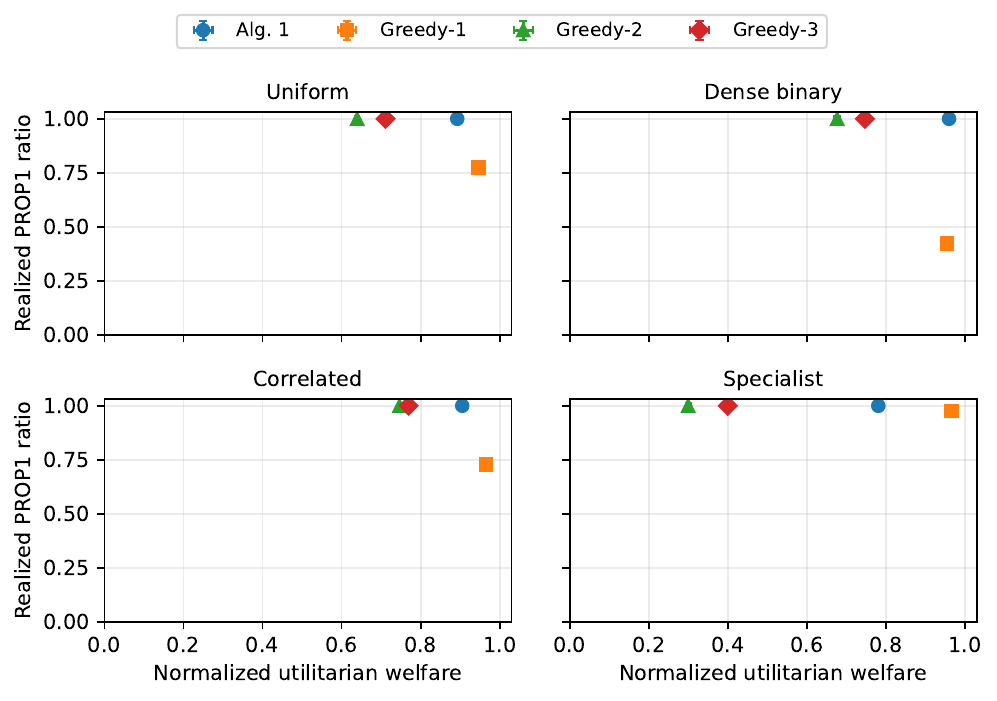}
    \caption{Fairness-efficiency tradeoff for Algorithm~\ref{alg:MIV} and the greedy baselines. Algorithm~\ref{alg:MIV} achieves PROP1 on all sampled random instances and dominates Greedy-2 and Greedy-3 in welfare on these distributions. Greedy-1 often has high welfare, but its PROP1 ratio can be substantially lower on competitive instances.}
    \label{fig:empirical}
\end{figure}

The results are consistent with our theory. Algorithm~\ref{alg:MIV} achieved
PROP1 on every sampled instance. Its normalized welfare was $0.892$, $0.960$,
$0.905$, and $0.781$ on the uniform, dense, correlated, and specialist families,
respectively. Greedy-1 achieved welfare $0.946$, $0.954$, $0.965$, and $0.967$
on the same families, but with realized PROP1 ratios $0.775$, $0.423$, $0.730$,
and $0.977$. Thus, Greedy-1 often attains high welfare, but its PROP1 ratio can fall substantially; in the uniform and correlated families this reflects a welfare–fairness tradeoff, while in the dense family Algorithm~\ref{alg:MIV} slightly improves both welfare and fairness. On easy specialist instances, Greedy-1 has higher welfare with little fairness loss. Greedy-2 and Greedy-3 were at or near PROP1
on these benign random instances, but their welfare was substantially lower:
$0.639/0.711$ on uniform instances, $0.677/0.747$ on dense instances,
$0.746/0.770$ on correlated instances, and $0.300/0.400$ on specialist
instances. We also stress-tested the explicit counterexample families from
Section~3. On prefixes of length roughly $500$, the targeted greedy baselines
have realized PROP1 ratios $0.008$, $0.008$, and $0.385$ for the Greedy-1,
Greedy-2, and Greedy-3 constructions, respectively, while Algorithm~1 obtains
$0.920$, $1.000$, and $1.000$ on the same prefixes.

\section{Conclusion}

In this work, we initiate a study of approximate proportionality in online fair division, a setting where many classic fairness notions admit strong impossibility results.
Our results present a clear boundary between impossibility under adversarial uncertainty, and what becomes achievable with modest relaxations.
On the negative side, we show that several natural greedy baseline algorithms can fail to guarantee any multiplicative approximation to PROP1 against an adaptive adversary, highlighting the challenges of online decision making under worst-case arrivals.
On the positive side, we identify two complementary paths to robust guarantees: (i) under a non-adaptive adversary, we show that the randomized allocation gives us meaningful PROP1 guarantees with high probability against non-adaptive adversaries; (ii) lightweight predictions such as estimates of maximum item values can admit robust PROP1 approximations against adaptive adversaries.
These results position PROP1 as a fairness notion that is both theoretically informative and algorithmically tractable in online settings, relative to EF1, MMS, and PROPX.

Several questions remain open. Most notably, can one obtain any nontrivial approximation to PROP1 \emph{without} predictive or structural assumptions? Beyond known impossibility results, no approximation bounds are currently established in this setting, and resolving this remains a key challenge.

\section*{Acknowledgments}
Nicholas Teh was supported by the Advanced Research + Invention Agency (ARIA) as part of the ASPAI project.

\section*{Impact Statement}
This paper presents work whose goal is to advance the theory of machine learning. There are many potential societal consequences of our work, none of which we feel must be specifically highlighted here.

\bibliography{abb,icml2026}
\bibliographystyle{icml2026}

\newpage
\appendix
\onecolumn

\begin{center}
    \Large\bf
    Appendix
\end{center}
\vspace{2mm}

\section{Further Related Work} \label{appsec:relatedwork_learning-augmented}

\paragraph{Learning-Augmented Algorithms.}
Since the seminal work of~\citet{lykouris2021competitive}, there has been a surge of interest in incorporating unreliable advice into algorithm design and analyzing performance as a function of advice quality across various areas of computer science.
This framework has been especially successful in online optimization, where the core challenge lies in handling uncertainty about future inputs.
In this context, advice can serve as a useful proxy for the unknown future.
Most previous works in this setting are in the context of online algorithms, e.g.\ for the ski-rental problem \cite{gollapudi2019online,wang2020online,angelopoulos2020online}, non-clairvoyant scheduling \cite{purohit2018improving}, scheduling \cite{lattanzi2020online,bamas2020learning,antoniadis2022novel}, augmenting classical data structures with predictions (e.g.\ indexing \cite{kraska2018case} and Bloom filters \cite{mitzenmacher2018model}), online selection and matching problems \cite{antoniadis2020secretary,dutting2021secretaries, choo2024online,choo2025learning}, online TSP \cite{bernardiniuniversal,gouleakis2023learning}, and a more general framework of online primal-dual algorithms \cite{bamas2020primal}.
However, there have been some recent applications to other areas, e.g.\ graph algorithms \cite{chen2022faster,dinitz2021faster}, causal learning \cite{choo2023active}, mechanism design \cite{gkatzelis2022improved,agrawal2022learning}, distribution learning \cite{bhattacharyya2025learning,bhattacharyya2025product}, and approval elicitation \cite{choo2026learning}.
For an overview of this growing area, we refer the reader to the survey by \citet{mitzenmacher2022algorithms}.\footnote{See also \url{https://algorithms-with-predictions.github.io/}.}

\paragraph{Online Voting.}
Beyond fair division, online and temporal models have also been studied in voting, which can be viewed as a \emph{public goods} setting since the selected outcomes are shared by all voters. For instance, the multiwinner voting literature contains a growing body of work on online and temporal candidate or committee selection \citep{lackner2020perpetual,alouf2022better,do2022onlineapproval,elkind2022temporalslot,Lackner2023proportional,chandak24proportional,elkind2024temporalelections,elkind2024temporalfairness,zech2024multiwinnerchange,elkind2025temporalchores,elkind2025verifying,phillips2025strengtheningproportionality,teh2026price}.

\section{Omitted Proofs from Section~\ref{sec:greedy}}

\subsection{Proof of Proposition~\ref{prop:greedy-strategy-3-fails}}

\begin{lemma}
    \label{lem:equal_alpha_v3}
    Suppose goods have been allocated up to and including timestep $t \in \mathbb{N}_{+}$. Assume that $v_j(G^{(t)}) > 0$ and $c_j \coloneqq c_j^{(t)} > 0$ for all $j \in [n]$, and that there is some $i \in [n]$ such that $\alpha_i^{(t)} < \alpha_j^{(t)}$ for all $j \ne i$. Then, for any nonempty subset $S \subseteq [n]$ not containing $i$, there exist a nonnegative integer $\tau$ and a sequence of $\tau$ goods $g_{t+1}, g_{t+2}, \cdots, g_{t+\tau}$ such that \emph{Greedy Strategy 3} must give:
    \begin{itemize}
        \item $c_j^{(t + \tau)} = c_j^{(t)}$ and $v_j(A_j^{(t+\tau)}) = v_j(A_j^{(t)})$ for all $j \in [n]$,
        \item $v_j(G^{(t+\tau)}) = v_j(G^{(t)})$ and $\alpha_j^{(t + \tau)} = \alpha_j^{(t)}$ for $j \in [n] \setminus S$, and
        \item $v_j(G^{(t + \tau)}) = v_j(G^{(t)}) + \frac{c_j}{2}\tau_j$ and $\alpha_i^{(t)} < \alpha_j^{(t+\tau)} \le \alpha_i^{(t)} (1 + \delta_j^{(t+\tau)})$ for $j \in S$, where
        $\tau_j \coloneqq \lceil \tfrac{2}{c_j} v_j(G^{(t)})(\alpha_j^{(t)}/\alpha_i^{(t)} - 1) \rceil -1$ and $\delta_j^{(t')} \coloneqq c_j / (2v_j(G^{(t')}))$ for $t' \ge t$.
    \end{itemize}
    In fact, we may take $\tau = \max_{j\in S} \tau_j$.
\end{lemma}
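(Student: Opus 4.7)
The plan is to construct $\tau$ goods that Greedy Strategy 3 is forced to allocate to agent $i$, while inflating $v_j(G)$ only for agents $j \in S$ and leaving every bundle value and every $c_{(\cdot)}$ untouched. Take $\tau = \max_{j \in S} \tau_j$, and for each $k \in [\tau]$ let $S_k := \{j \in S : k \le \tau_j\}$; define $g_{t+k}$ by $v_j(g_{t+k}) = c_j/2$ for $j \in S_k$ and $v_\ell(g_{t+k}) = 0$ otherwise. In particular $v_i(g_{t+k}) = 0$, and every positive value assigned to $g_{t+k}$ is strictly less than the corresponding $c_j$.

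The crucial step is verifying that Greedy Strategy 3 selects $i$ at every step $t+k$. Computing the ``would-be'' ratio for each agent: $i$'s is $\alpha_i^{(t)}$; any $\ell \notin S \cup \{i\}$ contributes $\alpha_\ell^{(t)}$, which exceeds $\alpha_i^{(t)}$ by the strict-minimum hypothesis; for $j \in S \setminus S_k$ the ratio equals $\alpha_j^{(t+\tau_j)}$, which is already $> \alpha_i^{(t)}$ by induction; and for $j \in S_k$ the ratio simplifies (using $v_j(g_{t+k}) = c_j/2 < c_j$ and $v_j(G^{(t+k-1)}) = v_j(G^{(t)}) + (k-1) c_j/2$) to $\alpha_j^{(t)} v_j(G^{(t)}) / (v_j(G^{(t)}) + k c_j/2)$. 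Strict inequality between this quantity and $\alpha_i^{(t)}$ is equivalent to $k < 2 v_j(G^{(t)}) (\alpha_j^{(t)}/\alpha_i^{(t)} - 1) / c_j$, which is guaranteed by $k \le \tau_j$ via $\lceil x \rceil - 1 < x$. Hence $i$ is the unique argmin and receives $g_{t+k}$.

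The listed conclusions then read themselves off. Each $g_{t+k}$ enters $A_i$ with zero $i$-value and carries $v_j$-value strictly below $c_j$ for $j \in S$, so every $v_{(\cdot)}(A_{(\cdot)})$ and every $c_{(\cdot)}$ is preserved; $v_j(G^{(t+\tau)}) = v_j(G^{(t)}) + \tfrac{c_j}{2}\tau_j$ for $j \in S$ and is unchanged for all other agents. The lower bound $\alpha_j^{(t+\tau)} > \alpha_i^{(t)}$ is exactly the selection condition above taken at $k = \tau_j$. The matching upper bound $\alpha_j^{(t+\tau)} \le \alpha_i^{(t)}(1+\delta_j^{(t+\tau)})$ comes from the complementary inequality $\tau_j + 1 \ge 2 v_j(G^{(t)})(\alpha_j^{(t)}/\alpha_i^{(t)} - 1)/c_j$ built into the ceiling, rearranged using $\delta_j^{(t+\tau)} = c_j/(2 v_j(G^{(t+\tau)}))$.

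The main obstacle is the bookkeeping around the ceiling in the definition of $\tau_j$: we need both a \emph{strict} inequality at $k = \tau_j$ (to keep $i$ the unique argmin and secure $\alpha_j^{(t+\tau)} > \alpha_i^{(t)}$) and the matching non-strict upper bound at the ``$k = \tau_j + 1$'' boundary. Both fall out of the pair $\lceil x \rceil - 1 < x \le \lceil x \rceil$ applied to $x = 2 v_j(G^{(t)})(\alpha_j^{(t)}/\alpha_i^{(t)} - 1)/c_j$ for each $j \in S$; once this is pinned down, the remainder of the argument is direct substitution.
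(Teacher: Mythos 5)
Your proof is correct and follows essentially the same strategy as the paper's: construct goods valued at $c_j/2$ by each $j \in S$ (until $j$'s ``budget'' $\tau_j$ is exhausted) and at $0$ by everyone else, then show Greedy Strategy 3 is forced to award them all to agent $i$. The one noteworthy difference is presentational but worthwhile: the paper defines the good sequence implicitly via the inductive stopping condition $\alpha_j^{(t')} > \alpha_i^{(t)}(1+\delta_j^{(t')})$, introduces the quantity $\tau_j'$, and must then prove both $\tau_j' < \infty$ and $\tau_j' = \tau_j$; your version defines the sequence directly in closed form using $\tau_j = \lceil x_j \rceil - 1$ and reads off both bounds from $\lceil x_j \rceil - 1 < x_j \le \lceil x_j \rceil$, which is tighter and avoids those auxiliary steps.
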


\begin{proof}
    For all $j$ and all $t' \ge t$, we inductively define the good $g_{t'+1}$ to have valuation
    \begin{equation*}
        v_j(g_{ t'+1}) = \begin{cases}
            \hfil c_j/2    & \text{if } j \in S \text{ and } \alpha_j^{(t')} > \alpha_i^{(t)} (1 +  \delta_{j}^{(t')}), \\
            \hfil  0 & \text{otherwise}.
        \end{cases}
    \end{equation*}
    First, notice that $v_j(g_{t'+1}) < c_j = c_j^{(t)}$ for all $t' \ge t$, so $c_j^{(t')} = c_j$ by definition. Also, because $v_j(g_{t'+1}) = 0$ for all $t' \ge t$ and all $j \in [n] \setminus S$, it follows that $v_j(A_j^{(t')}) = v_j(A_j^{(t)})$ and $v_i(G^{(t')}) = v_i(G^{(t)})$ and $\alpha_j^{(t')} = \alpha_j^{(t)}$ for all such $t', j$. Moreover, observe that if $\alpha_j^{(t')} \le \alpha_i^{(t)} (1+\delta_{j}^{(t')})$ for some $j \in S$ and $t' \ge t$, then $\alpha_j^{(t'')} \le \alpha_i^{(t)} (1+\delta_{j}^{(t'')})$ for \emph{all} $t'' \ge t'$ by induction on $t''$, using that $v_j(g_{t''+1}) = 0$ implies $\alpha_j^{(t''+1)} = \alpha_j^{(t'')}$ and $\delta_{j}^{(t''+1)} = \delta_{j}^{(t'')}$ by definition.

    For this sequence of goods, we will show that whenever $t' \ge t$, the statement $P(t')$ that $\alpha_i^{(t)} < \alpha_j^{(t')}$ for $j \ne i$ and the statement $Q(t')$ that the good $g_{t'+1}$ is given to agent $i$ are always true. Note that $P(t)$ holds by assumption.

    The condition for $Q(t')$ to hold, i.e., for $g_{t'+1}$ to be allocated to agent $i$, is:
    \begin{equation} \label{eq:Qt_inequality_v3}
        \frac{v_i(A_i^{(t')}) + c_i}{v_i(G^{(t')}) + v_i(g_{t'+1})} < \frac{v_j(A_j^{(t')}) + c_j}{v_j(G^{(t')}) + v_j(g_{ t'+1 })}
    \end{equation}
    for all $j \ne i$. Noting that $v_i(g_{t'+1}) = 0$, this is equivalent to
    \begin{equation*}
        \frac{v_i(A_i^{(t')}) + c_i}{v_i(G^{(t')})} < \frac{v_j(A_j^{(t')}) + c_j}{v_j(G^{(t')})} \frac{v_j(G^{(t')})}{v_j(G^{(t')}) + v_j(g_{ t'+1 })},
    \end{equation*}
    i.e., $\alpha_i^{(t')} < \alpha_j^{(t')} / (1 + v_j(g_{ t'+1 }) / v_j(G^{(t')}))$ for all $j \ne i$. If $j \ne i$ is such that $v_j(g_{ t'+1 })=0$, then this simplifies to $\alpha_i^{(t)} < \alpha_j^{(t')}$ (given the observation that $\alpha_i^{(t)} = \alpha_i^{(t')}$), which will be true if $P(t')$ is. If instead $j \ne i$ is such that $v_j(g_{ t'+1 })=c_j/2$ (i.e., $j \in S$ and $\alpha_j^{(t')} > \alpha_i^{(t)} (1 +  \delta_{j}^{(t')})$), then the inequality instead simplifies to the true statement $\alpha_j^{(t')} > \alpha_i^{(t)} (1 + c_j/(2v_j(G^{(t')})))$ (noting $c_j/(2v_j(G^{(t')})) = \delta_j^{(t')}$). Thus $Q(t')$ follows from $P(t')$.

    Assuming $Q(t')$ holds, it means the inequality in~\eqref{eq:Qt_inequality_v3} holds for all $j \ne i$. As noted earlier, the LHS is simply $\alpha_i^{(t)}$, while the RHS is in fact $\alpha_j^{(t' + 1)}$ since $g_{t'+1}$ was indeed not allocated to agent $j$. Hence $P(t'+1)$ holds.

    Thus, $P(t')$ and $Q(t')$ always hold for all $t' \ge t$. In particular, the goods are always allocated to agent $i$ and we have $v_j(A_j^{(t')}) = v_j(A_j^{(t)})$ for all $j \in [n]$ (noting all goods have value $0$ to agent $i$).

    Fix $j \in S$. As noted earlier, for any $t' \ge t$, if $\alpha_j^{(t')} \le \alpha_i^{(t)} (1 +  \delta_{j}^{(t')})$ holds, then the same holds when we replace $t'$ by any $t'' \ge t'$. Let $\tau_j'$ be the smallest nonnegative integer (or $\infty$ if none exists) such that $\alpha_j^{(t + {\tau_j'})} \le \alpha_i^{(t)} (1 +  \delta_{j}^{(t + {\tau_j'})})$, so $\alpha_j^{(t')} > \alpha_i^{(t)} (1 +  \delta_{j}^{(t')})$ is true precisely when $t \le t' < t + \tau_j'$. This means $v_j(g_{t'+1}) = c_j/2$ for $t' \in \{t, t+1, \cdots, t+\tau_j'-1\}$\footnote{The set $\{t, t+1, \cdots, t+\tau_j'-1\}$ is the empty set when $\tau_j = 0$ and is the infinite set $\{t, t+1, t+2, \cdots\}$ when $\tau_j = \infty$.} and $v_j(g_{t'+1}) = 0$ otherwise. Thus, for any (finite) nonnegative integer $\sigma$, if we define $\sigma_j \coloneqq \min(\tau_j', \sigma)$, then the statement $\alpha_j^{(t + \sigma)} > \alpha_i^{(t)} (1 +  \delta_{j}^{(t + \sigma)})$ is equivalent to
    \[\frac{v_j(A_j^{(t)}) + c_j}{v_j(G^{(t)}) + \sigma_jc_j/2} > \alpha_i^{(t)}\left(1 + \frac{c_j}{2(v_j(G^{(t)}) + \sigma_jc_j/2)}\right).\]
    Multiplying both sides by $v_j(G^{(t)}) + \sigma_jc_j/2$ and using $v_j(A_j^{(t)}) + c_j = \alpha_j^{(t)} v_j(G^{(t)})$ gives the equivalent statement
    \[\alpha_j^{(t)} v_j(G^{(t)}) > \alpha_i^{(t)} (v_j(G^{(t)}) + \sigma_jc_j/2 + c_j/2),\]
    which simplifies to the equivalent
    \[\sigma_j < \tfrac{2}{c_j} v_j(G^{(t)})(\alpha_j^{(t)}/\alpha_i^{(t)} - 1) - 1.\]
    But the statement $\alpha_j^{(t + \sigma)} > \alpha_i^{(t)} (1 +  \delta_{j}^{(t + \sigma)})$ is also equivalent to $t \le t+\sigma < t + \tau_j'$, i.e., to $\sigma < \tau_j'$, which means we have the equivalence
    \[\sigma < \tau_j' \iff \min(\tau_j', \sigma) < \tfrac{2}{c_j} v_j(G^{(t)})(\alpha_j^{(t)}/\alpha_i^{(t)} - 1) - 1.\]
    In particular, this means $\tau_j' \ne \infty$, because otherwise the left side of the equivalence will always hold but the right side of the equivalence will fail for sufficiently large $\sigma$. Moreover, taking $\sigma = \tau_j'$ gives
    \[\tau_j' \ge \tfrac{2}{c_j} v_j(G^{(t)})(\alpha_j^{(t)}/\alpha_i^{(t)} - 1) - 1.\]
    If $\tau_j' > 0$, we may take $\sigma = \tau_j' - 1$ to get
    \[\tau_j' - 1 < \tfrac{2}{c_j} v_j(G^{(t)})(\alpha_j^{(t)}/\alpha_i^{(t)} - 1) - 1,\]
    from which we obtain
    \[\tau_j' = \lceil \tfrac{2}{c_j} v_j(G^{(t)})(\alpha_j^{(t)}/\alpha_i^{(t)} - 1) \rceil -1 \eqqcolon \tau_j.\]
    If $\tau_j' = 0$, this last formula is still true, because the earlier inequality derived from taking $\sigma = 0$ has already shown $\tfrac{2}{c_j} v_j(G^{(t)})(\alpha_j^{(t)}/\alpha_i^{(t)} - 1) \le 1$, but this quantity is clearly strictly positive since $\alpha_j^{(t)} > \alpha_i^{(t)}$.

    Thus, for any $j \in S$, we have $\alpha_j^{(t')} \le \alpha_i^{(t)} (1+\delta_j^{(t')})$ and $v_j(G^{(t')}) = v_j(G^{(t)}) + \frac{c_j}{2}\tau_j$ whenever $t' \ge t + \tau_j$. As such, the lemma holds if we take $\tau = \max_{j \in S} \tau_j$ and goods $g_{t+1}, \cdots, g_{t+\tau}$ as defined at the beginning of the proof.
\end{proof}

\begin{proposition}
    For $n \geq 2$ and any $\alpha > 0$, there exists a sequence of $m = m(n, \alpha)$ arriving goods such that \emph{Greedy Strategy 3} fails to produce an $\alpha$-\emph{PROP1} allocation.
\end{proposition}
\begin{proof}
    It suffices to prove that there is an infinite sequence of goods $g_1, g_2, \cdots$ such that $\liminf_{t\to \infty} \alpha^{(t)} = 0$, where $\alpha^{(t)} \coloneqq \min_{i \in [n]} \alpha_i^{(t)}$.

    We begin by setting $v_i(g_1) = 1$ for all $i \in [n]$. Without loss of generality, suppose the good $g_1$ is given to agent 1. Next, for $t \in \{2,3\}$, set $v_i(g_t) = 1$ for $i \in \{1,2\}$ and $v_i(g_2) = 0$ otherwise. A straightforward calculation shows that the good $g_2$ must be given to agent 2, and that $g_3$ must be given to either agent 1 or 2. Without loss of generality, suppose $g_3$ is given to agent 1. Under this setup, $1 = \alpha_1^{(3)} > \alpha_2^{(3)} = 2/3$, with $v_1(A_1^{(3)}) = 2$, $v_2(A_2^{(3)}) = 1$, $v_1(G^{(3)}) = v_2(G^{(3)}) = 3$ and $c_1^{(3)} = c_2^{(3)} = 1$. Also, $v_i(A_i^{(3)}) = 0$ and $\alpha_i^{(3)} = c_i^{(3)} = v_i(G^{(3)}) = 1$ for all $i \ge 3$.

    We now inductively define the goods $g_4, g_5, \cdots$ and an infinite sequence of time steps $3 = t_0 < t_1 < t_2 < \cdots$ such that for each nonnegative integer $k$,
    \begin{enumerate}
        \item $v_r(A_r^{(t_k)}) = v_r(A_r^{(t_0)})$, $c_r^{(t_k)} = c_r^{(t_0)}$, $v_r(G^{(t_k)}) = v_r(G^{(t_0)})$, and $\alpha_r^{(t_k)} = \alpha_r^{(t_0)}$ for all $r \ge 3$,
        \item $v_r(A_r^{(t_k)}) \le v_r(A_r^{(t_0)}) + kc_r^{(t_0)}$ and $c_r^{(t_k)} = c_r^{(t_0)}$ for all $r \in \{1,2\}$, and
        \item there is some $i \in \{1,2\}$ such that $\alpha_{i}^{(t_k)} < \alpha_r^{(t_k)}$ for all $r \ne i$. (So $\alpha^{(t_k)} = \alpha_i^{(t_k)}$.)
    \end{enumerate}
    These conditions are met for $k=0$, with $i = 2$.
    
    For $k \ge 0$, assume that we have defined $t_0, \cdots, t_{k}$ and goods up to and including $g_{t_k}$. The conditions of Lemma~\ref{lem:equal_alpha_v3} are met at time step $t_k$, so taking $S = \{3-i\}$\footnote{Note that $3-i$ is the element in $\{1,2\}$ different from $i$.} in the lemma produces a sequence of 
    \[\tau_k \coloneqq \lceil \tfrac{2}{c_{3-i}^{(t_k)}} v_{3-i}(G^{(t_k)})(\alpha_{3-i}^{(t_k)}/\alpha_i^{(t_k)} - 1) \rceil -1 \ge 0\]
    goods $g_{t_k+1}, g_{t_k + 2}, \cdots, g_{t_k + \tau_k}$ such that:
    \begin{itemize}
        \item $c_r^{(t_k + \tau_k)} = c_r^{(t_k)} = c_r^{(t_0)}$ and $v_r(A_r^{(t_k+\tau_k)}) = v_r(A_r^{(t_k)})$ for all $r \in [n]$,
        \item $v_r(G^{(t_k+\tau_k)}) = v_r(G^{(t_k)})$ and $\alpha_r^{(t_k + \tau_k)} = \alpha_r^{(t_k)}$ for $r \ne 3-i$, and
        \item $v_{3-i}(G^{(t_k + \tau_k)}) = v_{3-i}(G^{(t_k)}) + \frac{c_{3-i}^{(t_k)}}{2}\tau_k$ and $\alpha_i^{(t_k)} < \alpha_{3-i}^{(t_k+\tau_k)} \le \alpha_i^{(t_k)} (1 + \delta_{3-i}^{(t_k+\tau_k)})$, where $\delta_{3-i}^{(t_k+\tau_k)} \coloneqq c_{3-i}^{(t_k)} / (2v_{3-i}(G^{(t_k+\tau_k)}))$.
    \end{itemize}
    Observe that 
    \begin{align*}
        \phantom{.} v_{3-i}(G^{(t_k + \tau_k)})
        =&\phantom{.} v_{3-i}(G^{(t_k)}) + \tfrac{c_{3-i}^{(t_k)}}{2}\tau_k \\
        \ge&\phantom{.} v_{3-i}(G^{(t_k)}) + \tfrac{c_{3-i}^{(t_k)}}{2} \left(\tfrac{2v_{3-i}(G^{(t_k)})}{c_{3-i}^{(t_k)}} \left( \tfrac{\alpha_{3-i}^{(t_k)}}{\alpha_i^{(t_k)}} - 1\right) - 1\right) \\
        =&\phantom{.} \frac{v_{3-i}(G^{(t_k)})\alpha_{3-i}^{(t_k)}}{\alpha_i^{(t_k)}} - \frac{c_{3-i}^{(t_k)}}{2} \\
        =&\phantom{.} \frac{v_{3-i}(A_{3-i}^{(t_k)}) + c_{3-i}^{(t_0)}}{\alpha_i^{(t_k)}} - \frac{c_{3-i}^{(t_0)}}{2},
    \end{align*}
    and thus \begin{equation}\label{eqref:vG_intermediate_inequality}
        v_{r}(G^{(t_k + \tau_k)}) \ge \frac{v_{r}(A_{r}^{(t_k)}) + c_{r}^{(t_0)}}{\alpha_i^{(t_k)}} - \frac{c_{r}^{(t_0)}}{2}
    \end{equation}
    for both $r \in \{1,2\}$, noting that this is true for $r=i$ since the left-hand side is equal to the first term on the right-hand side.
    
    We can now define good $g_{t_k + \tau_k+1}$ to be of value $c_r^{(t_0)}$ to each agent $r \in \{1,2\}$ and value $0$ to the other agents. The greedy strategy must allocate the good to either agent 1 or 2, because the value of $\alpha_r^{(t_k + \tau_k + 1)}$ should agent $r$ not be allocated the good will be $\alpha_r^{(t_k + \tau_k)} = \alpha_r^{(t_k)} > \alpha_i^{(t_k)}$ for $r \ge 3$ but will be less than $\alpha_i^{(t_k + \tau_k)} = \alpha_i^{(t_k)}$ for $r = i$ (because the nonzero numerator in the calculation of the ratio remains the same but the denominator strictly increases).
    
    If the greedy strategy allocates $g_{t_k + \tau_k+1}$ to agent $3-i$ (rather than agent $i$), then we have
    \[\alpha_i^{(t_k + \tau_k + 1)} < \alpha_i^{(t_k + \tau_k)} = \alpha_i^{(t_k)} < \alpha_{3-i}^{(t_k + \tau_k)} \le  \alpha_{3-i}^{(t_k + \tau_k + 1)},\]
    giving us $\alpha^{(t_k+\tau_k+1)} = \alpha_i^{(t_k+\tau_k+1)} < \alpha^{(t_k)}$. If instead the greedy algorithm allocates $g_{t_k + \tau_k+1}$ to agent $i$ (rather than agent $3-i$), we will have
    \begin{align*}
        \phantom{.} \alpha_{3-i}^{(t_k + \tau_k + 1)} 
        =&\phantom{.} \tfrac{v_{3-i}(A_{3-i}^{(t_k + \tau_k)}) + c_{3-i}^{(t_k + \tau_k + 1)}}{v_{3-i}(G^{(t_k + \tau_k)}) + c_{3-i}^{(t_k + \tau_k + 1)}} \\
        =&\phantom{.} \tfrac{v_{3-i}(A_{3-i}^{(t_k + \tau_k)}) + c_{3-i}^{(t_0)}}{v_{3-i}(G^{(t_k + \tau_k)}) + c_{3-i}^{(t_0)}} \\
        =&\phantom{.} \alpha_{3-i}^{(t_k + \tau_k)} \tfrac{v_{3-i}(G^{(t_k + \tau_k)})}{v_{3-i}(G^{(t_k + \tau_k)}) + c_{3-i}^{(t_0)}} \\
        \le&\phantom{.} \alpha_i^{(t_k)} (1 + \delta_{3-i}^{(t_k+\tau_k)}) \tfrac{v_{3-i}(G^{(t_k + \tau_k)})}{v_{3-i}(G^{(t_k + \tau_k)}) + c_{3-i}^{(t_0)}} \\
        =&\phantom{.} \alpha_i^{(t_k)} \left(1 + \tfrac{c_{3-i}^{(t_0)}}{2v_{3-i}(G^{(t_k+\tau_k)})}\right) \tfrac{v_{3-i}(G^{(t_k + \tau_k)})}{v_{3-i}(G^{(t_k + \tau_k)}) + c_{3-i}^{(t_0)}} \\
        <&\phantom{.} \alpha_i^{(t_k)} \left(1 + \tfrac{c_{3-i}^{(t_0)}}{v_{3-i}(G^{(t_k+\tau_k)})}\right) \tfrac{v_{3-i}(G^{(t_k + \tau_k)})}{v_{3-i}(G^{(t_k + \tau_k)}) + c_{3-i}^{(t_0)}} \\
        =&\phantom{.} \alpha_i^{(t_k)} = \alpha_i^{(t_k + \tau_k)} \le \alpha_i^{(t_k + \tau + 1)},
    \end{align*}
    giving us $\alpha^{(t_k+\tau_k+1)} = \alpha_{3-i}^{(t_k+\tau_k+1)} < \alpha^{(t_k)}$. Note that by the inequality in \eqref{eqref:vG_intermediate_inequality},
    \begin{align}
        v_{r}(G^{(t_k + \tau_k + 1)}) &= v_{r}(G^{(t_k + \tau_k)}) + c_r^{(t_0)} 
        \ge \tfrac{v_{r}(A_{r}^{(t_k)}) + c_{r}^{(t_0)}}{\alpha_i^{(t_k)}} + \tfrac{c_{r}^{(t_0)}}{2} \label{eqref:vG_final_inequality}
    \end{align}
    for $r \in \{1,2\}$.

    Now take $t_{k+1} = t_k+\tau_k+1 > t_k$. We need to show that the three numbered conditions imposed earlier continue to hold for $k$ replaced with $k+1$. The first condition regarding the values of $v_r(A_r^{(t_{k+1})}), v_r(G^{(t_{k+1})}), c_r^{(t_{k+1})}, \alpha_r^{(t_{k+1})}$ when $r \ge 3$ can easily be verified since only goods of value $0$ are given to these agents. We just showed that the third condition holds, with the new value of $i$ being the agent (among 1 and 2) that did not receive good $g_{t_k + \tau_k+1}$. Finally, the second condition follows by induction, noting that none of the new goods have value to agent $r \in \{1,2\}$ exceeding $c_r^{(t_0)}$ and that the only allocated good of nonzero value is good $g_{t_k + \tau_k+1}$ of value $c_r^{(t_0)}$ to agent $r \in \{1,2\}$.

    This completes the construction of the infinite sequence of goods $g_1, g_2, g_3, \cdots$.

    We will now show that $\liminf_{t\to \infty} \alpha^{(t)} = 0$. To this end, it suffices to show that $\lim_{k \to \infty} (1/\alpha^{(t_k)}) = \infty$, for then $\lim_{k \to \infty} \alpha^{(t_k)} = 0$.

    Fix $k \ge 0$. Suppose that $i,j \in \{1,2\}$ satisfy $\alpha^{(t_k)} = \alpha_i^{(t_k)} < \alpha_{3-i}^{(t_k)}$ and $\alpha^{(t_{k+1})} = \alpha_j^{(t_{k+1})} < \alpha_{3-j}^{(t_{k+1})}$. Then
    \begin{align*}
        \tfrac{1}{\alpha^{(t_{k+1})}} &= \tfrac{v_j(G^{(t_{k+1})})}{v_j(A_j^{(t_{k+1})}) + c_j^{(t_0)}} \\
        &\ge \tfrac{\left(\frac{v_{j}(A_{j}^{(t_k)}) + c_{j}^{(t_0)}}{\alpha_i^{(t_k)}} + \frac{c_{j}^{(t_0)}}{2}\right)}{v_j(A_j^{(t_{k})}) + c_j^{(t_0)}} \\
        &= \tfrac{1}{\alpha^{(t_k)}} + \tfrac{c_j^{(t_0)} / 2}{v_j(A_j^{(t_{k})}) + c_j^{(t_0)}} \\
        &\ge \tfrac{1}{\alpha^{(t_k)}} + \tfrac{c_j^{(t_0)} / 2}{v_j(A_j^{(t_{0})}) + kc_j^{(t_0)} + c_j^{(t_0)}} \\
        &= \tfrac{1}{\alpha^{(t_k)}} + \tfrac{1}{2(v_j(A_j^{(t_{0})})/c_j^{(t_0)} + k + 1)}
    \end{align*}
    where we applied the inequality in \eqref{eqref:vG_final_inequality} at the second step. Taking $\lambda \coloneqq \max(v_1(A_1^{(t_{0})})/c_1^{(t_0)}, v_2(A_2^{(t_{0})})/c_2^{(t_0)})$ (which is $2$ in our case), we see that
    \[\frac{1}{\alpha^{(t_{k+1})}} - \frac{1}{\alpha^{(t_k)}} \ge \frac{1}{2(k+1+\lambda)}.\]
    The above holds for any $k \ge 0$, so a telescoping sum of these inequalities gives
    \[\frac{1}{\alpha^{(t_k)}} \ge \frac{1}{\alpha^{(t_0)}} + \sum_{s=1}^{k} \frac{1}{2(s+\lambda)}.\]
    The right-hand side tends to $\infty$ as $k \to \infty$, so $\lim_{k \to \infty} (1/\alpha^{(t_k)}) = \infty$, as desired.
\end{proof}

\section{Omitted Proofs from Section~\ref{sec:random}}
\subsection{Proof of Theorem~\ref{thm:random}}
Fix an agent $i \in [n]$.
    Let $\alpha := \frac{3}{32\log{(n/\delta)}}$. We will first show that for $g := \argmax_{g' \in G \setminus A_i}v_i(g)$,
    \begin{equation} \label{eqn:thm:random_1}
        \Pr\left[v_i(A_i \cup \{g\}) < \alpha  \cdot \frac{v_i(G)}{n} \right] \leq \frac{\delta}{n}.
    \end{equation}
    Once the above is proved, a union bound over all $n$ agents will imply that the probability the allocation is $\alpha$-PROP1 is at least $1-\delta$.
    
    Then, since there exists a $g \in G \setminus A_i$ such that $v_i(A_i \cup \{g\}) \geq  \max_{j \in [m]} v_i(g_j)$, if $\max_{j \in [m]} v_i(g_j) \geq\alpha \cdot \frac{v_i(G)}{n}$, we get that
    \begin{equation*}
        v_i(A_i \cup \{g\}) \geq  \max_{j \in [m]} v_i(g_j) \geq\alpha \cdot \frac{v_i(G)}{n}
    \end{equation*}
    and equivalently, $\Pr\left[v_i(A_i \cup \{g\}) < \alpha \cdot \frac{v_i(G)}{n}\right] = 0$, giving us (\ref{eqn:thm:random_1}) as desired.
    Thus, it suffices to consider the case when $\max_{j \in [m]} v_i(g_j) < \alpha \cdot \frac{v_i(G)}{n}$.

    Consider any agent $i \in [n]$. 
    Define $X_j := \mathbf{1}\{g_j \in A_i\}$ and $X'_j := 1 - X_j = \mathbf{1}\{g_j \notin A_i\}$.
    Then, we have that $X_1,\dots,X_m \sim \text{Bernoulli}(1/n)$ and $X'_1,\dots,X'_m \sim \text{Bernoulli} \left(\frac{n-1}{n} \right)$.
    Let $S_i = \sum_{j \in [m]} \left( v_i(g_j) \cdot X_j \right)$ be a random variable representing $v_i(A_i)$; and let $S'_i = \sum_{j \in [m]} \left( v_i(g_j) \cdot X'_j \right)$.
    Now, for each $i \in [n]$, $S_i = v_i(G) - S'_i$,
    \begin{align*}
    \Pr\left[v_i(A_i \cup \{g\}) < \frac{\alpha \cdot v_i(G)}{n} \right] 
    & \leq \Pr\left[v_i(A_i) < \frac{\alpha \cdot v_i(G)}{n} \right] \\
    &=  \Pr\left[S_i < \frac{\alpha \cdot v_i(G)}{n} \right] \\
    &= \Pr\left[S'_i > \frac{(n-1) \cdot v_i(G)}{n} + \frac{(1-\alpha) \cdot v_i(G)}{n} \right]
    \end{align*}

    We now prove several properties of the random variable $S'_i$, relating to its expectation, variance, and the individual terms $v_i(g_j) \cdot X'_j$ in the summation.

    \begin{lemma} \label{lem:random_properties}
        For $S'_i = \sum_{j \in [m]} \left( v_i(g_j) \cdot X'_j \right)$, the following properties always hold:
        \begin{enumerate}[(i)]
            \item $\mathbb{E}[S'_i] = \frac{(n-1)}{n} \cdot v_i(G)$;
            \item $\sigma^2(S'_i) \leq \frac{\alpha \cdot v_i(G)^2}{n^2}$; and
            \item $v_i(g_j) \cdot X'_j - \mathbb{E}[v_i(g_j) \cdot X'_j] \leq \frac{\alpha \cdot v_i(G)}{n}$ for all $j \in [m]$.
        \end{enumerate}
    \end{lemma}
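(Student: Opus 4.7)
The three claims are all direct calculations from the definition of $S'_i$ as a sum of independent Bernoulli-scaled terms, together with the case assumption that $\max_{j \in [m]} v_i(g_j) < \alpha \cdot v_i(G)/n$. My plan is to tackle each property in turn, keeping the arithmetic compact.

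For (i), I would simply invoke linearity of expectation: since $\mathbb{E}[X'_j] = \frac{n-1}{n}$, we have $\mathbb{E}[S'_i] = \sum_j v_i(g_j) \cdot \frac{n-1}{n} = \frac{n-1}{n} v_i(G)$. For (iii), note that $v_i(g_j) \cdot X'_j - \mathbb{E}[v_i(g_j)\cdot X'_j] = v_i(g_j)(X'_j - \frac{n-1}{n})$, which is maximized when $X'_j = 1$ and then equals $v_i(g_j)/n$. Applying the case assumption $v_i(g_j) < \alpha \cdot v_i(G)/n$ immediately gives $v_i(g_j)/n < \alpha \cdot v_i(G)/n^2 \le \alpha \cdot v_i(G)/n$, as required.

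The only property requiring slightly more care is (ii). By independence of the $X'_j$'s,
\begin{equation*}
\sigma^2(S'_i) \;=\; \sum_{j \in [m]} v_i(g_j)^2 \cdot \operatorname{Var}(X'_j) \;=\; \frac{n-1}{n^2} \sum_{j \in [m]} v_i(g_j)^2 \;\le\; \frac{1}{n} \sum_{j \in [m]} v_i(g_j)^2.
\end{equation*}
Then I would bound $\sum_j v_i(g_j)^2 \le \max_j v_i(g_j) \cdot \sum_j v_i(g_j) = \max_j v_i(g_j) \cdot v_i(G)$, and substitute the case assumption to obtain $\sum_j v_i(g_j)^2 < \alpha \cdot v_i(G)^2 / n$. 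Combining yields $\sigma^2(S'_i) \le \alpha \cdot v_i(G)^2 / n^2$, as claimed.

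None of these steps should pose a genuine obstacle; if anything, the subtle part is simply remembering to invoke the assumption $\max_j v_i(g_j) < \alpha \cdot v_i(G)/n$, which is what transfers the ``small items'' hypothesis into the variance and per-term bounds needed for the subsequent Bernstein inequality step in the main theorem.
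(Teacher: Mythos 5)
Your proof is correct and follows essentially the same route as the paper's: linearity of expectation for (i), the independence-based variance identity together with the small-items bound $v_i(g_j) < \alpha v_i(G)/n$ for (ii), and the pointwise worst-case $X'_j = 1$ for (iii). The only differences are cosmetic reorderings of the estimates (and your write-up avoids a couple of typographical slips that appear in the paper's display for (iii)).
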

    \begin{proof}
        We first prove (i). By linearity of expectation, we get that
        \begin{align*}
            \mathbb{E}[S'_i] & = \sum_{j \in [m]} \left( v_i(g_j) \cdot \mathbb{E}[X'_j] \right) 
            = \sum_{j \in [m]} v_i(g_j) \cdot \frac{n-1}{n} 
            = \frac{n-1}{n} \cdot v_i(G),
        \end{align*}
        where the last line follows from the fact that $\sum_{j \in [m]} v_i(g_j) = v_i(G)$.
    Next, we prove (ii).
    Since $v_i(g_j) \leq \alpha \cdot \frac{v_i(G)}{n}$ for all $j \in [m]$, we have that
    \begin{align*}
        \sigma^2(S'_i)  = \sum_{j \in [m]} \left( v_i(g_j) \cdot v_i(g_j) \cdot \sigma^2(X'_j) \right) 
        & \leq \sum_{j \in [m]} \left( v_i(g_j) \cdot \alpha \cdot v_i(G) \cdot \sigma^2(X'_j) \right) \\
        & = \sum_{j \in [m]} \left( v_i(g_j) \cdot \alpha \cdot \frac{v_i(G)}{n} \cdot \frac{n-1}{n^2} \right) \\
        & = \alpha \cdot \frac{v_i(G)}{n} \cdot  \frac{n-1}{n^2} \cdot \sum_{j \in [m]} v_i(g_j) \\
        & = \alpha \cdot \frac{v_i(G)}{n} \cdot  \frac{n-1}{n^2} \cdot v_i(G) \\
        & \leq \frac{\alpha \cdot v_i(G)^2}{n^2}.
    \end{align*}

    Finally, we prove (iii).
    For all $j \in [m]$, since $X'_j \leq 1$ and $v_i(g) \leq \alpha \cdot \frac{v_i(G)}{n}$, we have that
    \begin{equation*}
        v_i(g_j) \cdot X'_j - \mathbb{E}[v_i(g_j)\cdot X'_j]
        \le v_i(g_j)\cdot\left(1-\frac{n-1}{n}\right)
        = \frac{v_i(g_j)}{n} \le v_i(g_j) \le \alpha\cdot \frac{v_i(G)}{n}. \qedhere
    \end{equation*}
    \end{proof}

    We now state Bernstein's inequality, which we will use for the proof.
    \begin{lemma}[\citet{dubhashi2009concentration}; Theorem 1.4] \label{lem:bernstein}
        Let the random variables $X_1, \dots, X_n $ be independent with $ X_i - \mathbb{E}[X_i] \leq b $ for each $ i \in [n] $. 
        Also let $X := \sum_{i \in [n]} X_i $ and $\sigma^2 := \sum_{i \in [n]} \sigma_i^2$ be the variance of $X$. 
        Then, for any $t > 0$,
        \begin{equation*}
            \Pr[X > \mathbb{E}[X] + t] \leq \exp\left( -\frac{t^2}{2\sigma^2 + 2bt/3} \right).
        \end{equation*}
    \end{lemma}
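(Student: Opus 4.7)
The plan is to prove Bernstein's inequality via the standard Chernoff–Cramér method, starting with an exponential moment bound and then optimizing the parameter. I would first apply Markov's inequality to the random variable $e^{\lambda(X - \mathbb{E}[X])}$ for an arbitrary $\lambda > 0$, yielding
\begin{equation*}
\Pr[X > \mathbb{E}[X] + t] \;\leq\; e^{-\lambda t}\,\mathbb{E}\bigl[e^{\lambda(X - \mathbb{E}[X])}\bigr],
\end{equation*}
and then exploit the independence of the $X_i$ to factor the moment generating function as $\prod_{i} \mathbb{E}\bigl[e^{\lambda(X_i - \mathbb{E}[X_i])}\bigr]$.

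Next I would bound each centered MGF. Writing $Y_i := X_i - \mathbb{E}[X_i]$, so that $\mathbb{E}[Y_i] = 0$, $\mathbb{E}[Y_i^2] = \sigma_i^2$, and $Y_i \leq b$, the key analytic step is to establish
\begin{equation*}
\mathbb{E}\bigl[e^{\lambda Y_i}\bigr] \;\leq\; \exp\!\left(\frac{\lambda^2 \sigma_i^2/2}{1 - \lambda b/3}\right) \quad \text{for } 0 < \lambda < 3/b.
\end{equation*}
This follows from the pointwise inequality $e^x \leq 1 + x + \tfrac{x^2/2}{1 - x/3}$ valid for $x < 3$, applied with $x = \lambda Y_i$: when $Y_i > 0$, one bounds $\tfrac{1}{1 - \lambda Y_i/3} \leq \tfrac{1}{1 - \lambda b/3}$ directly, and when $Y_i \leq 0$, one uses that $\tfrac{1}{1-\lambda Y_i/3} \leq 1 \leq \tfrac{1}{1 - \lambda b/3}$. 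Taking expectations, using $\mathbb{E}[Y_i] = 0$, and then applying $1 + y \leq e^y$ gives the MGF bound.

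Combining the factorization with this bound over all $i$ and using $\sigma^2 = \sum_i \sigma_i^2$ gives
\begin{equation*}
\Pr[X > \mathbb{E}[X] + t] \;\leq\; \exp\!\left(\frac{\lambda^2 \sigma^2/2}{1 - \lambda b/3} - \lambda t\right).
\end{equation*}
The final step is to optimize the right-hand side over $\lambda \in (0, 3/b)$. Choosing $\lambda^\star = t/(\sigma^2 + bt/3)$ (which automatically satisfies $\lambda^\star b/3 < 1$), a direct substitution simplifies the exponent to exactly $-t^2/(2\sigma^2 + 2bt/3)$, yielding the stated bound.

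The main obstacle is the MGF estimate in the second step: the pointwise inequality $e^x \leq 1 + x + \tfrac{x^2/2}{1 - x/3}$ must be verified (e.g.\ via the Taylor expansion $e^x = \sum_{k \geq 0} x^k/k!$ together with $k! \geq 2 \cdot 3^{k-2}$ for $k \geq 2$), and the one-sided boundedness $Y_i \leq b$ rather than $|Y_i| \leq b$ must be handled correctly in both sign regimes when passing from pointwise to expectation. Once this is in place, everything else is routine algebra.
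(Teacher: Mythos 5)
The paper does not prove this lemma at all -- it is quoted verbatim as Theorem~1.4 of the cited concentration-of-measure text -- so there is no internal proof to compare against. Your Chernoff/Cram\'er derivation is the standard textbook proof of Bernstein's inequality and it is correct: the Markov step, the factorization by independence, the centered MGF bound $\mathbb{E}[e^{\lambda Y_i}] \le \exp\bigl(\tfrac{\lambda^2\sigma_i^2/2}{1-\lambda b/3}\bigr)$ for $0<\lambda<3/b$, and the choice $\lambda^\star = t/(\sigma^2+bt/3)$ all check out, and the substitution does give exactly the exponent $-t^2/(2\sigma^2+2bt/3)$.

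One small point of care in the step you yourself flag as the main obstacle: the verification of $e^x \le 1+x+\tfrac{x^2/2}{1-x/3}$ via termwise comparison of the Taylor series with $k!\ge 2\cdot 3^{k-2}$ is only valid for $x\in[0,3)$, since for $x<0$ the odd-degree terms are negative and the termwise domination breaks down. The inequality is nevertheless true on all of $(-\infty,3)$, but for the negative range you need a separate argument -- either a direct monotonicity/convexity check of $\tfrac{x^2/2}{1-x/3}-(e^x-1-x)$ on $x\le 0$, or the more common route of using that $g(x)=(e^x-1-x)/x^2$ is nondecreasing, so $e^{\lambda Y_i}\le 1+\lambda Y_i+(\lambda Y_i)^2 g(\lambda b)$ whenever $Y_i\le b$, after which the Taylor bound is applied only at the nonnegative argument $\lambda b$. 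With that repair (which also handles the one-sided boundedness in one stroke), your proof is complete.
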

    We can then apply Bernstein's inequality to $S'_i$. 
    By setting $t = \frac{(1-\alpha) \cdot v_i(G)}{n}$ and $b = \alpha \cdot \frac{v_i(G)}{n}$, we get
    \begin{align*}
    \Pr\left[S'_i > \frac{(n-1) \cdot v_i(G)}{n} + \frac{(1-\alpha) \cdot v_i(G)}{n} \right] 
    = \Pr\left[S'_i > \mathbb{E}[S'] +  t \right] 
    \leq \exp\left( -\frac{t^2}{2\sigma^2 + 2bt/3} \right),
    \end{align*}
    where the equality follows from property (i) of Lemma~\ref{lem:random_properties}.
    We also note that Bernstein's inequality is applicable due to property (iii) of Lemma~\ref{lem:random_properties}.
    
    Since our goal is to find an upper bound to our expression, it suffices to use a lower bound on the numerator $t^2$, and an upper bound for the denominator $2\sigma^2 + 2bt/3$.

    For the numerator $t^2$, since $\alpha  = \frac{3}{32\log{(n/\delta)}} < \frac{1}{2}$ for all $n\geq 2$, we have that 
    \begin{equation*}
        t^2 =  \frac{(1-\alpha)^2 \cdot v_i(G)^2}{n^2} \geq \frac{v_i(G)^2}{4n^2}
    \end{equation*}
    
    For the denominator $2\sigma^2 + 2bt/3$, by bounding our variance with property (ii) of Lemma~\ref{lem:random_properties}, we have that
    \begin{align*}
          2\sigma^2 + \frac{2bt}{3} &\leq \frac{2\alpha \cdot v_i(G)^2}{n^2} + \frac{2\alpha(1-\alpha) \cdot v_i(G)^2}{3n^2} 
          < \frac{2\alpha \cdot v_i(G)^2}{n^2} + \frac{2\alpha \cdot v_i(G)^2}{3n^2} 
           = \frac{8\alpha \cdot v_i(G)^2}{3n^2} ,
    \end{align*}
    where the second inequality follows from the fact that $\alpha >0$ and hence $1-\alpha < 1$. 
    Thus, we get that
    \begin{align*}
        \exp\left( -\frac{t^2}{2\sigma^2 + 2bt/3} \right) \leq  \exp\left( -\frac{\frac{v_i(G)^2}{4n^2}}{ \frac{8\alpha \cdot v_i(G)^2}{3n^2} } \right) 
        =  \exp\left( -\frac{3}{32\cdot\alpha} \right) 
        = \exp\left( - \ln(n/\delta)\right) = \frac{\delta}{n},
    \end{align*}
    giving us (\ref{eqn:thm:random_1}).
    Then, by summing the bound in (\ref{eqn:thm:random_1}) over the $n$ agents: the probability that any agent violates the $\alpha$-PROP1 condition is at most $n \cdot \frac{\delta}{n} = \delta$.
    Therefore, with probability at least $1-\delta$, for every agent $i \in [n]$ and $g := \max_{g' \in G \setminus A_i} v_i(g')$,
    \begin{equation*}
        v_i(A_i \cup \{g\}) \geq \alpha \cdot \frac{v_i(G)}{n},
    \end{equation*}
    and the allocation returned is $\alpha$-PROP1.

\subsection{Proof of \Cref{prop:rand_tight}}
Let
\begin{equation} \label{eq:def_m_tight}
    m := \left\lfloor \frac{\log\left(\frac{n}{2\delta}\right)}{\log\left(\frac{n}{n-1}\right)} \right\rfloor
    \quad\text{and}\quad
    \alpha  := \frac{2n}{m}.
\end{equation}

Consider the instance with $m$ goods $G=\{g_1,\dots,g_m\}$ where all agents have identical additive valuations
\begin{equation*}
    v_i(g_j)=1 \text{ for all } i\in[n] \text{ and } j\in[m].
\end{equation*}
Assume a non-adaptive adversary.
Run \textsc{Rand}, which allocates each good independently and uniformly at random to one of the $n$ agents. 
Let $\mathcal{A}=(A_1,\dots,A_n)$ denote the (random) final allocation.
For each agent $i\in[n]$, define the event
\begin{equation*}
    E_i := \{A_i=\varnothing\},
\end{equation*}
i.e., agent $i$ receives no goods. For a fixed agent $i$,
\begin{equation*}
    \Pr[E_i] = \left(1-\frac{1}{n}\right)^m,
\end{equation*}
and for distinct $i\neq j$,
\begin{equation*}
    \Pr[E_i\cap E_j] = \left(1-\frac{2}{n}\right)^m.
\end{equation*}
Let $p_1 := (1-\frac{1}{n})^m$ and $p_2 := (1-\frac{2}{n})^m$.
By the first two terms of Bonferroni inequalities,
\begin{equation} \label{eq:bonferroni}
    \Pr\left[\bigcup_{i=1}^n E_i\right]\ge \sum_{i=1}^n \Pr[E_i] - \sum_{1\le i<j\le n} \Pr[E_i\cap E_j] =
    n p_1 - \binom{n}{2} p_2.
\end{equation}
Moreover, since $(1-\frac{1}{n})^2 = 1-\frac{2}{n}+\frac{1}{n^2} \ge 1-\frac{2}{n}$, we have $p_2 \le p_1^2$. Plugging this into~\eqref{eq:bonferroni} gives us
\begin{equation} \label{eq:bonferroni2}
    \Pr\left[\bigcup_{i=1}^n E_i\right] \geq
    n p_1 - \binom{n}{2} p_1^2 = n p_1\left(1-\frac{(n-1)p_1}{2}\right).
\end{equation}
We now lower bound $p_1$. Define
\begin{equation*}
    m_0 := \frac{\log\left(\frac{n}{2\delta}\right)}{\log\left(\frac{n}{n-1}\right)}.
\end{equation*}
By definition, $m=\lfloor m_0\rfloor$, so $m\le m_0 < m+1$.
Since $1-\frac{1}{n}\in(0,1)$, the function $f(x) =  (1-\frac{1}{n})^x$ is decreasing. Thus, 
\begin{equation*}
    p_1 = \left(1-\frac{1}{n}\right)^m \ \ge\ \left(1-\frac{1}{n}\right)^{m_0}.
\end{equation*}
Using $1-\frac{1}{n}=\frac{n-1}{n}=\left(\frac{n}{n-1}\right)^{-1}$, we get
\begin{equation*}
    \left(1-\frac{1}{n}\right)^{m_0}
    = \left(\frac{n}{n-1}\right)^{-m_0}
    = \exp\left(-\log\!\left(\frac{n}{n-1}\right)\cdot \frac{\log(\frac{n}{2\delta})}{\log(\frac{n}{n-1})}\right)
    = \exp\left(-\log\left(\frac{n}{2\delta}\right)\right)
    = \frac{2\delta}{n}.
\end{equation*}
Therefore,
\begin{equation}\label{eq:p1_lower}
    p_1 \ge \frac{2\delta}{n}.
\end{equation}
Similarly, since $m_0 < m+1$,
\begin{equation*}
    \left(1-\frac{1}{n}\right)^{m+1} < \left(1-\frac{1}{n}\right)^{m_0} = \frac{2\delta}{n},
\end{equation*}
and thus
\begin{equation} \label{eq:p1_upper}
    p_1 = \left(1-\frac{1}{n}\right)^m
    = \frac{\left(1-\frac{1}{n}\right)^{m+1}}{1-\frac{1}{n}}
    < \frac{\frac{2\delta}{n}}{\frac{n-1}{n}}
    = \frac{2\delta}{n-1}.
\end{equation}
Plugging~\eqref{eq:p1_lower} and~\eqref{eq:p1_upper} into~\eqref{eq:bonferroni2}, we get
\begin{equation*}
    \Pr\left[\bigcup_{i=1}^n E_i\right] 
    \ge n\cdot \frac{2\delta}{n}\left(1-\frac{(n-1)}{2}\cdot \frac{2\delta}{n-1}\right) = 2\delta(1-\delta)
    \ge \delta,
\end{equation*}
where the last inequality uses $\delta\le 1/2$.

Now condition on the event $\bigcup_{i=1}^n E_i$: there exists an agent $i$ with $A_i=\varnothing$.
For that agent, $v_i(G)=m$ and for any good $g\in G\setminus A_i$ we have
$v_i(A_i\cup\{g\}) = 1$. 
But by the choice of $\alpha$ in~\eqref{eq:def_m_tight},
\begin{equation*}
    \alpha \cdot \frac{v_i(G)}{n} = \frac{2n}{m}\cdot \frac{m}{n} =2.
\end{equation*}
Hence $v_i(A_i\cup\{g\})=1 < 2 = \alpha\cdot \frac{v_i(G)}{n}$ for all $g\in G\setminus A_i$,
so the allocation is not $\alpha$-PROP1. 
Therefore, the probability that \textsc{Rand} outputs an allocation that is not $\alpha$-PROP1 is at least
\begin{equation*}
    \Pr\left[\bigcup_{i=1}^n E_i\right] \ge \delta.
\end{equation*}
Finally, we show $\alpha=\Theta(1/\log(n/\delta))$. 
Since for $x\in(0,1)$,
$x \le -\log(1-x) \le \frac{x}{1-x}$, plugging in $x=\frac{1}{n}$ gives us
\begin{equation*}
    \frac{1}{n} \le \log\left(\frac{n}{n-1}\right) \le \frac{1}{n-1}.
\end{equation*}
Thus, $m_0=\frac{\log(n/(2\delta))}{\log(n/(n-1))}$ satisfies
\begin{equation*}
    (n-1)\log\left(\frac{n}{2\delta}\right) \le m_0 \le n\log\left(\frac{n}{2\delta}\right),
\end{equation*}
and therefore $m=\lfloor m_0\rfloor=\Theta(n\log(n/\delta))$ (since $\delta\le 1/2$ implies
$\log(n/(2\delta))=\Theta(\log(n/\delta))$). Consequently,
\begin{equation*}
    \alpha=\frac{2n}{m}=\Theta\left(\frac{1}{\log(n/\delta)}\right),
\end{equation*}
as claimed.

\subsection{Proof of \Cref{thm:rand_smallgoods}}
Fix an agent $i\in[n]$. 
If $v_i(G)=0$, then $v_i(A_i)=0=(1-\varepsilon)\cdot v_i(G)/n$ always, and the claim holds trivially for this agent. 
Thus, assume $v_i(G)>0$.

Under \textsc{Rand}, each good is allocated independently and uniformly at random to one of the $n$ agents.
For each $j\in[m]$, let $X_{j}\sim\mathrm{Bernoulli}(1/n)$ indicate that $g_j$ is allocated to agent $i$.
Define $X'_j := 1-X_j$, so $X'_j\sim\mathrm{Bernoulli}((n-1)/n)$ indicates that $g_j$ is \emph{not} allocated to $i$.
Define
\begin{equation*}
    S_i := \sum_{j=1}^m v_i(g_j)X_j = v_i(A_i)
    \quad\text{and}\quad
    S'_i := \sum_{j=1}^m v_i(g_j)X'_j.
\end{equation*}
Since $X'_j = 1-X_j$, we have that 
\begin{equation} \label{eq:Si_complement}
    S'_i = v_i(G) - S_i.
\end{equation}
Moreover,
\begin{equation*}
    \mathbb{E}[S'_i] = \sum_{j=1}^m v_i(g_j) \mathbb{E}[X'_j]
    = \frac{n-1}{n}\sum_{j=1}^m v_i(g_j)
    = \frac{n-1}{n} v_i(G).
\end{equation*}

We will upper bound the probability that agent $i$ receives less than a $(1-\varepsilon)$ fraction of their proportional share:
\begin{equation*}
    \Pr\left[S_i < (1-\varepsilon)\cdot \frac{v_i(G)}{n}\right].
\end{equation*}
Using~\eqref{eq:Si_complement}, this event is equivalent to
\begin{equation*}
    S'_i > v_i(G) - (1-\varepsilon)\cdot \frac{v_i(G)}{n}
    = \frac{n-1+\varepsilon}{n}v_i(G)
    = \mathbb{E}[S'_i] + \varepsilon\cdot \frac{v_i(G)}{n}.
\end{equation*}
Thus, letting $t := \varepsilon\cdot \frac{v_i(G)}{n}$,
\begin{equation} \label{eq:reduce_to_upper_tail}
    \Pr\left[S_i < (1-\varepsilon)\cdot \frac{v_i(G)}{n}\right]
    = \Pr\left[S'_i > \mathbb{E}[S'_i] + t\right].
\end{equation}
Because the adversary is non-adaptive, the values $v_i(g_j)$ are fixed in advance (deterministic).
\textsc{Rand} assigns each good independently, so the random variables $Y_j = v_i(g_j)X'_j$ are independent, allowing us to apply Bernstein’s inequality.
Thus, we now apply Bernstein's inequality (refer to \Cref{lem:bernstein}) to the sum $S'_i=\sum_{j=1}^m Y_j$ where $Y_j:=v_i(g_j)X'_j$.
First, each summand satisfies
\begin{equation*}
    Y_j - \mathbb{E}[Y_j] \le Y_j \le v_i(g_j) \le \max_{g\in G} v_i(g),
\end{equation*}
so Bernstein applies with $b:=\max_{g\in G} v_i(g)$.
Second, since the $X'_j$ are independent Bernoulli$((n-1)/n)$,
\begin{equation*}
    \mathrm{Var}(Y_j) = v_i(g_j)^2 \mathrm{Var}(X'_j)
= v_i(g_j)^2 \cdot \frac{n-1}{n}\cdot\frac{1}{n}
= v_i(g_j)^2\cdot \frac{n-1}{n^2}
\le \frac{v_i(g_j)^2}{n}.
\end{equation*}
By independence,
\begin{equation*}
    \sigma^2 := \mathrm{Var}(S'_i) = \sum_{j=1}^m \mathrm{Var}(Y_j)
    \le \frac{1}{n}\sum_{j=1}^m v_i(g_j)^2
    \le \frac{\max_{g\in G} v_i(g)}{n}\sum_{j=1}^m v_i(g_j)
    = \frac{\max_{g\in G} v_i(g) \cdot v_i(G)}{n},
\end{equation*}
where we used $v_i(g_j)^2 \le \max_{g\in G} v_i(g) \cdot v_i(g_j)$ (and $v_i(g_j)\ge 0$).

Applying Bernstein's inequality with this $b$ and $\sigma^2$ gives us
\begin{equation*}
    \Pr\left[S'_i > \mathbb{E}[S'_i] + t\right] \leq
    \exp\left(-\frac{t^2}{2\sigma^2 + 2bt/3} \right).
\end{equation*}
Substituting $t=\varepsilon\cdot \frac{v_i(G)}{n}$ and using $\sigma^2 \le \frac{\max_{g\in G} v_i(g) \cdot v_i(G)}{n}$ and $b=\max_{g\in G} v_i(g)$,
\begin{equation*}
    2\sigma^2 + \frac{2}{3}bt
    \le 2\cdot \frac{\max_{g\in G} v_i(g) \cdot v_i(G)}{n}
    + \frac{2}{3} \max_{g\in G} v_i(g)\cdot \varepsilon\cdot \frac{v_i(G)}{n}
    \le \frac{8}{3}\cdot \frac{\max_{g\in G} v_i(g) \cdot v_i(G)}{n},
\end{equation*}
where we used $\varepsilon\le 1$ in the last inequality. Therefore,
\begin{equation*}
    \Pr\left[S'_i > \mathbb{E}[S'_i] + t\right]
    \le \exp\left(
    -\frac{\varepsilon^2 \cdot \frac{v_i(G)^2}{n^2}}{\frac{8}{3}\cdot \frac{\max_{g\in G} v_i(g) \cdot v_i(G)}{n}}
    \right)
    = \exp\left(
    -\frac{3\varepsilon^2}{8}\cdot \frac{\frac{v_i(G)}{n}}{\max_{g\in G} v_i(g)}
    \right).
\end{equation*}
Since $\max_{g \in G} v_i(g) \leq \frac{3 \varepsilon^2}{8 \log(n/\delta)} \cdot \frac{v_i(G)}{n}$,
\begin{equation*}
    \frac{3\varepsilon^2}{8}\cdot \frac{\frac{v_i(G)}{n}}{\max_{g\in G} v_i(g)} \ge \log\left(\frac{n}{\delta}\right),
\end{equation*}
so
\begin{equation*}
    \Pr\left[S'_i > \mathbb{E}[S'_i] + t\right] \le \exp\left(-\log\left(\frac{n}{\delta}\right)\right) = \frac{\delta}{n}.
\end{equation*}
Combining with~\eqref{eq:reduce_to_upper_tail}, we get
\begin{equation*}
    \Pr\left[v_i(A_i) < (1-\varepsilon)\cdot \frac{v_i(G)}{n}\right] \le \frac{\delta}{n}.
\end{equation*}
A union bound over all agents $i\in[n]$ implies that with probability at least $1-\delta$,
\begin{equation} \label{eq:near_proportional_stronger}
    v_i(A_i) \ge (1-\varepsilon)\cdot \frac{v_i(G)}{n} \quad\text{for all } i\in[n].
\end{equation}

Finally,~\eqref{eq:near_proportional_stronger} implies $(1-\varepsilon)$-\textup{PROP1}:
for any agent $i$, if $A_i=G$ we are done; otherwise pick any $g\in G\setminus A_i$, and
\begin{equation*}
    v_i(A_i\cup\{g\}) \ \ge\ v_i(A_i) \ \ge\ (1-\varepsilon)\cdot \frac{v_i(G)}{n}.
\end{equation*}
Thus the allocation is $(1-\varepsilon)$-PROP1 with probability at least $1-\delta$.

\section{Omitted Proofs from Section~\ref{sec:MIV}}

    \subsection{Proof of Proposition~\ref{prop:impossibility_ef1_mms}}

    Suppose for a contradiction that there exists an online algorithm that achieves an approximation ratio of $\alpha$ for EF1, MMS, or PROPX. Let $v_i(g_1) = 1$ for all $i \in [n]$, and assume without loss of generality that agent~$1$ is allocated $g_1$.
    
    We now define an instance with $m$ goods. 
    Let $v_i(g_1) = 1$ for all $i \in [n]$, and for each $t \in \{2,\dots,m\}$, the valuations are as follows:

    \begin{equation*}
        v_1(g_t) = 
        \begin{cases}
            1 \quad \text{if $|A_1^t| = 1$ or $t \leq n$} \\
            0 \quad \text{otherwise}
        \end{cases}
    \end{equation*}
    \begin{equation*}
        v_i(g_t) = \begin{cases}
            \varepsilon K^{t-2}\quad \text{if $|A_1^t| = 1$ or $t \leq n$} \\
             0 \quad \text{otherwise}
        \end{cases}
    \end{equation*}
    where $m = \lceil \frac{n}{\alpha}\rceil+n+2$, $K = \lceil\frac{3}{\alpha}\rceil$, and $\varepsilon = \frac{1}{K^{m-2}}$. 
    Since every good has value at most 1 for all agents, this construction is consistent with the MIV prediction with $\mathbf{p} = (1,\dots,1)$.

    We split our analysis into two cases.
    \begin{description}
        \item[Case 1: $|A_1| = 1$.] 
        Then, by the assumption that agent~$1$ is allocated $g_1$, agents' valuations for incoming goods must be as follows:
    \begin{center}
    \begin{tabular}{ c | c c c c c c}
 			$\mathbf{v}$ & $g_1$          & $g_2$ & $g_3$          & $\dots$ & $g_{m}$  \\
 			\hline
 			$1$          & $\circled{1}$ & $1$ & $1$      & $\dots$ & $1$      \\
 			$2$          & $1$          & $\varepsilon$  & $\varepsilon K $     & $\dots$ & $\varepsilon K^{m-2}$ \\
                  $\vdots$        &   $\vdots$          & $\vdots$  &  $\vdots $ & $\vdots$ &$\vdots$ \\
            $n$          & $1$          & $\varepsilon$  & $\varepsilon K $    & $\dots$ & $\varepsilon K^{m-2}$ \\
            
 		\end{tabular}
 	\end{center}
    Note that since
    \begin{equation} \label{eqn:impossibility_m}
        m = \left\lceil \frac{n}{\alpha} \right\rceil + n + 2 \geq \frac{n}{\alpha} + n + 2,
    \end{equation} 
    we have that
    \begin{equation} \label{eqn:impossibility_alpha}
        \alpha \geq \frac{n}{m-n-2}> \frac{n}{m-n-1}.
    \end{equation}
    Furthermore, as $|A_1|=1$, we get that $v_1(A_1) = v_1(g_1) = 1$.
    We first show a contradiction to $\alpha$-EF1. 
    Since $|A_1| = 1$, by the pigeonhole principle, there must exist an agent $i \neq 1$ that receives at least $\lceil\frac{m-1}{n-1}\rceil \geq \frac{m-1}{n-1} > \frac{m-1}{n}$ goods.
    Thus, there exists a good $g \in A_i$ such that
    \begin{equation*}
        v_1(A_i \setminus\{g\})\geq \frac{m-1}{n} - 1 = \frac{m-n-1}{n}.
    \end{equation*} 
    Consequently, we get that
    \begin{equation*}
        \frac{v_1(A_1)}{v_1(A_i\setminus\{g\})} \leq \frac{n}{m-n-1} < \alpha,
    \end{equation*} 
    where the rightmost inequality follows from (\ref{eqn:impossibility_alpha}), giving us
    \begin{equation*}
        v_1(A_1) < \alpha \cdot v_1(A_i \setminus \{g\}),
    \end{equation*}
    a contradiction.
        
    Next, we show a contradiction to $\alpha$-MMS. 
    Note that 
    \begin{equation*}
        \mms_1 = \left\lfloor \frac{m}{n} \right\rfloor \geq \frac{m}{n} -1 = \frac{m-n}{n}.
    \end{equation*}
    Then, 
    \begin{equation*}
        \frac{v_1(A_1)}{\mms_1} \leq \frac{n}{m-n} < \alpha,
    \end{equation*} 
    where again, the rightmost inequality follows from (\ref{eqn:impossibility_alpha}), giving us
    \begin{equation*}
        v_1(A_1) < \alpha \cdot \mms_1,
    \end{equation*}
    a contradiction.

    Finally, we show a contradiction to $\alpha$-PROPX.
    From (\ref{eqn:impossibility_m}) and the fact that $\alpha \leq 1$, we get that $m \geq 2n+2$.

    Observe that for $g \in \argmin_{g' \in G \setminus A_1} v_1(g')$, we have that
    \begin{equation*}
        v_1(A_1 \cup \{g\}) = 1+1=2.
    \end{equation*}
    Together with the fact that $v_1(G) = m$, we obtain
    \begin{align*}
        \frac{n}{v_1(G)} \cdot v_1(A_1 \cup \{g\}) = \frac{2n}{m} & \leq \frac{2n}{2n+2} \\
        & = \frac{n}{n+1} \\
        & \leq \frac{n}{m-n-1} < \alpha,
    \end{align*}
    where the last inequality follows from (\ref{eqn:impossibility_alpha}). This gives us $v_i(A_1 \cup \{g\}) < \alpha \cdot \frac{v_1(G)}{n}$, a contradiction.

    \item[Case 2: $|A_1| > 0$.]
        Let $g_t$ be the first good after $g_1$ that agent~$1$ receives. 
        We further split our analysis into two sub-cases: (a)  $t \leq n$ and (b) $t \geq n+1$.

        \begin{description}
            \item[Case 2(a): $t \leq n$.]
            Then agents' valuations for incoming goods must be as follows:
            \begin{center}
            \renewcommand{\arraystretch}{1.3}
            \begin{tabular}{c|cccccccc}
                $\mathbf{v}$ 
                & $g_1$ & $g_2$ & $\cdots$ & $g_t$ & $\cdots$ & $g_{n}$  \\
                \hline
                $1$ 
                & $\circled{1}$ & $1$ & $\cdots$& $\circled{1}$ & $\cdots$ & $1$\\
                $2$ 
                & $1$ & $\varepsilon$ & $\cdots$ & $\varepsilon K^{t-2}$ & $\cdots$ & $\varepsilon K^{n-2}$ \\
                $\vdots$ 
                & $\vdots$ & $\vdots$ & $\vdots$ & $\vdots$ & $\vdots$ & $\vdots$ \\
                $n$ 
                & $1$ & $\varepsilon$ & $\cdots$& $\varepsilon K^{t-2}$ & $\cdots$ & $\varepsilon K^{n-2}$  \\
            \end{tabular}
            \end{center}
            and if $n < m$, $v_i(g_j) = 0$ for all $i \in [n]$ and $j \in \{n+1,\dots, m\}$.
            Since agent~$1$ receives two goods out of the first $n$ goods, by the pigeonhole principle, there necessarily exists an agent $i \in [n] \setminus \{1\}$ that does not receive any good out of the first $n$ goods, and $v_i(A_i) = 0$. 
            Also, as $v_i(A_1\setminus\{g\}) \geq \varepsilon K^{t-2}$ (for $g := \argmax_{g \in A_1} v_i(g)$), we have that 
            \begin{equation*}
                \frac{v_i(A_i)}{v_i(A_1\setminus\{g\})} \leq \frac{0}{\varepsilon K^{t-2}}< \alpha,
            \end{equation*} 
            giving us $v_i(A_i) < \alpha \cdot v_i(A_1 \setminus \{g\})$, a contradiction to $\alpha$-EF1.
            
            Furthermore, as $\mms_i = v_i(g_2) = \varepsilon$, we have that
        
           \begin{equation*}
                \frac{v_i(A_i)}{\mms_i} \leq \frac{0}{\varepsilon} < \alpha,
            \end{equation*}
            giving us $v_i(A_i) < \alpha \cdot \mms_i$, a contradiction to $\alpha$-MMS.

            Next, we prove a contradiction to $\alpha$-PROPX.
            Again let $i$ be the agent that does not receive any good out of the first $n$ goods (as reasoned above).
            First observe that
            \begin{equation*}
                v_i(G) = 1 + \sum_{j=2}^n \varepsilon K^{j-2} = 1+\frac{\varepsilon (K^{n-1} - 1)}{K-1}.
            \end{equation*}
            Also, since $K = \lceil \frac{3}{\alpha} \rceil$, it implies $\alpha > \frac{3}{K}$.
            Since $\alpha \leq 1$, it also means $K > 3$. Consequently, we have that
            \begin{equation*}
                3 K^{n-3} \geq 3 \cdot 3^{n-3} = 3^{n-2} \geq n,
            \end{equation*}
            giving us $n \leq 3 K^{n-3}$, which is equivalent to
            \begin{equation*}
                \frac{n}{K^{n-2}} \leq \frac{3}{K} < \alpha.
            \end{equation*}
            
            Moreover, for $g:= \argmin_{g' \in G \setminus A_i} v_i(g')$, we have $v_i(A_i\cup \{g\}) = 0 + v_i(g_2) = \varepsilon$.

            Combining all the facts above, we get that 
            \begin{align*}
                \frac{n \cdot v_i(A_i \cup \{g\})}{v_i(G)} & = \frac{n \cdot \varepsilon}{1+\frac{\varepsilon (K^{n-1}-1)}{K-1}} \\
                & \leq \frac{n \cdot \varepsilon}{\frac{\varepsilon (K^{n-1}-1)}{K-1}} \\
                & = \frac{n(K-1)}{K^{n-1}-1} \\
                & \leq \frac{n(K-1)}{K^{n-2}(K-1)} \\
                & = \frac{n}{K^{n-2}} \\
                & < \alpha,
            \end{align*}
            giving us $v_i(A_i \cup \{g\}) < \alpha \cdot \frac{v_i(G)}{n}$, a contradiction.

            \item[Case 2(b): $t > n+1$.]
            Then agents' valuations for incoming goods must be as follows:
            \begin{center}
            \renewcommand{\arraystretch}{1.3}
            \begin{tabular}{c|ccccccc}
                $\mathbf{v}$ 
                & $g_1$ & $g_2$ & $\cdots$ & $g_t$ \\
                \hline
                $1$ 
                & $\circled{1}$ & $1$ &  $\cdots$ & $\circled{1}$\\
                $2$ 
                & $1$ & $\varepsilon$ & $\cdots$ & $\varepsilon K^{t-2}$\\
                $\vdots$ & $\vdots$ & $\vdots$ & $\vdots$ & $\vdots$ \\
                $n$ & $1$ & $\varepsilon$ & $\cdots$ & $\varepsilon K^{t-2}$\\
            \end{tabular}
            \end{center}
            and if $t < m$, $v_i(g_j) = 0$ for all $i \in [n]$ and $j \in \{t+1,\dots,m\}$.
            
            Consider the set $\{g_{t+2-n}, \dots, g_t\}$. 
            Since the set contains $n-1$ goods and agent $1$ is allocated $g_t$, by the pigeonhole principle, there exists an agent $i \in [n] \setminus \{1\}$ that does not receive any good in $\{g_{t+2-n}, \dots, g_t\}$. Hence, 
            \begin{align*}
            v_i(A_i) \leq \sum_{j = 2}^{t+1-n} \varepsilon \cdot K^{j-2} 
            =  \sum_{j' = 0}^{t-1-n} \varepsilon \cdot K^{j'}  
            \leq \frac{\varepsilon (K^{t-n}-1)}{K-1} 
            \leq \frac{2\varepsilon (K^{t-n}-1)}{K} 
            \leq \frac{2}{K} \left(\varepsilon K^{t-n} \right).
            \end{align*}
            Moreover, since $K = \lceil\frac{3}{\alpha}\rceil \geq \frac{3}{\alpha}$ and $\alpha\leq 1$, we know that $K \geq 2$, $\frac{1}{K-1} \leq \frac{2}{K}$, and $\alpha \geq \frac{3}{K} > \frac{2}{K}$.
            
            We now show a contradiction to $\alpha$-EF1. Since $v_i(g_1) \geq v_i(g_t)$, we have that 
            \begin{equation*}
                v_i(A_1 \setminus \{g_1\}) = v_i(g_t) = \varepsilon K^{t-2} \geq \varepsilon K^{t-n},
            \end{equation*}
            where the last inequality follows from the fact that $n \geq 2$.
            This means there exists a $g \in A_1$ whereby
            \begin{equation*}
                    \frac{v_i(A_i)}{v_i(A_1\setminus\{g\})} \leq \frac{ \frac{2}{K} \left(\varepsilon K^{t-n} \right)}{\varepsilon K^{t-n}} = \frac{2}{K}< \alpha,
            \end{equation*} 
            and implies $v_i(A_i) < \alpha \cdot v_i(A_i \setminus \{g\})$, a contradiction.
            
            Furthermore, consider the set of $n$ goods $G' = \{g_{t+2-n}, \dots, g_t\} \cup \{g_1\}$. 
            We have that
            \begin{equation*}
                \mms_i \geq \min_{g \in G'} v_i(g) = v_i(g^{t_2-n}) = \varepsilon K^{t-n}.
            \end{equation*}
            Thus, we get    
            \begin{equation*}
                \frac{v_i(A_i)}{\mms_i} \leq \frac{ \frac{2}{K} \left(\varepsilon K^{t-n} \right)}{\varepsilon K^{t-n}} \leq \frac{2}{K}< \alpha,
            \end{equation*} 
            giving us $v_i(A_i) < \alpha \cdot \mms_i$, a contradiction.    

            Finally, for $g:= \argmin_{g' \in G \setminus A_i} v_i(g)$, we have that
            \begin{align*}
                v_i(A_i \cup \{g\}) & \leq \frac{2}{K} \left(\varepsilon K^{t-n} \right) + \varepsilon K^{t-n} 
                = \varepsilon K^{t-n}  \left( \frac{2}{K}+1 \right)
            \end{align*}
            and
            \begin{equation*}
                v_i(G) = 1 + \frac{\varepsilon (K^{n-1}-1)}{K-1} 
            \end{equation*}
            \end{description}
            Then, 
            \begin{align*}
                \frac{n \cdot v_i(A_i \cup \{g\})}{v_i(G)} = \frac{n \varepsilon K^{t-n}  \left( \frac{2}{K}+1 \right)}{1 + \frac{\varepsilon (K^{n-1}-1)}{K-1}} 
                & \leq \frac{n\varepsilon K^{t-n}  \left( \frac{2}{K}+1 \right)}{\frac{\varepsilon (K^{n-1}-1)}{K-1}} \\
                & = \frac{n K^{t-n} \left(\frac{2}{K} +1\right)(K-1)}{K^{n-1}-1} \\
                & \leq \frac{n K^{\lceil n/\alpha \rceil + 2} \left(\frac{2}{K} +1\right)(K-1)}{K^{n-1}-1},
            \end{align*}
            where the last inequality follows from the fact that $n+1 < t \leq m$ and hence
            \begin{equation*}
                K^{t-n} \leq K^{m-n} \leq K^{\lceil n/\alpha \rceil + 2}.
            \end{equation*}
            Now, 
            \begin{align*}
                \frac{n K^{\lceil n/\alpha \rceil + 2} \left(\frac{2}{K} +1\right)(K-1)}{K^{n-1}-1} 
                & < \frac{n K^{\lceil n/\alpha \rceil + 3}}{K^{n-1}-1} \quad \text{(since $\left( \frac{2}{K}+1\right)(K-1) < K$)}\\
                & \leq 2nK^{\lceil n/\alpha \rceil + 3-(n-1)} \quad \text{(since $K^{n-1} - 1 \geq \frac{1}{2} K^{n-1}$ )}\\
                & = 2nK^{\lceil n/\alpha \rceil + 4-n} \\
                & \leq 2nK^{n/\alpha +5-n} \quad \text{(since $\lceil n/\alpha \rceil \leq n/\alpha + 1$)} \\
                & = \frac{2n}{K^{n(1-1/\alpha)-5}} \\
                & < 2n \left( \frac{\alpha}{3} \right)^{n(1-1/\alpha) -5}\quad \text{(since $K > \frac{3}{\alpha}$)} \\
                & < \alpha.
            \end{align*}
            This gives us $v_i(A_i \cup \{g\}) < \alpha \cdot v_i(G)$, which is a contradiction to $\alpha$-PROPX.
        \end{description}
        In both cases, we get a contradiction, and hence our result follows.

\subsection{Proof of Theorem~\ref{thm:miv}}
For each agent $i \in [n]$, note that $r_i$ is the earliest iteration (of the \textbf{while} loop in Line~\ref{alg:MIV:line:while} of Algorithm~\ref{alg:MIV}) whereby agent~$i$ values a good at $1$, i.e., $v_i(g_{r_i}) = 1$ and for each $1 \leq t < r_i$, $v_i(g_t) < 1$.

Recall that for each $i \in [n]$ and $t \in \{1,\dots, m\}$, $A_i^t$ denotes the bundle of agent~$i$ after $g_t$ has been allocated.
To simplify the analysis, we introduce new variables in place of those used in the algorithm (from Line~\ref{alg:MIV:line:xit_start} onwards).
For each $t \in \{0,\dots, m\}$, we denote  
\begin{equation*}
    x_i^t = \begin{cases}
        \frac{1}{1+v_i(G^t)} &\text{if $t < r_i$}\\
        \frac{1}{v_i(G^t)} &\text{otherwise}\\
    \end{cases}
\end{equation*}
and 
\begin{equation*}
    y_i^t = \begin{cases}
        \frac{v_i(A_i^t)}{1+v_i(G^t)} &\text{if $t < r_i$}\\
        \frac{v_i(A_i^t \setminus \{g_{r_i}\})}{v_i(G^t)} &\text{otherwise}\\
    \end{cases}
\end{equation*}
Note that $x_i^0 = 1$ and $y_i^0 = 0$ for all $i \in [n]$. 
Also, for each $i \in [n]$ and $t \in [m]$, we let \begin{equation} \label{eqn:MIV_Prop_phi}
    \phi_i^t := \frac{x^t_i}{(n^2 + n + 1)\cdot x^t_i + n^2 \cdot y^t_i - 1}
\end{equation}
and
\begin{equation} \label{eqn:MIV_Prop_Phi}
    \Phi^t = \sum_{i \in [n]} \phi_i^t.
\end{equation}
Then, it is easy to observe that Algorithm~\ref{alg:MIV} is essentially an online algorithm that minimizes $\Phi^t$ at each round.
We first present a lemma, which will be useful in our proof.
\begin{lemma}\label{lem:MIV1}
    For each $t \in \{0,\dots,m\}$, if (i) $\phi_i^t \geq 0$ for all $i \in [n]$, and (ii) $\Phi^t \leq \Phi^0 =  \frac{1}{n+ 1}$, 
    then (1) for all $i \in [n]$, $x_i^t + y_i^t \geq \frac{1}{n^2}$, and (2) if $t = m$, the allocation $\mathcal{A}$ returned by Algorithm~\ref{alg:MIV} satisfies $\frac{1}{n}$-PROP1.
\end{lemma}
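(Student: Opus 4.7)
The plan is to split the two claims: Claim (1) is a direct algebraic unfolding of the hypotheses, and Claim (2) is the interpretation of Claim (1) at the terminal step $t=m$, where the good $g_{r_i}$ will naturally play the role of the ``one extra good'' in the PROP1 condition.

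For Claim (1), I would first argue per-agent: because every $\phi_j^t \geq 0$, the global bound gives
\[
\phi_i^t \;\leq\; \sum_{j \in [n]} \phi_j^t \;=\; \Phi^t \;\leq\; \frac{1}{n+1}.
\]
The hypothesis $\phi_i^t \geq 0$ combined with the strictly positive numerator $x_i^t$ forces the denominator $(n^2+n+1)x_i^t + n^2 y_i^t - 1$ to be positive, so cross-multiplying $\phi_i^t \leq \frac{1}{n+1}$ yields
\[
(n+1) x_i^t \;\leq\; (n^2+n+1) x_i^t + n^2 y_i^t - 1.
\]
The coefficient of $x_i^t$ on the right-hand side collapses to exactly $n^2$ after subtraction, so rearranging gives $n^2(x_i^t + y_i^t) \geq 1$, which is Claim (1).

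For Claim (2), I would use the fact that with perfect MIV predictions (and the WLOG normalization $p_i = 1$), each agent has $v_i^{\max} = 1$, so the first round $r_i$ at which $v_i(g_{r_i}) = 1$ exists and satisfies $r_i \leq m$. At $t = m$ we are therefore in the second branch of the definitions of $x_i^t, y_i^t$, and Claim (1) rewrites as
\[
1 + v_i(A_i \setminus \{g_{r_i}\}) \;\geq\; \frac{v_i(G)}{n^2}.
\]
A two-case argument on whether $g_{r_i} \in A_i$ finishes the proof. If $g_{r_i} \in A_i$, then using $v_i(g_{r_i}) = 1$ the left-hand side simplifies to $v_i(A_i)$, so either $A_i = G$ and PROP1 is trivial, or any good in $G \setminus A_i$ witnesses $\frac{1}{n}$-PROP1. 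If instead $g_{r_i} \notin A_i$, the left-hand side is exactly $v_i(A_i \cup \{g_{r_i}\})$, and $g_{r_i} \in G \setminus A_i$ is itself the required PROP1 witness.

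I do not expect a serious obstacle here; the argument is essentially algebraic. The only care needed is to justify that the denominator of $\phi_i^t$ is positive before cross-multiplying (so the inequality does not flip) and to observe that the ``$+1$'' in the second branch of $x_i^t$ is precisely what anticipates the contribution of $g_{r_i}$, which is exactly why $g_{r_i}$ emerges so naturally as the PROP1 witness whenever the algorithm has not yet allocated it to agent $i$.
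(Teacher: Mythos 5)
Your proposal is correct and follows essentially the same approach as the paper; Claim (2) is handled identically, including the case split on whether $g_{r_i} \in A_i$ and the trivial $A_i = G$ escape. The only cosmetic difference is in Claim (1): the paper argues by contradiction (supposing some agent $j$ has $x_j^t + y_j^t < 1/n^2$ and deriving $\phi_j^t > \phi_j^t$), whereas you cross-multiply $\phi_i^t \leq \frac{1}{n+1}$ directly, but both hinge on the same key observation that $\phi_i^t \geq 0$ together with $x_i^t > 0$ forces the denominator $(n^2+n+1)x_i^t + n^2 y_i^t - 1$ to be positive.
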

\begin{proof}
    If $A_i = G$, we are done. Thus, assume $A_i \subsetneq G$.
    We first prove that for each agent $i \in [n]$, $x_i^t + y_i^t \geq \frac{1}{n^2}$.
    Consider any $t \in \{0,\dots,m\}$.
    Suppose for a contradiction that 
    \begin{enumerate}[(i)]
        \item $\phi_i^t \geq 0$ for all $i \in [n]$, and
        \item $\Phi^t \leq \frac{1}{n+ 1}$,
    \end{enumerate}
    but there exists some agent $j \in [n]$ such that $x_j^t + y_j^t < \frac{1}{n^2}$.
     Then, we have that
     \begin{align*}
         \phi_j^t &= \frac{x_j^t}{(n^2+n+1) \cdot x_j^t + n^2 \cdot y_j^t - 1} \quad \text{(by (\ref{eqn:MIV_Prop_phi}))}\\
         & > \frac{x_j^t}{(n+1) \cdot x_j^t} \quad \text{ (since $x_j^t + y_j^t < \frac{1}{n^2}$)}\\
         &=\frac{1}{n+1} \\
         & \geq \Phi^t \quad \text{(by (i))}\\
         & = \sum_{i \in [n]}\phi_i^t \quad \text{(by \ref{eqn:MIV_Prop_Phi})}\\
         & \geq \phi_j^t, \quad \text{(since $\phi_i^t \geq 0$ for all $i \in [n]$)}
     \end{align*}
    thereby giving us a contradiction. Thus, for each agent $i \in [n]$, $x_i^t + y_i^t \geq \frac{1}{n^2}$.

    Next, we show that if the two invariant holds, then when the algorithm terminates, the resulting allocation $\mathcal{A}$ satisfies $\frac{1}{n}$-PROP1.

    Now, at the last iteration $t = m$, for each agent $i \in [n]$, it must be that $t \geq r_i$, and hence by definition we get
    \begin{equation*}
        x_i^m = \frac{1}{v_i(G^m)} \quad \text{and} \quad y_i^m = \frac{v_i(A_i^m \setminus \{g_{r_i}\})}{v_i(G^m)}.
    \end{equation*}
    Since $G^m = G$ and $A_i^m = A_i$, we get
    \begin{equation} \label{eqn:MIV_Prop_xiyi}
        x_i^m + y_i^m = \frac{1 + v_i(A_i \setminus \{g_{r_i}\})}{v_i(G)}.
    \end{equation}
    If $g_{r_i} \in A_i$, then there exists a good $g \in G \setminus A_i$ (note this must be true because we assumed $A_i \subsetneq G$) such that $v_i(A_i \cup \{g\}) \geq v_i(A_i) = v_i(A_i \setminus \{g_{r_i}\}) + 1$; and if $g_{r_i} \notin A_i$, then $v_i(A_i \cup \{g_{r_i}\}) = v_i(A_i) + 1 = v_i(A_i \setminus \{g_{r_i}\}) + 1$.
    In both cases, we have that there exists a $g \in G \setminus A_i$ such that
    \begin{equation*}
        v_i(A_i \cup \{g\}) = v_i(A_i \setminus \{g_{r_i}\})+1.
    \end{equation*}
    Consequently,
    \begin{align*}
        v_i(A_i \cup \{g\}) & \geq v_i(A_i \setminus \{g_{r_i}\}) + 1\\
        & = (x_i^m + y_i^m) \cdot v_i(G) \quad \text{(by (\ref{eqn:MIV_Prop_xiyi}))}\\
        & \geq \frac{1}{n^2} \cdot v_i(G),
    \end{align*}
    where the last inequality follows from the fact we proved earlier in this lemma.
    Observe that the expression is equivalent to
    \begin{equation*}
        v_i(A_i \cup \{g\}) \geq \frac{1}{n} \cdot \frac{v_i(G)}{n}
    \end{equation*} 
    for some $g \in G \setminus A_i$, and $\frac{1}{n}$-PROP1 is satisfied.
\end{proof}

Next, we prove that the algorithm has two invariants.
\begin{lemma}
    For each $t \in \{0,\dots,m\}$, the following  hold:
    \begin{enumerate}[(i)]
    \item $\phi_i^t \geq 0$ for all $i \in [n]$, and
    \item $\Phi^t \leq \frac{1}{n + 1}$.
\end{enumerate}
\end{lemma}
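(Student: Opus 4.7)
The proof proceeds by induction on $t$. The base case $t = 0$ is immediate: $x_i^0 = 1$ and $y_i^0 = 0$ give $\phi_i^0 = \tfrac{1}{n^2 + n} > 0$ for every $i$, hence $\Phi^0 = \tfrac{1}{n+1}$. For the inductive step, assume both invariants hold at time $t-1$. Following the proof sketch of \cref{thm:miv}, for each $i \in [n]$ let $\phi_i^+ \geq 0$ denote the increase in $\phi_i$ should agent $i$ \emph{not} receive $g_t$, and $\phi_i^- \geq 0$ the decrease in $\phi_i$ should agent $i$ \emph{receive} $g_t$. The crux of the inductive step is the algebraic inequality
\[
(n-1)\,\phi_i^+ \;\leq\; \phi_i^- \quad \text{for every } i \in [n].
\]

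\textbf{Deriving the invariants from the key inequality.} Granting this inequality, invariant (ii) follows by considering the hypothetical choice $i^{\dagger} := \argmax_{i} \phi_i^+$. Allocating $g_t$ to $i^{\dagger}$ would change the global potential by
\[
-\phi_{i^{\dagger}}^- + \sum_{j \neq i^{\dagger}} \phi_j^+ \;\leq\; -(n-1)\,\phi_{i^{\dagger}}^+ + (n-1)\,\phi_{i^{\dagger}}^+ \;=\; 0,
\]
where the inequality uses the key lemma together with $\phi_j^+ \leq \phi_{i^{\dagger}}^+$ for every $j$. Since Line~\ref{alg:miv:line:chooseagent} selects the recipient that minimizes $\Phi^t$, the algorithm's actual choice is at least as good, yielding $\Phi^t \leq \Phi^{t-1} \leq \tfrac{1}{n+1}$. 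For invariant (i), every agent not receiving $g_t$ satisfies $\phi_i^t = \phi_i^{t-1} + \phi_i^+ \geq 0$ by the inductive hypothesis together with $\phi_i^+ \geq 0$; for the actual recipient $i^*$, non-negativity $\phi_{i^*}^{t} = \phi_{i^*}^{t-1} - \phi_{i^*}^- \geq 0$ is verified in each update branch by checking directly that the denominator $(n^2+n+1)\,x_{i^*}^t + n^2\,y_{i^*}^t - 1$ remains positive.

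\textbf{Main obstacle.} The principal difficulty is the algebraic inequality $(n-1)\phi_i^+ \leq \phi_i^-$ itself. This requires a case split on the relationship between $t$ and $r_i$: (a) $t < r_i$ throughout, where the ``$1 + v_i(G^{(t)})$'' branch of $a_i^{(t)}$ applies both before and after step $t$; (b) the transitional step $t = r_i$, where $v_i(g_t) = 1$ causes $r_i$ to be updated to $t$, switching the formula defining $x_i, y_i$ between times $t-1$ and $t$; and (c) $t > r_i$ throughout. In each case, substituting the closed-form expressions for $x_i^t, y_i^t$ into the definition of $\phi_i$ reduces the claim to a polynomial inequality in $v_i(g_t)$, $v_i(A_i^{t-1})$, and $v_i(G^{t-1})$. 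The coefficients $(n^2+n+1, n^2)$ in the potential appear engineered precisely so that these polynomial inequalities close with multiplicative slack exactly $n-1$ after grouping terms and using $v_i(g_t) \leq 1$ from the perfect MIV prediction. Carrying out the transitional case (b) cleanly---where both the formula for $\phi_i$ and the value of $r_i$ change simultaneously---is expected to be the technical heart of the proof.
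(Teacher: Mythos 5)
Your overall plan mirrors the paper's: induction on $t$ with the per-agent inequality $(n-1)\cdot(\text{increase}) \leq (\text{decrease})$ driving the inductive step (the paper writes the equivalent $\phi_i^+ + (n-1)\phi_i^- \leq n\phi_i^k$ with $\phi_i^\pm$ denoting the possible values of $\phi_i^{k+1}$ rather than increments). Your argmax argument for (ii) is a legitimate, slightly more concrete variant of the paper's summation/averaging argument; both exhibit a recipient that would not increase $\Phi$ and then invoke Line~\ref{alg:miv:line:chooseagent}.

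There are two places where your sketch misjudges the proof, however. First, the key inequality is \emph{not} a self-contained polynomial identity that closes from $v_i(g_t) \leq 1$ alone. Its proof invokes the inductive hypothesis at time $t-1$ via \cref{lem:MIV1}, which yields $x_i^{t-1} + y_i^{t-1} \geq 1/n^2$ and hence $\Delta := (n^2+n+1)x_i^{t-1} + n^2 y_i^{t-1} - 1 \geq (n+1)x_i^{t-1}$; this bound on $\Delta$, combined with $\eps := v_i(g_t)/v_i'(G^{t-1}) \leq x_i^{t-1}$ (which is where $v_i(g_t) \leq 1$ actually enters), is what makes the algebra close and also what keeps the relevant denominators positive (hence both $\phi_i^+ \geq 0$ and $\phi_i^- \geq 0$ in your increment/decrement sense). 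Ignoring this dependence on the inductive hypothesis leaves a real gap in the argument. Second, you expect the transitional step $t = r_i$ to be the technical heart, but it is actually the easy case: $v_i(g_{r_i}) = 1$ exactly cancels the anticipatory $+1$ in $a_i^{(t)}$, so $x_i^t = x_i^{t-1}$ and $y_i^t = y_i^{t-1}$ whichever agent receives $g_t$, giving $\phi_i^+ = \phi_i^- = \phi_i^{t-1}$ and the key inequality with equality. The genuine work is in the complementary case $t \neq r_i$, which the paper handles uniformly for $t < r_i$ and $t > r_i$ through an auxiliary valuation $v_i'$, so a three-way case split is also unnecessary.
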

\begin{proof}
    We prove the lemma by induction.

    First, consider the base case when $t = 0$. 
    For every agent $i \in [n]$, $G^0 = \varnothing$ and thus, $v_i(G^0) = v_i(A^0_i) = v_i(A_i) = 0$. 
    This gives us $x_i^0 = \frac{1}{1+0} = 1$ and $y_i^0 = \frac{0}{1+0} = 0$. 
    Consequently, we get that
    \begin{equation*}
        \phi_i^0 = \frac{1}{(n^2+n+1) -1} = \frac{1}{n^2+n} \geq 0,
    \end{equation*}
    satisfying (i). Moreover, we have that
    \begin{equation*}
        \Phi^0 = \sum_{i \in [n]} \phi_i^0 = \sum_{i \in [n]} \frac{1}{n^2+n} = \frac{n}{n^2+n} = \frac{1}{n+1},
    \end{equation*}
    satisfying (ii). Thus, the base case holds.

    Next, we prove the inductive step. 
    Suppose that for every $k \in \{0,\dots, T-1\}$, we have that 
    \begin{equation} \label{eqn:1/nprop_inductive_hyp}
        \phi_i^k \geq 0 \text{ for all $i \in [n]$} \quad \text{and} \quad \Phi^k \leq \frac{1}{n+1}.
    \end{equation}
    We want to prove that 
    \begin{equation*}
        \phi_i^{k+1} \geq 0 \text{ for all $i \in [n]$}  \quad \text{and} \quad \Phi^{k+1} \leq \frac{1}{n+1}.
    \end{equation*}
    For each agent $i \in [n]$, let 
    \begin{itemize}
        \item $y_i^+, \phi_i^+$ be the values of $x_i^{k+1}, y_i^{k+1}, \phi_i^{k+1}$ respectively agent $i$ is allocated good $g_{k+1}$ (i.e., $g_{k+1} \in A_i^{k+1}$), and
        \item $y_i^-, \phi_i^-$ be the values of $x_i^{k+1}, y_i^{k+1}, \phi_i^{k+1}$ respectively if agent $i$ is not allocated good $g_{k+1}$ (i.e., $g_{k+1} \notin A_i^{k+1}$).
    \end{itemize}

    We first prove that for each agent $i \in [n]$,
    \begin{equation}  \label{eqn:1/nprop_remainder}
        \phi_i^{k+1} \geq 0 \quad \text{and} \quad \phi_i^+ + (n-1) \cdot \phi_i^- - n \cdot \phi_i^k \leq 0.
    \end{equation}
    We split our analysis into two cases, depending on whether $k+1 = r_i$ or $k+1 \neq r_i$.
    \begin{description}
        \item[Case 1: $k+1 = r_i$.]
        This means that $v_i(g_{k+1}) = 1$. Moreover, since $k < r_i$, we get that 
        \begin{equation*}
            x_i^{k} = \frac{1}{1 + v_i(G^k)} \quad \text{and} \quad y_i^k = \frac{v_i(A_i^k)}{1 + v_i(G^k)}.
        \end{equation*}
        Thus, 
        \begin{align*}
            x_i^{k+1} = \frac{1}{v_i(G^{k+1})} 
            & = \frac{1}{v_i(g_{k+1}) + v_i(G^k)} \\
            & = \frac{1}{1 + v_i(G^k)} \quad \text{(since $v_i(g_{k+1}) = 1$)} \\
            & = x_i^k.
        \end{align*}
        Similarly, we have that
        \begin{align*}
            y_i^+ = y_i^- = \frac{v_i(A_i^{k+1} \setminus \{g_{k+1}\})}{v_i(G^{k+1})} = \frac{v_i(A_i^k)}{1 + v_i(G^k)} = y_i^k.
        \end{align*}
        Also, we have that $x_i^{k+1} = x_i^k$ and $y_i^{k+1} = y_i^k$, giving us
        \begin{equation*}
            \phi_i^+ = \phi_i^- = \phi_i^{k+1} - 1.
        \end{equation*}
        Consequently, we get
        \begin{equation*}
            \phi_i^+ + (n-1) \cdot \phi_i^- - n \cdot \phi_i^k = 0,
        \end{equation*}
        which proves the second part of (\ref{eqn:1/nprop_remainder}).
        Moreover, since $\phi_i^k \geq 0$ (from (\ref{eqn:1/nprop_inductive_hyp})), we have that $\phi_i^+ = \phi_i^- \geq 0$, which means $\phi_i^{k+1} \geq 1 > 0$, giving us the first part of  (\ref{eqn:1/nprop_remainder}).
        \item[Case 2: $k+1 \neq r_i$.]
        To simplify our argument, we define an auxiliary valuation function $v_i'$ (which behaves more like a variable rather than a valuation function) defined as follows for $k$ and $k+1$.
        \begin{equation*}
            v'_i(G^{k}) =
            \begin{cases}
                1 + v_i(G^{k}) & \text{ if } k+1 < r_i; \\
                v_i(G^{k}) & \text{ otherwise}.
            \end{cases}
        \end{equation*}
        \begin{equation*}
            v'_i(G^{k+1}) =
            \begin{cases}
                1 + v_i(G^{k+1}) & \text{ if } k+1 < r_i; \\
                v_i(G^{k+1}) & \text{ otherwise}.
            \end{cases}
        \end{equation*}
        We also note that $v_i(A_i^k \setminus\{g_{r_i}\}) = v_i(A_i^k)$ if $k+1 < r_i$. This gives us
        \begin{equation*}
            x_i^k = \frac{1}{v'_i(G^{k})} \quad \text{and} \quad y_i^k = \frac{v_i(A_i^k \setminus \{g_{r_i}\})}{v'_i(G^k)}.
        \end{equation*}
        Now, denote $\varepsilon := \frac{v_i(g_{k+1})}{v'_i(G^k)}$. Then, since $v_i(g_{k+1}) \leq p_i = v_i^\text{max} = 1$, we have that
        \begin{equation} \label{eqn:1/nprop_epsilon}
            \varepsilon = \frac{v_i(g_{k+1})}{v'_i(G^k)} \leq \frac{1}{v'_i(G^k)} = x_i^{k}.
        \end{equation}
        Thus, we have
        \begin{align*}
            x_i^{k+1} & = \frac{1}{v'_i(G^{k+1})} \\
            & = \frac{1}{v_i(g_{k+1}) + v'_i(G^k)} \\
            & = \frac{1}{\varepsilon \cdot v'_i(G^k) + v'_i(G^k)} \quad \text{ (by definition of $\varepsilon$)} \\
            & = \frac{1}{(1+\varepsilon) \cdot v'_i(G^k)} \\
            & = \frac{x_i^k}{1+\varepsilon} \quad \text{ (by definition of $x_i^k$)}.
        \end{align*}
        Also, get that
        \begin{align*}
            y_i^+ & = \frac{v_i(A_i^{k} \setminus \{g_{r_i}\}) + v_i(g_{k+1})}{v'_i(G^{k+1})}\\
            & = \frac{v_i(A_i^{k} \setminus \{g_{r_i}\}) + v_i(g_{k+1})}{v'_i(G^{k}) + v_i(g_{k+1})}\\
            & = \frac{v_i(A_i^k \setminus \{g_{r_i}\}) + \varepsilon \cdot v'_i(G^k)}{v'_i(G^{k}) + \varepsilon \cdot v'_i(G^k)} \quad \text{(by definition of $\varepsilon$)}\\
            & = \frac{v_i(A_i^k \setminus \{g_{r_i}\}) + \varepsilon \cdot v'_i(G^k)}{(1+\varepsilon) \cdot v'_i(G^{k})} \\
            & = \frac{y_i^k + \varepsilon}{1+\varepsilon} \quad \text{(by definition of $y_i^k$)}
        \end{align*}
        and
        \begin{align*}
            y_i^- & = \frac{v_i(A_i^k \setminus \{g_{r_i} \})}{v'_i(G^{k+1})} \\
            & = \frac{v_i(A_i^k \setminus \{g_{r_i} \})}{v'_i(G^{k}) + v_i(g_{k+1})} \\
            & = \frac{v_i(A_i^k \setminus \{g_{r_i} \})}{v'_i(G^{k}) + \varepsilon \cdot v'_i(G^{k})} \quad \text{(by definition of $\varepsilon$)} \\
            & = \frac{v_i(A_i^k \setminus \{g_{r_i} \})}{(1+\varepsilon) \cdot v'_i(G^{k})} \\
            & = \frac{y_i^k}{1+\varepsilon}  \quad \text{(by definition of $y_i^k$)}.
        \end{align*}
        We also have
        \begin{align*}
            \phi_i^+ & = \frac{x_i^{k+1}}{(n^2+n+1) \cdot x_i^{k+1} + n^2 \cdot y_i^+ - 1} \\
            & = \frac{\frac{x_i^k}{1+\varepsilon}}{(n^2+n+1) \cdot \frac{x_i^k}{1+\varepsilon} + n^2 \cdot \frac{y_i^k + \varepsilon}{1+\varepsilon} - 1} \\
            & = \frac{x_i^k}{(n^2+n+1) \cdot x_i^k + n^2 \cdot y_i^k + \varepsilon \cdot (n^2-1) - 1}
        \end{align*}
        and
        \begin{align*}
            \phi_i^- & = \frac{x_i^{k+1}}{(n^2+n+1) \cdot x_i^{k+1} + n^2 \cdot y_i^- - 1} \\
            & = \frac{\frac{x_i^k}{1+\varepsilon}}{(n^2+n+1) \cdot \frac{x_i^k}{1+\varepsilon} + n^2 \cdot \frac{y_i^k}{1+\varepsilon} - 1} \\
            & = \frac{x_i^k}{(n^2+n+1) \cdot x_i^k + n^2 \cdot y_i^k - \varepsilon - 1}.
        \end{align*}

        Now, denote $\Delta := (n^2+n+1) \cdot x_i^k + n^2 \cdot y_i^k - 1$. For any $i \in [n]$, we can then express $\phi_i^+$, $\phi_i^-$, and $\phi_i^k$ as the following:
        \begin{equation*}
            \phi_i^+ = \frac{x_i^k}{\Delta + \varepsilon \cdot (n^2-1)}, \quad \phi_i^- = \frac{x_i^k}{\Delta - \varepsilon}, \quad \phi_i^k = \frac{x_i^k}{\Delta}.
        \end{equation*}

        Moreover, by (\ref{eqn:1/nprop_inductive_hyp}), we know that $\phi_i^k \geq 0$ for all $i \in [n]$ and $\Phi^k \leq \frac{1}{n+1}$.
        Applying Lemma~\ref{lem:MIV1}, we get that 
        \begin{equation*}
            n^2 \cdot (x_i^k + y_i^k) \geq 1.
        \end{equation*}
        This means
        \begin{align}\label{eqn:delta_n+1xik}
            \Delta & = (n^2+n+1) \cdot x_i^k + n^2 \cdot y_i^k - 1 \nonumber \\
            & = n^2 \cdot (x_i^k + y_i^k)+ (n+1) \cdot x_i^k - 1 \nonumber \\
            & \geq 1 + (n+1) \cdot x_i^k - 1 \nonumber \\
            & = (n+1) \cdot x_i^k.
        \end{align}
        Then, we know that
        \begin{itemize}
            \item $\Delta > 0$ since $x_i^k > 0$;
            \item $\Delta - \varepsilon > 0$ since $\varepsilon \leq x_i^k$ (by (\ref{eqn:1/nprop_epsilon})); and
            \item $\Delta + \varepsilon \cdot (n^2-1) > 0$ since $\varepsilon \geq 0$.
        \end{itemize}
        This gives us $\phi_i^+ \geq 0$ and $\phi_i^- \geq 0$. 
        Then, by the definition of $\phi_i^+$ and $\phi_i^-$, we get that $\phi_i^{k+1} \geq 0$, thereby proving the first part of (\ref{eqn:1/nprop_remainder}).

        Now, by (\ref{eqn:delta_n+1xik}) and (\ref{eqn:1/nprop_epsilon}), we have that 
        \begin{equation*}
            \Delta \geq (n+1) \cdot x_i^k \geq (n+1) \cdot \varepsilon,
        \end{equation*}
        which gives us
        \begin{equation*}
            \Delta \geq (n+1) \cdot \varepsilon = \frac{n (n^2-1)}{n(n-1)} \cdot \varepsilon.
        \end{equation*}
        Equivalently, we get
        \begin{equation*}
            \Delta \cdot (n^2 - n) \geq n(n^2-1) \cdot \varepsilon.
        \end{equation*}
        Then, multiplying $\varepsilon$ on both sides, we obtain
        \begin{align} \label{eqn:final_exp_deltaepsilon}
            0 & \geq n(n^2-1) \cdot \varepsilon^2 - \Delta \cdot \varepsilon \cdot (n^2-n) \nonumber \\
            & = n(n^2-1) \cdot \varepsilon^2 - \Delta \cdot \varepsilon \cdot (n^2-1) + \Delta \cdot \varepsilon \cdot (n-1)  \nonumber\\
            & = \Delta \cdot (\Delta - \varepsilon) + (n \cdot \varepsilon -\Delta) (\Delta + \varepsilon \cdot (n^2-1))  \nonumber\\
            & \geq \frac{\Delta \cdot (\Delta - \varepsilon) + (n \cdot \varepsilon - \Delta) (\Delta + \varepsilon \cdot (n^2-1))}{(\Delta + \varepsilon \cdot (n^2-1)) \cdot \Delta \cdot (\Delta - \varepsilon)}  \nonumber\\
            & = \frac{1}{\Delta + \varepsilon \cdot (n^2-1)} + \frac{n \cdot \varepsilon - \Delta}{\Delta \cdot (\Delta - \varepsilon)}  \nonumber\\
            & = \frac{1}{\Delta + \varepsilon \cdot (n^2-1)} + \frac{n-1}{\Delta - \varepsilon} - \frac{n}{\Delta}.
        \end{align}

        Finally, 
        \begin{align*}
            \phi_i^+ + (n-1) \cdot \phi_i^- - n \cdot \phi_i^k 
            & = \frac{x_i^k}{\Delta + \varepsilon \cdot (n^2-1)} + \frac{(n-1) \cdot x_i^k}{\Delta - \varepsilon} - \frac{n \cdot x_i^k}{\Delta} \\
            & = x_i^k \left( \frac{1}{\Delta + \varepsilon \cdot (n^2-1)} + \frac{n-1}{\Delta - \varepsilon} - \frac{n}{\Delta} \right) \\
            & \leq 0,
        \end{align*}
        where the inequality follows from (\ref{eqn:final_exp_deltaepsilon}) and the fact that $x_i^k \geq 0$, thereby proving the second part of (\ref{eqn:1/nprop_remainder}).  
    \end{description}
    It remains to prove that $\Phi^{k+1} \leq \frac{1}{n+1}$ to complete the inductive step.
    Now, we have that
        \begin{align*}
            0 \geq & \sum_{i \in [n]} \left( \phi^+_i + (n-1)\cdot \phi_i^- - n \cdot\phi_i^{k} \right) \quad \text{(by (\ref{eqn:1/nprop_remainder}))}\\
            & = \sum_{i \in [n]} \phi^+_i + (n-1) \cdot \sum_{i \in [n]} \phi_i^- - n \cdot \sum_{i \in [n]} \phi^k_i \\
            & = \sum_{i \in [n]} \phi^+_i + (n-1) \cdot \sum_{i \in [n]} \phi_i^- - n \cdot \Phi^k \\
            & = \sum_{i \in [n]} \left( - \Phi^k + \phi_i^+ + \sum_{j \in [n] \setminus \{ i \}} \phi_j^- \right).
        \end{align*}
        This necessarily means that there exists an agent $i^* \in [n]$ such that $- \Phi^k + \phi_{i^*}^+ + \sum_{j \in [n] \setminus \{ i^* \}} \phi_j^- \leq 0$.
        If $g_{k+1}$ is allocated to agent~$i^*$, then for all $i \in [n]$, we have that
        \begin{equation} \label{eqn:def_phiik}
            \phi_i^{k+1} = 
            \begin{cases}
                \phi_{i^*}^+ & \text{ if } i = i^*; \\
                \phi_{i}^- & \text{ otherwise}.
            \end{cases}
        \end{equation}
        Let $\Phi_{i^*}^{k+1}$ be the value of $\Phi^{k+1}$ if $g_{k+1}$ was allocated to agent~$i^*$. This means that
        \begin{equation*}
            \Phi_{i^*}^{k+1} = \phi_{i^*}^+ + \sum_{i \in [n] \setminus \{i^*\}} \phi_i^-.
        \end{equation*}
        Since $\Phi_{i^*}^{k+1} - \Phi^{k} \leq 0$ 
        and $\Phi^{k} \leq \frac{1}{n+1}$ (by (\ref{eqn:1/nprop_inductive_hyp})), we get that
        \begin{equation*}
            \Phi_{i^*}^{k+1} \leq \Phi^{k} \leq \frac{1}{n+1}.
        \end{equation*}
        Finally, since the algorithm allocates $g_{k+1}$ to the agent that minimizes $\Phi^{k+1}$ (Line~\ref{alg:miv:line:chooseagent} of Algorithm~\ref{alg:MIV}), we know that $\Phi^{k+1} \leq \Phi_{i^*}^{k+1} \leq \frac{1}{n+1}$, as desired.

    We have shown both the base case and inductive step, and thus, by induction, the lemma holds.
\end{proof}

Thus, the invariants hold, and our result follows.

\subsection{Proof of Theorem~\ref{thm:MIV_error}}
Without loss of generality, we can assume that $p_i = 1$ for each $i \in [n]$.
    Also let $r_i$ be the earliest timestep whereby agent~$i$ values a good at least $1-\varepsilon$, i.e., $v_i(g_{r_i}) \geq 1 - \varepsilon$ and for each $1 \leq t < r_i$, $v_i(g_t) < 1 - \varepsilon$.

    Now, for each agent $i \in [n]$, define $\delta_i := 1 - v_i(g_{r_i})$ (so that $0 \leq \delta_i \leq \varepsilon$) and create the augmented valuation function $v'_i$ defined as follows:
    \begin{align*}
    v_i'(g_t) = 
    \begin{cases}
        1 \quad &\text{if $t = r_i$}\\
        v_i(g_t) \quad &\text{otherwise.}\\
    \end{cases}
    \end{align*}
    This also means
    \begin{equation} \label{eqn:error_1}
        v'_i(g_{r_i}) = 1 = v_i(g_{r_i}) + \delta_i.
    \end{equation}

    Run the $\alpha$-PROP1 algorithm on this sequence of goods, but using the augmented valuation functions $v'_1,\dots,v'_n$ instead. This returns an $\alpha$-PROP1 allocation with respect to the augmented valuation functions. Let $\mathcal{A} = (A_1,\dots,A_n)$ be the returned allocation. By definition, for each agent $i \in [n]$, there exists some good $g \in G\setminus A_i$ such that
    \begin{equation} \label{eqn:error_2}
        v'_i(A_i \cup \{g\})  \geq  \alpha \cdot \frac{v'_i(G)}{n}.
    \end{equation}
    For each agent $i \in [n]$, we have that $v'_i(g_t) = v_i(g_t)$ for all $t \in [m] \setminus \{r_i\}$ and $v'_i(g_{r_i}) = v_i(g_{r_i}) + \delta_i$ (by (\ref{eqn:error_1})).
    This gives us
    \begin{equation} \label{eqn:error_3}
        v'_i(G) = v_i(G) + \delta_i
    \end{equation}
    and for any $g \in G \setminus A_i$,
    \begin{equation} \label{eqn:error_4}
        v'_i(A_i \cup \{g\}) \leq v_i(A_i \cup \{g\}) + \delta_i,
    \end{equation}
    with the above being an equality if $g_{r_i} \notin A_i \cup \{g\}$.
    Together with the fact that $\delta_i \leq \varepsilon < 1$, this also implies there exists a $g \in G \setminus A_i$ such that
    \begin{equation} \label{eqn:error_5}
        \frac{v_i(A_i \cup \{g\})}{1 - \delta_i} \geq  \frac{v'_i(A_i \cup \{g\}) - \delta_i}{1 - \delta_i} \geq 1.
    \end{equation}
    The rightmost inequality follows from the fact that $v'_i(A_i \cup \{g\}) \geq 1$, which is true if $g_{r_i} \in A_i$, otherwise we can choose $g$ to be $g_{r_i}$.

    Also, since $\delta \leq \varepsilon$, we get that $\frac{1}{1-\delta_i} \leq \frac{1}{1-\varepsilon}$ and
    \begin{equation} \label{eqn:error_6}
        \frac{\delta_i}{1-\delta_i} \leq \frac{\varepsilon}{1-\varepsilon}.
    \end{equation}
    
    Now, for each $i \in [n]$, there exists a $g \in G \setminus A_i$ such that
    \begin{align*}
        \alpha \cdot \frac{v_i(G)}{n} 
        & = \frac{\alpha}{n} \cdot \left( v'_i(G) - \delta_i \right) \quad \text{(by (\ref{eqn:error_3}))} \\
        & \leq v'_i(A_i \cup \{g\}) - \frac{\alpha}{n} \cdot \delta_i \quad \text{(by (\ref{eqn:error_2}))} \\
        & \leq \left( v_i(A_i \cup \{g\}) + \delta_i\right)- \frac{\alpha}{n} \cdot \delta_i \quad \text{(by (\ref{eqn:error_4}))} \\
        & =  v_i(A_i \cup \{g\}) +\left(1 - \frac{\alpha}{n} \right) \cdot  \delta_i \\
        & \leq v_i(A_i \cup \{g\}) + \left( 1 - \frac{\alpha}{n} \right) \cdot \frac{v_i(A_i \cup \{g\}) \cdot \delta_i}{1 - \delta_i} \quad \text{(by (\ref{eqn:error_5}))} \\
        & =\left( 1 + \left( 1 - \frac{\alpha}{n} \right) \cdot \frac{\delta_i}{1-\delta_i} \right) \cdot  v_i(A_i \cup \{g\}) \\
        & \leq \left( 1 + \left( 1 - \frac{\alpha}{n} \right) \cdot \frac{\varepsilon}{1-\varepsilon} \right) \cdot  v_i(A_i \cup \{g\}) \quad \text{(by (\ref{eqn:error_6}))} \\
        & = \frac{1 - \frac{\alpha \varepsilon}{n}}{1-\varepsilon} \cdot v_i(A_i \cup \{g\}) = \frac{\alpha}{\beta} \cdot v_i(A_i \cup \{g\}).
    \end{align*}
    This gives us 
    \begin{equation*}
        v_i(A_i \cup \{g\}) \geq \beta \cdot \frac{v_i(G)}{n},
    \end{equation*}
    as desired.

\end{document}